\theoremstyle{definition}
\newtheorem{theorem}{Theorem}
\newtheorem{corollary}[theorem]{Corollary}
\newtheorem{lemma}[theorem]{Lemma}
\newtheorem{definition}[theorem]{Definition}
\newtheorem*{definition*}{Definition}
\newtheorem*{remark*}{Remark}
\newtheorem*{lemma*}{Lemma}
\newtheorem*{corollary*}{Corollary}
\newtheorem*{theorem*}{Theorem}
\newtheorem*{examples*}{Examples}
\newtheorem*{keyexample*}{Example} 
\newtheorem{assumption}{Assumption}
\newtheorem{axiom}{Axiom}
\newcommand{\id}[1]{\ensuremath{\mathrm{id}_{#1}}}
\newcommand{\catC}{\cat{C}}
\newcommand{\catB}{\cat{B}}
\newcommand\pfun{\mathrel{\ooalign{\hfil$\mapstochar\mkern5mu$\hfil\cr$\to$\cr}}}
\newcommand{\coproj}{\kappa}
\newcommand{\OTC}{\cat{(Mon)OTC}}
\newcommand{\MonOTC}{\OTC}
\newcommand{\OTCPlusStrict}{\cat{(Mon)OTC^{\otcplus}_{\text{strict}}}}
\newcommand{\Partial}{\text{Par}}
\newcommand{\lift}[1]{ \Partial({ {#1} }) } 
\DeclareFontFamily{U}{mathb}{\hyphenchar\font45}
\DeclareFontShape{U}{mathb}{m}{n}{
      <5> <6> <7> <8> <9> <10> gen * mathb
      <10.95> mathb10 <12> <14.4> <17.28> <20.74> <24.88> mathb12
      }{}
\DeclareSymbolFont{mathb}{U}{mathb}{m}{n}
\DeclareMathSymbol{\mylgroup}{\mathbin}{mathb}{'160}
\DeclareMathSymbol{\myrgroup}{\mathbin}{mathb}{'161}
\newcommand{\totaspar}[1]{ \mylgroup {#1} \myrgroup } 
\newcommand{\ClassDet}{\mathsf{ClassDet}}
\newcommand{\ClassProb}{\mathsf{ClassProb}}
\newcommand{\Mat}{\mathsf{Mat}}
\newcommand{\FinHilbOTC}{\mathsf{FinHilb}} 
\newcommand{\FinCStarOTC}{\mathsf{FinCStar}} 
\newcommand{\CStarOTC}{\mathsf{CStar}} 
\newcommand{\Kl}{\cat{Kl}} 
\newcommand{\FDimCStar}{\cat{FdimCStar}_{\text{u}}^{\op}}
\newcommand{\FDimCStarBracket}{\cat{(Fdim)CStar}_{\text{u}}^{\op}}
\newcommand{\FDimCStarSUBracket}{\cat{(Fdim)CStar}_{\text{su}}^{\op}}
\newcommand{\total}{\text{total}} 
\newcommand{\Events}{\cat{Event}} 
\newcommand{\DetEvents}{\cat{DetEvent}} 
\newcommand{\eqdef}{\vcentcolon=}
\newcommand{\monopequiv}{\simeq_{\otimes}}
\newcommand{\monoropequiv}{\simeq_{(\otimes)}}
\newcommand{\otcplus}{+} 
\newcommand{\OTCFrom}[1]{\mathsf{OT}({#1})} 
\newcommand{\circlearrow}[1]{\ensuremath{\mathrlap{\hspace{2.5pt}\to}{\hspace{6pt}{#1}\hspace{6pt}}}}
\newcommand{\partialto}{\circlearrow{\partialsmallarrow[1.2pt]}}
\newcommand{\partialcomp}{\hspace{3.5pt} \partialsymbolcomp \hspace{2.5pt}}
\newcommand{\tikzcdpartialright}[1]{\arrow{r}[description, inner sep = 0]{\partialsymbollargearrow}[above]{{#1}}}
\newcommand{\totalto}{\circlearrow{\totalsmallarrow[1pt]}}
\newcommand{\eventcomp}{\circ}
\newcommand{\catopscomp}{\circ}
\newcommand{\catpopscomp}{\partialcomp}
\newcommand{\catopsto}{\totalto}
\newcommand{\catpopsto}{\partialto}
\newcommand{\partialsmallarrow}[1][1.2pt]{ 
\begin{tikzpicture}[baseline=-0.58ex]
\draw[black, fill=white,radius={#1}] (0,0) circle ; 
\draw[black, fill=white,radius=0.1pt] (0,0) circle ; 
\end{tikzpicture}
}
\newcommand{\totalsmallarrow}[1][1.2pt]{ 
\begin{tikzpicture}[baseline=-0.58ex]
\draw[black, fill=black,radius={#1}] (0,0) circle ; 
\end{tikzpicture}
}
\newcommand{\partialsymbolcomp}{ 
\begin{tikzpicture}[baseline=-0.5ex]
\draw[black, fill=white,radius=1.8pt] (0,0) circle ; 
\draw[black, fill=white,radius=0.3pt] (0,0) circle ; 
\end{tikzpicture}
}
\newcommand{\totalsymbolcomp}[1][1.4pt]{ 
\begin{tikzpicture}[baseline=-0.5ex]
\draw[black, fill=black,radius=1.5pt] (0,0) circle ; 
\end{tikzpicture}
}
\newcommand{\eventsymbolcomp}[1][1.4pt]{ 
\begin{tikzpicture}[baseline=-0.5ex]
\draw[black, fill=gray, radius=1.5pt] (0,0) circle ; 
\end{tikzpicture}
}
\newcommand{\partialsymbollargearrow}{ 
\begin{tikzpicture}[baseline=-0.6pt]
\draw[black, fill=white,radius=1.6pt] (0,0) circle ; 
\draw[black, fill=white,radius=0.1pt] (0,0) circle ; 
\end{tikzpicture}
}
\newcommand{\totalsymbollargearrow}{ 
\begin{tikzpicture}[baseline=-0.6pt]
\draw[black, fill=black,radius=1pt] (0,0) circle ; 
\end{tikzpicture}
}
\newcommand{\eventto}{
  \hspace{2.4pt}
  \begin{tikzpicture}[baseline=-0.58ex, inner sep = 0]
\draw[-{Triangle[scale=1.1,angle'=45, open]}] (0,0) -- (0.35,0);
  \end{tikzpicture}
      \hspace{1.2pt}
}
\def\th@plain{%
  \thm@notefont{}
  \itshape 
}
\def\th@definition{%
  \thm@notefont{}
  \normalfont 
}
\tikzset{
    >=stealth',
    punkt/.style={
           rectangle,
           rounded corners,
           draw=black, 
           text width=6.5em,
           minimum height=2em,
           text centered},
    pil/.style={
           ->,
           shorten <=2pt,
           shorten >=2pt,},
     incl/.style={
           left hook->,
           shorten <=2pt,
           shorten >=2pt,},
     incl2/.style={
           right hook->,
           shorten <=2pt,
           shorten >=2pt,},
}
\DeclareFontFamily{U}{mathux}{\hyphenchar\font45}
\DeclareFontShape{U}{mathux}{m}{n}{
      <5> <6> <7> <8> <9> <10>
      <10.95> <12> <14.4> <17.28> <20.74> <24.88>
      mathux10
      }{}
\DeclareSymbolFont{mathux}{U}{mathux}{m}{n}
\DeclareMathSymbol{\bigovee}{1}{mathux}{"8F}
\DeclareMathSymbol{\bigperp}{1}{mathux}{"4E}
\newcommand{\Event}{\cat{Event}} 
\newcommand{\HilbH}{\mathcal{H}}
\newcommand{\HilbK}{\mathcal{K}}
\newcommand{\curlyB}{\mathcal{B}}
\newcommand{\op}{\ensuremath{\textrm{\rm op}}}
\newcommand{\cat}[1]{\ensuremath{\mathbf{#1}}\xspace}
\newcommand{\Set}{\cat{Set}}
\newcommand{\Kleisli}[1]{\text{Kl}(#1)}
\newcommand{\discard}[1]{\ensuremath{\tinyground_{#1}}}
\tikzstyle{cdiag}=[matrix of math nodes, row sep=3em, column sep=3em, text height=1.5ex, text depth=0.25ex,inner sep=0.5em]
\tikzstyle{arrow above}=[transform canvas={yshift=0.5ex}]
\tikzstyle{arrow below}=[transform canvas={yshift=-0.5ex}]
\newcommand{\vkcleararrows}{%
\tikzstyle{vkarrow1}=[arrow plain]
\tikzstyle{vkarrow2}=[arrow plain]
\tikzstyle{vkarrow3}=[arrow plain]
\tikzstyle{vkarrow4}=[arrow plain]
\tikzstyle{vkarrow5}=[arrow plain]
\tikzstyle{vkarrow6}=[arrow plain]
\tikzstyle{vkarrow7}=[arrow plain]
\tikzstyle{vkarrow8}=[arrow plain]
\tikzstyle{vkarrow9}=[arrow plain]
\tikzstyle{vkarrow10}=[arrow plain]
\tikzstyle{vkarrow11}=[arrow plain]
\tikzstyle{vkarrow12}=[arrow plain]}
\tikzstyle{dot}=[circle, draw=black, fill=black!25, inner sep=.4ex]
\tikzstyle{black_dot}=[dot, fill=black!50]
\tikzstyle{white_dot}=[dot, fill=white]
\newif\ifvflip\pgfkeys{/tikz/vflip/.is if=vflip}
\newif\ifhflip\pgfkeys{/tikz/hflip/.is if=hflip}
\newif\ifhvflip\pgfkeys{/tikz/hvflip/.is if=hvflip}
\newlength\morphismheight
\newlength\wedgewidth
\tikzset{width/.initial=1mm}
\tikzstyle{morphism}=[font=\small,morphismshape]
\newenvironment{pic}[1][]
{\begin{aligned}\begin{tikzpicture}[font=\tiny,#1]}
{\end{tikzpicture}\end{aligned}}
\tikzstyle{dot}=[circle, draw=black, fill=black!20, inner sep=.4ex, node on layer=foreground]
\tikzstyle{whitedot}=[circle, draw=black, fill=white, inner sep=.4ex, node on layer=foreground]
\tikzstyle{greydot}=[circle, draw=black, fill=black!20, inner sep=.4ex, node on layer=foreground]
\tikzstyle{darkgreydot}=[circle, draw=black, fill=black!50, inner sep=.4ex, node on layer=foreground]
\tikzstyle{blackdot}=[circle, draw=black, fill=black, inner sep=.4ex, node on layer=foreground]
\tikzstyle{triangle} = [regular polygon, regular polygon sides=3, draw=black, fill=black!20,scale=0.3, node on layer=foreground]
\tikzstyle{whitetriangle}=[triangle, fill=white]
\tikzstyle{greytriangle}=[triangle, fill=black!20]
\tikzstyle{darkgreytriangle}=[triangle, fill=black!50]
\tikzstyle{blacktriangle}=[triangle, fill=black]
\tikzstyle{invertedtriangle} = [triangle,scale=-1]
\tikzstyle{whiteinvertedtriangle}=[invertedtriangle, fill=white]
\tikzstyle{greyinvertedtriangle}=[invertedtriangle, fill=black!20]
\tikzstyle{darkgreyinvertedtriangle}=[invertedtriangle, fill=black!50]
\tikzstyle{blackinvertedtriangle}=[invertedtriangle, fill=black]
\tikzstyle{morphism}=[font=\small,morphismshape]
\tikzstyle{box}=[rectangle,inner sep=.4ex, draw=black, node on layer=foreground]
\tikzstyle{mor}=[rectangle,inner sep=.4ex, draw=black, node on layer=foreground, minimum width = 0.6cm, minimum height = 0.4cm, font = \small]
\tikzstyle{morwide}=[mor, minimum width = 1.2cm]
\newif\ifvflip\pgfkeys{/tikz/vflip/.is if=vflip}
\newif\ifhflip\pgfkeys{/tikz/hflip/.is if=hflip}
\newif\ifhvflip\pgfkeys{/tikz/hvflip/.is if=hvflip}
\tikzset{width/.initial=1mm}
\newlength\minimummorphismwidth
\newlength\stateheight
\newlength\minimumstatewidth
\newlength\connectheight
\tikzset{colour/.initial=white}
\tikzstyle{mixed}=[line width=.7pt]
\tikzstyle{pure}=[line width=.7pt]
\tikzset{diredge/.style={decoration={
  markings,
  mark=at position 0.525 with {\arrow{#1}}},postaction={decorate}}}
\tikzset{
    diredge/.default=>
}
\tikzset{diredgestart/.style={decoration={
  markings,
  mark=at position 4pt with {\arrow{#1}}},postaction={decorate}}}
\tikzset{
    diredgestart/.default=<
}
\tikzset{diredgeend/.style={decoration={
  markings,
  mark=at position 1 with {\arrow{#1}}},postaction={decorate}}}
\tikzset{
    diredgeend/.default=>
}
\tikzstyle{groundwide}=[ground, minimum width = 1.2cm]
    \gdef\node@@on@layer{%
      \setbox\tikz@tempbox=\hbox\bgroup\pgfonlayer{#1}\unhbox\tikz@tempbox\endpgfonlayer\egroup}
\def\node@on@layer{\aftergroup\node@@on@layer}
\newcommand{\tinyground}[1][ground]{
\smash{\raisebox{-2pt}{\hspace{-3pt}\ensuremath{\begin{pic}[scale=0.4]
    \node[#1, scale=0.6] (1) at (0,0.4) {};
    \draw [pure] (1.south) to +(0,-.3);
\end{pic}
}\hspace{-1pt}}}}
\tikzset{stateshape/.style={append after command={
   \pgfextra
        \draw[sharp corners, fill=none]%
    (\tikzlastnode.west)%
    [rounded corners=0pt] |- (\tikzlastnode.north)%
    [rounded corners=0pt] -| (\tikzlastnode.east)%
    [rounded corners=5pt] |- (\tikzlastnode.south)%
    [rounded corners=5pt] -| (\tikzlastnode.west);
   \endpgfextra}}}
\tikzset{effectshape/.style={append after command={
   \pgfextra
        \draw[sharp corners, fill=none]%
    (\tikzlastnode.west)%
    [rounded corners=0pt] |- (\tikzlastnode.south)%
    [rounded corners=0pt] -| (\tikzlastnode.east)%
    [rounded corners=5pt] |- (\tikzlastnode.north)%
    [rounded corners=5pt] -| (\tikzlastnode.west);
   \endpgfextra}}}
\tikzstyle{state}=[stateshape,inner sep=.4ex, node on layer=foreground, minimum width = 0.6cm, minimum height = 0.4cm, font = \small]
\tikzstyle{effect}=[effectshape,inner sep=.4ex, node on layer=foreground, minimum width = 0.6cm, minimum height = 0.4cm, font = \small]
\tikzstyle{statewide}=[state, minimum width = 1.2cm]
\tikzstyle{stateextrawide}=[state, minimum width = 1.4cm]
\tikzstyle{effectwide}=[effect, minimum width = 1.2cm]
\newcommand{\CommaBin}{\mathbin{{, }}} 
\let\OLDthebibliography\thebibliography
\renewcommand\thebibliography[1]{
  \OLDthebibliography{#1}
  \setlength{\parskip}{4pt}
  \setlength{\itemsep}{0pt plus 0.3ex}
}
\title{}
\newcommand{\titlestyle}[1]{\textbf{#1}}
\newcommand{\indsys}[1]{\{#1\}}
\newcommand{\operations}{tests}
\newcommand{\operation}{test}
\newcommand{\Operations}{Tests}
\newcommand{\OpCat}{\cat{Test}} 
\newcommand{\ParOpCat}{\cat{ParTest}} 
\newcommand{\OpStructure}{\mathsf{Test}} 
\newcommand{\anoperation}{a test}
\newcommand{\unique}{{}!} 
\begin{document}

\begin{center}
\LARGE{ \titlestyle{ Operational Theories of Physics as Categories} }\\
\vspace{16pt}
\Large{ {Sean Tull} } \\ \vspace{3pt} 
\small {\nolinkurl{sean.tull@cs.ox.ac.uk}} \\ \vspace{3pt} 
\small{{University of Oxford, Department of Computer Science}} \\ \vspace{3pt} 
\end{center}

\begin{abstract}
We introduce a new approach to the study of operational theories of physics using category theory. We define a generalisation of the (causal) operational-probabilistic theories of Chiribella et al. and establish their correspondence with our new notion of an operational category. Our work is based on effectus theory, a recently developed area of categorical logic, to which we give an operational interpretation, demonstrating its relevance to the modelling of general probabilistic theories.
\end{abstract}

\section{Introduction}

Since the discovery that quantum systems may be used to easily perform tasks inherently difficult in the classical world \cite{deutsch1992rapid,shor1997polynomial}, there have been a host of approaches to understanding quantum theory in terms of the \emph{operations} it allows one to perform \cite{Hardy2001QTFrom5,Barrett2007InfoGPTs,PhysRevA.84.012311InfoDerivQT,abramsky2008categorical}. The general programme is to study quantum theory from among more general theories of physics, defined in terms of systems and, typically probabilistic, experiments one may perform upon them. Several surprising aspects of the quantum world, such as the famous \emph{no-cloning theorem} \cite{Barnum2007NoBroadcast}, have been found to in fact hold in all non-classical theories, while others such as \emph{quantum teleportation} have been found to be far more special \cite{barnum2012teleportation}.


Research initiated by Abramsky and Coecke~\cite{abramskycoecke:categoricalsemantics} has demonstrated the elegance of \emph{category theory} as a tool for describing the basic features of such operational theories. In particular, it is well-known that the physical events in any theory allowing systems to be placed `side-by-side' form a \emph{symmetric monoidal category} \cite{coecke2011categories}. While an entirely categorical approach to the study of quantum theory has been developed~\cite{abramsky2008categorical,coecke2010quantum}, by taking quantum features such as teleportation as primitive, it is no longer applicable to arbitrary probabilistic theories. 

The categorical and probabilistic approaches are combined in the framework of \emph{operational-probabilistic theories} due to Chiribella, D'Ariano and Perinotti~\cite{chiribella2010purification}, which forms the basis of a reconstruction of (finite-dimensional) quantum theory from purely operational principles \cite{PhysRevA.84.012311InfoDerivQT}. Here, a physical theory is associated with a (strict) symmetric monoidal category of physical events, each corresponding to a possible outcome of some experimental test allowed by the theory. Additional structure is then placed on top of this categories, specifying which events may form admissible tests, describing the classical data obtained in experiments, and allowing one to assign probabilities to experimental outcomes.

In this work, we present a new description of operational theories like these which allows us to treat them entirely categorically. Our approach is based on the use of \emph{coproducts} to model the classical data obtained in experiments, along with the related notion of \emph{coarse-graining}, and its use in forming \emph{controlled \operations{}}. This provides us with purely categorical descriptions of all of the primitive notions of (causal) operational-probabilistic theories, including derived concepts such as the ability to form convex combinations of physical events. 

The categorical structures we consider in our approach are not new, and were first described by Jacobs \cite{NewDirections2014aJacobs} as part of \emph{effectus theory} \cite{EffectusIntro}, a recently developed area of categorical logic for use in modelling quantum computation. Our results provide a new interpretation of effectus theory as being fundamentally of an operational nature, suggesting its use in describing computation in more general probabilistic theories, and allowing one to identify effectuses with operational theories satisfying certain basic extra axioms. Indeed, this work began life as an attempt to give effectus theory such an interpretation.

We begin in Section~\ref{sec:OTCs} by introducing the operational theories that we wish to study categorically, under the name of \emph{operational theories with control} (OTCs). These generalise the (causal) theories of Chiribella et al.\ \cite{chiribella2010purification} by allowing for more general `probabilities' than those simply in the unit interval $[0,1]$, but retain their key features such as the ability to coarse-grain over physical events. 

In Section~\ref{sec:OperationalCategories} we study the abstract properties of the category $\catB = \OpCat(\Theta)$ of \operations{} of such a theory $\Theta$, which provides our definition of an \emph{operational category}, a weakened form of Jacobs et al.'s notion of an effectus \cite{NewDirections2014aJacobs,StatesConv2015JWW}. 
We also consider the broader category $\catC = \ParOpCat(\Theta)$ of \emph{partial} \operations{}, i.e.\ tests which may yield no outcome at all, axiomatizing $\catC$ as an operational category \emph{in partial form}. We show that, in fact, each of these categories may be defined in terms of the other: any operational category $\catB$ defines an operational category in partial form $\Partial(\catB)$, while conversely, any such category $\catC$ defines an operational category $\catC_{\total}$ consisting of its `total' morphisms. This generalises an analogous correspondence central to effectus theory \cite{Partial2015Cho}. 

The `partial' category $\catC = \Partial(\catB)$ is crucial in Section~\ref{sec:OpCatsToOpTheories}, in which we show that each operational category defines an operational theory $\Theta = \OTCFrom{\catC}$ with $\catC$ as its category of physical events $\Events_{\Theta}$. The coproducts in $\catC$ endow this theory with the ability to form `direct sums' of its systems, and  we show that every operational theory $\Theta$ may be `completed' to a new one $\Theta^{\otcplus}$ of this form. Our first main result, Theorem~\ref{thm:MainEquivalence}, identifies operational categories with OTCs coming with such direct sums. 

Our definition of an OTC is deliberately chosen to be as weak as possible to allow this categorical treatment, and in Section~\ref{sec:AdditionalAssumptions} we discuss natural extra assumptions one might wish to add to our framework 
on purely operational grounds. In Section~\ref{sec:EffectusTheory} we finally establish the connections with effectus theory, with our next main result, Corollary~\ref{cor:Effiscetainotcs}, identifying those OTCs which correspond to effectuses. We can summarise the relations between operational theories and categories as follows.  
\begin{center}
\footnotesize
\begin{tikzpicture}[node distance=1.4cm, auto]

\node[punkt] (OTC) {Operational theories with control (OTCs)};
\node[punkt, right=of OTC] (OTCd) {OTCs with direct sums};
\node[punkt, right=of OTCd] (Par) {Operational categories in partial form};
\node[punkt, right=of Par] (Tot) {Operational categories (in total form)};

\node[punkt, below = 1.7cm of OTC] (OTCcomb) {OTCs satisfying Axioms~\ref{axiom:positivity}, \ref{axiom:combiningops}};
\node[punkt, right=of OTCcomb] (OTCdobs) {OTCs with direct sums satisfying Axioms~\ref{axiom:positivity}, \ref{axiom:obsdetops} }; 
\node[punkt,right=of OTCdobs] (effPar) {Effectuses in partial form};
\node[punkt, right=of effPar] (effTot) {Effectuses (in total form)}; 

\path (OTC) edge[pil, bend left=45] node[above]{$(-)^{\otcplus}$} (OTCd);
\path (OTCd) edge[incl] node[below]{} (OTC);  
\path (OTCd) edge[pil, bend left=45] node[above]{$\Events_{(-)}$} (Par); 
\path (Par) edge[pil, bend left=45] node[below]{$\OTCFrom{-}$} (OTCd); 
\path (Par) edge[pil, bend left=45] node[above]{$(-)_{\total}$} (Tot); 
\path (Tot) edge[pil, bend left=45] node[below]{$\Partial(-)$} (Par);

\path (OTCcomb) edge[pil, bend left=45] (OTCdobs);
\path (OTCdobs) edge[incl] node[below]{} (OTCcomb);  

\path (OTCcomb) edge[incl2] (OTC);
\path (OTCdobs) edge[incl2] (OTCd);
\path (effPar) edge[incl2] (Par);
\path (effTot) edge[incl2] (Tot);
\path (effPar) edge[pil, bend left=45] (effTot); 
\path (effTot) edge[pil, bend left=45] (effPar); 
\path (OTCdobs) edge[pil, bend left=45] (effPar); 
\path (effPar) edge[pil, bend left=45] (OTCdobs); 

\node[right=0.4cm of OTCd] {$\simeq$};
\node[right=0.4cm of Par]  {$\simeq$};
\node[right=0.4cm of OTCdobs] {$\simeq$};
\node[right=0.4cm of effPar]  {$\simeq$};

\end{tikzpicture}
 \vspace{5pt}\\
\end{center}
\normalsize
Finally, in Section~\ref{sec:CategoricalResults} we discuss how we can make the above picture precise, in terms of functors between the categories of operational theories and categories. 

\paragraph*{Notation.} For clarity we record our notation here for later reference. Throughout we use set-like notation $\{a_x\}_{x \in X}$ for what is really an $X$-indexed collection of entities $a_x$. We write $\to$, $\circ$ for arrows and composition in an arbitrary category, including operational categories in partial or total form. In the category $\Partial(\catB)$ associated to a category $\catB$ we instead use $\partialto, \partialcomp$. To each OTC $\Theta$ we associate the categories $\Events_{\Theta}$,  $\ParOpCat(\Theta)$ and $\OpCat(\Theta)$ of events, partial \operations{} and \operations{}, with arrows in each typically written $\eventto$, $\catpopsto$, $\catopsto$, respectively. 

As is usual, we write $\otimes$ for the tensor in a monoidal category. The symbol $\discard{}$ denotes discarding in an OTC or operational category in partial form. The notation $f \ovee g$ and $\bigoplus_{x \in X} A_x$, $\triangleright_y$ is used for coarse-graining and direct sum systems, respectively, in an OTC. In contrast, $f + g$, $[f,g]$, $A + B$, $\coprod_{x \in X} A_x$, $n \cdot A$, $\triangledown$, all refer to coproducts in a category (see Appendix~\ref{appendix:cattheory}). In an operational category in total or partial form, we again use the symbol $\triangleright$ for the `projections' $\triangleright_1: A + B \to A + 1, A$, respectively.


\section{Operational Theories with Control} \label{sec:OTCs}

\subsection{The framework}
Let us now describe what we will mean by an operational theory of physics. The theories we describe closely resemble the (causal) operational-probabilistic theories of Chiribella et al.\ \cite{chiribella2010purification}, retaining their key structural aspects, without explicitly assuming the use of traditional probabilities. We will introduce each of these features in turn, before giving the formal definition. 

\subsubsection{Systems, events and \operations{}}

In the operational approach to physics, we consider physical \emph{systems} $A, B, C \ldots$ and \emph{\operations{}} one may perform upon them. A test  $\{ {f_x \colon A \eventto B_x }\}_{x \in X}$ is to be thought of as a finite-outcome measurement one may perform on a system of type $A$. Any such \operation{} has a collection $X$ of classical \emph{outcomes}.
Though much of what we say works in the infinite case, we will always take these collections $X$ to be finite. 
 Each outcome $x \in X$ corresponds to an \emph{event} $f_x \colon A \eventto B_x$, which leaves us with a system of type $B_x$. We imagine that, on any given run of the \operation{}, precisely one the events $f_x \colon A \eventto B_x$ will occur, with the outcome $x \in X$ then recorded. 

It is part of the job of the theory to specify which finite collections $\{ {f_x \colon A \eventto B_x} \}_{x \in X}$ of events of the same input type form admissible \operations{}.
We will say that a finite collection $\{f_x \colon A \eventto B_x\}_{x \in X}$ forms a \emph{partial \operation{}} if they form a subset of some (total) \operation{} $\{f_y \colon A \eventto B_y\}_{y \in Y}$, with $X \subseteq Y$. We call a partial \operation{} which is in fact \anoperation{} \emph{total}. An event $f \colon A \eventto B$ is said to be \emph{deterministic} when $\{ f \colon A \eventto B \}$ is total. 

\subsubsection{The category of events}

In general, we think of an event $f \colon A \eventto B$ as a physical occurrence which transforms a system of type $A$ into one of type $B$. Given any two events $f \colon A \eventto B$ and $g \colon B \eventto C$, we may compose them to form a new event $g  \eventcomp f \colon A \eventto C$ 
interpreted as `$f$ occurs, and then $g$ occurs'. We assume that composition satisfies $h \eventcomp (g \eventcomp f) = (h \eventcomp g) \eventcomp f$, for all such composable triples, and that for each system $A$ there is an identity event $\id{A} \colon A \eventto A$, satisfying $f \eventcomp \id{A} = f$ and $\id{A} \eventcomp g = g$ for all $f \colon A \eventto B$, $g \colon C \eventto A$. In other words, the collection of events forms a \emph{category}. 

We always assume this category comes with a designated object $I$, called the `trivial' system, which we interpret as `nothing'. We call events $\omega \colon I \eventto A$, $e \colon A \eventto I$ and $r \colon I \eventto I$ \emph{states}, \emph{effects} and \emph{scalars}, respectively, and \operations{} of the form $\{e_x \colon A \eventto I \}_{x \in X}$ \emph{observations}.

\paragraph{Monoidal structure}
Along with the sequential composition $g \eventcomp f$ of events, we also typically assume a \emph{spatial} composition $\otimes$ which allows us to place systems and events `side-by-side'. Given any two systems $A$, $B$ we denote their composite system by $A \otimes B$. We may also compose any pair of events $f \colon A \eventto B$, $g \colon C \eventto D$ to form a new event $f \otimes g \colon A \otimes C \eventto B \otimes D$. In such a theory, it is often helpful to use the following graphical notation for events:
\[
  \begin{pic}
  \node[mor] (f) {$\id{A}$};
  \draw (f.south) to +(0,-.55) node[below] {$A$};
  \draw (f.north) to +(0,.55) node[above] {$A$};
  \end{pic}
  = 
  \begin{pic}
  \node[below](a) at (0,0){$A$};
  \node[above](b) at (0,1.55){$A$};
  \draw (a.north) to (b.south);
  \end{pic}
  \qquad \qquad
  \begin{pic}
  \node[mor] (f) {$g \eventcomp f$};
  \draw (f.south) to +(0,-.55) node[below] {$A$};
  \draw (f.north) to +(0,.55) node[above] {$C$};
  \end{pic}
  = 
  \begin{pic}
  \node[mor] (g) at (0,.75) {$g\vphantom{f}$};
  \node[mor] (f) at (0,0) {$f$};
  \draw (f.south) to +(0,-.2) node[below] {$A$};
  \draw (g.south) to node[right] {$B$} (f.north);
  \draw (g.north) to +(0,.2) node[above] {$C$};
  \end{pic}
  \qquad\qquad
  \begin{pic}
  \node[mor] (f) {$f \otimes g$};
  \draw (f.south) to +(0,-.55) node[below] {$A \otimes C$};
  \draw (f.north) to +(0,.55) node[above] {$B \otimes D$};
  \end{pic}
  = 
  \begin{pic}
  \node[mor] (f) at (-.4,0) {$f$};
  \node[mor] (g) at (.4,0) {$g\vphantom{f}$};
  \draw (f.south) to +(0,-.55) node[below] {$A$};
  \draw (f.north) to +(0,.55) node[above] {$B$};
  \draw (g.south) to +(0,-.55) node[below] {$C$};
  \draw (g.north) to +(0,.55) node[above] {$D$};
  \end{pic}
  \qquad
  \begin{pic}
  \node[mor] (f) {$\id{I}$};
  \draw (f.south) to +(0,-.55) node[below] {$I$};
  \draw (f.north) to +(0,.55) node[above] {$I$};
  \end{pic}
  = 
  \begin{pic}
  \end{pic}
\]
where in the final equation we mean that $\id{I}$ is given by the empty picture. 
We may then describe more complex events intuitively as graphical `circuits' such as: 
\[
\begin{pic}
\node[morwide](t1) at (0,1.9){$g$};
\node[morwide](t2) at (0.81,1.0){$f$};
\draw ([xshift=3mm]t1.south west) to node[left, yshift=2.5mm]{$B$}  +(0,-1) node[below,state](s) {$\rho$};
\draw ([xshift=-2mm]t1.south east) to ([xshift=2mm]t2.north west) node[right, yshift=2.5mm]{$C$};
\draw ([xshift=3mm]t1.north west) to +(0,0.5) node[above]{$E$};
\draw ([xshift=-2mm]t1.north east) to +(0,0.5) node[above]{$F$};
\draw (t2.south) to +(0,-0.5) node[below]{$A$};
\draw ([xshift=-3mm]t2.north east) to node[right, yshift=-4mm]{$D$}  +(0,1) node[above, effect]{$e$};
\end{pic}
\]

The circuit representation of events is fundamental in the operational reconstructions of quantum theory due to Hardy \cite{Hardy2011ReformQT} and the `Pavia group' \cite{PhysRevA.84.012311InfoDerivQT}. It is well-known that we may describe this situation precisely by requiring that our collection of events forms a \emph{symmetric monoidal} category. 
Any such category comes with `coherence' isomorphisms $\lambda_A  \colon I \otimes A \simeq A$, $\rho_A \colon A \otimes I \simeq A$, which encode the `triviality' of the system $I$, and $\alpha_{A,B,C} \colon A \otimes (B \otimes C) \simeq (A \otimes B) \otimes C$, $\sigma_{A,B} \colon A \otimes B \simeq B \otimes A$, which tell us that the order in which we compose systems is not significant. By working in the graphical language, one can in practice often avoid considering these isomorphisms altogether, and simply pretend we have equalities like $A \otimes I = A$ and $A \otimes (B \otimes C) = (A \otimes B) \otimes C$. The article \cite{coecke2011categories} provides an accessible introduction to the use of symmetric monoidal categories in physics. 

We will call a theory \emph{monoidal} when it comes with such a compositional structure. In any monoidal theory, we should be able to place any two \operations{} `side-by-side' to form a new one, as in the following.

\begin{assumption}[Parallel \operations{}] \label{assump_composition}
Whenever $\{f_x \colon A \eventto B_x\}_{x \in X}$ and $\{g_y \colon C \eventto D_y \}_{y \in Y}$ are \operations{}, so is 
 $\{f_x \otimes g_y \colon A \otimes C \eventto B_x \otimes D_y \}_{x \in X, y \in Y}$. 
\end{assumption}

Since monoidal structure is not essential to our approach, we will allow for non-monoidal theories. Typically each of our results are given in one form relevant to monoidal theories, and one to more general theories.

\subsubsection{Classical data flow} 

On top of the category of events, an operational theory also concerns the finite sets $X$ of classical outcome data associated with \operations{} $\{f_x\}_{x \in X}$. There are two main ways in which this data may be used. Firstly, we allow the outcome data of previous tests to be used as input in future ones. 

\begin{assumption}[Control] \label{assump:control} Given any \operation{} $\{ f_x \colon A \eventto B_x \}_{x \in X}$ and for each of its outcomes $x \in X$ \anoperation{} $\{{g^x}_y \colon B_x \eventto C_{x,y} \}_{y \in Y_x}$, the following forms a test: 
\[
\Big\{
\vcenter{ \hbox{
\begin{tikzpicture}
\node (a) at (0,0) {$A$};
\node (b) at (1.5,0) {$B_x$};
\node (c) at (3,0) {$C_{x,y}$};
\draw[-{Triangle[scale=1.1,angle'=45, open]}] (a) -- (b) node[midway,above]{\footnotesize $f_x$}; 
\draw[-{Triangle[scale=1.1,angle'=45, open]}] (b) -- (c) node[midway,above]{\footnotesize ${g^x}_y$}; 
\end{tikzpicture}
}}
\Big\}_{x \in X, y \in Y_x} 
\]
\end{assumption}

We refer to the above as a \emph{controlled test}, interpretting it as performing the test $\{f_x\}_{x \in X}$, and then depending on the outcome $x \in X$ choosing which \operation{} $g^x$ to perform next.
Control appears as an extra physical assumption in the framework in \cite{PhysRevA.84.012311InfoDerivQT}, which allows for theories without simple causal structure, and hence any straightforward notion of conditioning. 

Secondly, we assume that an agent is free to discard any amount of the classical data obtained in an experiment, thus `merging' several of its outcome events. Let us call a collection of events of the same type $\{f_x \colon A \eventto B \}_{x \in X}$ \emph{compatible} when they form a partial test. An operational theory should come with a rule for merging any compatible pair of events $\{f,g \colon A \eventto B\}$ into a new \emph{coarse-grained} event $f \ovee g \colon A \eventto B$, which we interpret as `$f$ occurs or $g$ occurs'.

\begin{assumption}[Coarse-graining] \label{assump:coarse_grain} The coarse-graining $f \ovee g$ of compatible pairs of events satisfies the following. 

\begin{itemize}
\item If $\{f_x\}_{x \in X} \cup \{ g, h \} $ is \anoperation{}, then so is $\{f_x\}_{x \in X} \cup \{g \ovee h\}$.

\item $(f \ovee g) \ovee h = f \ovee (g \ovee h)$ whenever $\{f, g, h\}$ are all compatible.

\item 
For compatible $\{g, h \colon B \eventto C\}$:
\begin{equation*} 
g \ovee h = h \ovee g, \qquad f \eventcomp (g \ovee h) = f \eventcomp g \ovee f \eventcomp h, \qquad (g \ovee h) \eventcomp k = g \eventcomp k \ovee h \eventcomp k 
\end{equation*} 
for all events $f \colon C \eventto D$, $k \colon A \eventto B$.

\item In a monoidal theory $f \otimes (g \ovee h) = (f \otimes g) \ovee (f \otimes h)$ for all $f$ and compatible $g$, $h$.

\end{itemize}
\end{assumption}

Each of these requirements has a straightforward operational interpretation. For example, the third equation above states that the event `either $g$ or $h$ occurs, then $f$ occurs' is the same as the event `either $g$ occurs then $f$ occurs, or $h$ occurs then $f$ occurs'.  Note that both sides of the above equations are indeed well-defined thanks to Assumptions~\ref{assump_composition} and \ref{assump:control}. The above lets us extend coarse-graining to finite, non-empty compatible collections by $\bigovee^n_{i = 1} {f_i} = f_1 \ovee (f_2 \ovee (\ldots \ovee f_n)$. 

\subsubsection{Remaining assumptions}

There are a few further assumptions we will need to add to our existing framework. Firstly, it will be helpful for us to assume for any two systems $A$, $B$ the presence of a special \emph{impossible event} $0_{A,B} \colon A \eventto B$ between them, which we interpret as the event which `never occurs'. We capture their expected behaviour as follows. 

\begin{assumption}[Impossible events] There is a family of events $0_{A,B} \colon A \eventto B$, defined over all pairs of systems $A$, $B$, such that:

\begin{itemize}
\item for all events $f \colon A \eventto B$ and $g \colon B \eventto C$ we have $g \eventcomp 0_{A,B} = 0_{A,C} = 0_{B,C} \eventcomp f $. In a monoidal theory we also have $f \otimes 0_{C,D} = 0_{A \otimes C, B \otimes D}$.
\item a collection of events $\{f_x \colon A \eventto B_x \}_{x \in X}$ forms \anoperation{} iff $\{f_x\}_{x \in X} \cup \{ 0_{A,C}\}$ does also, for any system $C$.
\item for all events $f \colon A \eventto B$ we have $f \ovee 0_{A,B} = f$.
\end{itemize}
\end{assumption}

Such a family of events is easily seen to be unique, if it exists. Since each $0_{A,B}$ is a unit for coarse-graining, we define the coarse-graining of each empty partial test to equal $0$. Next, we collect a couple of very basic assumptions that one would expect of any theory.


\begin{assumption}[Trivial \operations{}] \label{assump:TrivialDeterminism} Each identity event $\id{A}: A \eventto A$ is deterministic. In a monoidal theory, so is the coherence isomorphism $\lambda_I  = \rho_I \colon I \otimes I \simeq I$. 
\end{assumption}

\begin{assumption}[No superfluous events] \label{assump:EveryEventSomeOp}
Every event belongs to some \operation{}.
\end{assumption}

Our final assumption concerns effects $e \colon A \eventto I$, which are typically thought of as `yes-no' tests one can apply to a system. If we assume that every system comes with at least one observation, then our assumptions so far do in fact imply that each effect really does belong to some two-valued observation $\{e, e' \colon A \eventto I\}$. 
To see this, given some test $\{e \} \cup \{f_x: A \eventto B_x\}_{x \in X}$ and an observation $\{e^x_y\}_{y \in Y_x}$ on each system $B_x$, use control to define the observation $\{e\} \cup \{ \bigovee_{x, y} e^x_y \eventcomp f_x \}$. We require that such an observation always exists, and is unique. 
 
\begin{assumption}[Complementary effects] \label{assump:complementaryeffects} 
For every effect $e \colon A \eventto I$ there is a unique effect $e^{\bot} \colon A \eventto I$ for which $\{e, e^{\bot}\}$ forms a test.
\end{assumption}

Intuitively, $e$ may be tested to be true or false in any given state, and $e^{\bot}$ is then the unique event which occurs whenever `$e$ was found to be false'. We will see shortly that, like control, this assumption is closely related to our theory having a basic causal structure.

\begin{definition} \label{def:otc}
An \emph{operational theory with control} (OTC) $\Theta = (\Events_{\Theta}, I, \OpStructure, \ovee)$ is given by a category 
$\Events_{\Theta}$ of events, with distinguished object $I$, specification $\OpStructure$ of allowed \operations{}, and family of coarse-graining rules $\ovee$, satisfying Assumptions 2 to \ref{assump:complementaryeffects}. A \emph{monoidal} OTC further has $\Events_{\Theta}$ being symmetric monoidal with tensor unit $I$, and satisfies Assumption~\ref{assump_composition}.
\end{definition}

\subsection{Consequences of the assumptions}

The most crucial consequence of our choice of axioms for an OTC is the following.

\begin{lemma}[Causality]\label{lemmaCausality}  In an OTC, every system has a unique deterministic effect. 
\end{lemma}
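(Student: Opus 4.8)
The plan is to reduce both existence and uniqueness to a single observation: a deterministic effect on a system $A$ is precisely an effect which is a \emph{complement} of the impossible effect $0_{A,I} \colon A \eventto I$, in the sense of the Complementary effects assumption (\ref{assump:complementaryeffects}). Once this identification is made, the statement follows immediately by combining that assumption with the Impossible events assumption, with essentially no computation required. Note that $0_{A,I}$ exists and is itself an effect, so \ref{assump:complementaryeffects} does apply to it.

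The only mechanical ingredient I would record first is the pair of directions supplied by the Impossible events assumption, applied with target system $I$: for any effect $e \colon A \eventto I$, the singleton $\{e\}$ forms a test if and only if $\{e, 0_{A,I}\}$ does. The one point to be careful about here is that this equivalence lets us both insert and delete $0_{A,I}$ from a (singleton) test freely, and we shall use each direction once.

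For existence, I would set $\top_A \eqdef 0_{A,I}^{\bot}$, which exists by \ref{assump:complementaryeffects} applied to $0_{A,I}$. By definition $\{0_{A,I}, \top_A\}$ is a test, so deleting the impossible event shows that $\{\top_A\}$ is a test; hence $\top_A$ is a deterministic effect. For uniqueness, I would take any deterministic effect $e \colon A \eventto I$, so that $\{e\}$ is a test; inserting the impossible event shows $\{0_{A,I}, e\}$ is a test, so $e$ is a complement of $0_{A,I}$, and the uniqueness clause of \ref{assump:complementaryeffects} forces $e = 0_{A,I}^{\bot} = \top_A$.

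The main obstacle is conceptual rather than technical: recognising that ``being a deterministic effect'' and ``being the complement of the impossible effect'' are the same condition, which is exactly what ties the Impossible events and Complementary effects assumptions together. There is no real calculation to grind through, and the weight of the lemma lies not in the difficulty of the argument but in the canonical discarding map $\top_A$ it produces, which furnishes the causal structure exploited later.
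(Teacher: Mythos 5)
Your proof is correct and follows exactly the paper's own argument: the paper's one-line proof is precisely the observation that an effect $e$ is deterministic iff $\{e, 0_{A,I}\}$ is a test, i.e.\ iff $e = 0_{A,I}^{\bot}$, with existence and uniqueness then supplied by Assumption~\ref{assump:complementaryeffects}. You have simply spelled out the two directions of the Impossible events assumption more explicitly, which is a faithful expansion of the same reasoning.
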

\begin{proof}
An effect $e \colon A \eventto I$ is deterministic iff $\{e \colon A \eventto I, 0 \colon A \eventto I\}$ is total, i.e.\ iff $e = 0^{\bot}$.
\end{proof}

We denote the unique deterministic effect on a system $A$ by $\discard{A} \colon A \eventto I$, and call it \emph{discarding}, thinking of the single-outcome \operation{} $\{\discard{A} \colon A \eventto I\}$ as simply `throwing the system away'. In particular, by uniqueness we have $\discard{I} = \id{I}$. Causality appears as an explicit axiom in \cite{PhysRevA.84.012311InfoDerivQT}, where it is shown to ensure that tests performed in the future cannot affect the probabilities of tests performed in the past. Monoidal categories with such discarding maps $\discard{A} \colon A \eventto I$ have been studied in the context of causality by Coecke and Lal \cite{lal2012causal,coecke2014terminality}.  

\begin{lemma} \label{lemma:OTCConsequences} The following hold in any operational theory with control $\Theta$.
\begin{enumerate}[i)]
\item \label{lem_OTC_uniqueeffect} For every partial \operation{} $\{f_x \}_{x \in X}$ there is a unique effect $e$ for which $\{f_x\}_{x \in X} \cup \{ e\}$ is total.
\item A partial \operation{} $\{f_x \}_{x \in X}$ is total iff $\bigovee_{x \in X} \discard{} \eventcomp f_x= \discard{}$. In particular an event $f \colon A \eventto B$ is deterministic iff $\discard{B} \eventcomp f = \discard{A}$.
\item Coarse-graining $\ovee$ of effects is cancellative. That is, for all effects $a,b,c \colon A \eventto I$ we have $a \ovee b = a \ovee c \implies b =c$. In particular, scalar addition $\ovee$ is cancellative.
\item If $\Theta$ is a monoidal theory, then discarding satisfies $\discard{A \otimes B} = \lambda_I \eventcomp (\discard{A} \otimes \discard{B})$ where $\lambda_I \colon {I \otimes I} \eventto I$ is the coherence isomorphism.
Diagrammatically:
\[
\begin{pic}[scale=0.5]
\node[ground] (g) at (0,0) {};
\draw (g.south) to +(0,-0.75) node[below]{$A \otimes B$}; 
\end{pic}
=
\begin{pic}[scale=0.5]
\node[ground] (g1) at (-0.5,0){};
\node[ground] (g2) at (0.5,0){};
\draw (g1.south) to +(0,-0.75) node[below]{$A$}; 
\draw (g2.south) to +(0,-0.75) node[below]{$B$}; 
\end{pic}
\]
Further, the coherence isomorphisms $\rho_A$, $\lambda_A$, $\alpha_{A,B,C}$ and $\sigma_{A,B}$ are all deterministic.
\end{enumerate}
\end{lemma}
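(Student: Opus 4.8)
The plan is to prove the four claims in an order respecting their dependencies: part~\ref{lem_OTC_uniqueeffect} and the forward half of (ii) first, then cancellativity (iii), then the converse of (ii), and finally the monoidal statement (iv). The recurring tool throughout is that identities and discarding maps are deterministic, so that Control (Assumption~\ref{assump:control}) lets me post-compose the outcomes of any total test with them while staying total. For part~\ref{lem_OTC_uniqueeffect}, I would fix a total test $\{f_y \colon A \eventto B_y\}_{y \in Y}$ extending the given partial test ($X \subseteq Y$) and apply Control, keeping the identity $\id{B_x}$ on each $x \in X$ (total by Assumption~\ref{assump:TrivialDeterminism}) and discarding on each $y \in Y \setminus X$; coarse-graining the resulting effects yields a total test $\{f_x\}_{x \in X} \cup \{e\}$ with $e = \bigovee_{y \in Y \setminus X} \discard{} \eventcomp f_y$, proving existence. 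For uniqueness, if $\{f_x\}_x \cup \{e\}$ and $\{f_x\}_x \cup \{e'\}$ are both total, I discard on each $x$-outcome and coarse-grain $\{f_x\}_x$ into a single effect $d$; then $\{d,e\}$ and $\{d,e'\}$ are both total, so $e = e' = d^{\bot}$ by the uniqueness clause of Assumption~\ref{assump:complementaryeffects}.

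The forward half of (ii) is then immediate: discarding on every outcome of a total $\{f_x\}_x$ and coarse-graining produces a single-outcome, hence deterministic, effect, which must equal $\discard{}$ by Causality (Lemma~\ref{lemmaCausality}); the case $|X|=1$ gives the stated criterion for events. For cancellativity (iii) I would reproduce the orthosupplement argument of effect-algebra theory inside the OTC, using only part~\ref{lem_OTC_uniqueeffect} and complements. Given $a \ovee b = a \ovee c =: d$ for effects $a,b,c \colon A \eventto I$, coarse-graining $\{a,b\}$ into $d$ together with the uniqueness in part~\ref{lem_OTC_uniqueeffect} applied to $\{d\}$ shows that $\{a,b,d^{\bot}\}$ is total, and likewise $\{a,c,d^{\bot}\}$ is total. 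Coarse-graining $a$ with $d^{\bot}$ inside each then makes $\{b, a \ovee d^{\bot}\}$ and $\{c, a \ovee d^{\bot}\}$ total, so $a \ovee d^{\bot} = b^{\bot}$ and $a \ovee d^{\bot} = c^{\bot}$ by uniqueness of complements; hence $b^{\bot} = c^{\bot}$, and since $\bot$ is an involution (as $\{b,b^{\bot}\}$ total means $b$ is the complement of $b^{\bot}$), $b = c$.

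With (iii) in hand the converse of (ii) follows: letting $e$ be the completing effect of $\{f_x\}_x$ from part~\ref{lem_OTC_uniqueeffect}, the forward direction applied to $\{f_x\}_x \cup \{e\}$ gives $\bigl(\bigovee_x \discard{} \eventcomp f_x\bigr) \ovee e = \discard{}$, so the hypothesis $\bigovee_x \discard{} \eventcomp f_x = \discard{}$ and cancellation force $e = 0$; since $0$ may be freely adjoined to or deleted from any test (the Impossible events assumption), $\{f_x\}_x$ is itself total. For (iv), Assumption~\ref{assump_composition} makes $\discard{A} \otimes \discard{B}$ deterministic, composing with the deterministic $\lambda_I$ (Assumption~\ref{assump:TrivialDeterminism}) keeps it deterministic, and Causality identifies the resulting effect on $A \otimes B$ with $\discard{A \otimes B}$. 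Each coherence isomorphism is then shown deterministic via the criterion in (ii): for instance naturality of $\lambda$ at $\discard{A}$ gives $\discard{A} \eventcomp \lambda_A = \lambda_I \eventcomp (\id{I} \otimes \discard{A}) = \discard{I \otimes A}$, and $\rho_A$, $\sigma_{A,B}$, $\alpha_{A,B,C}$ are handled identically from naturality together with the relevant coherence identity (e.g.\ $\lambda_I \eventcomp \sigma_{I,I} = \rho_I = \lambda_I$).

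The only genuinely delicate claim is (iii): cancellativity is not among the assumptions, and there is no literal `subtraction' of effects, so the whole content lies in routing the statement through complements. The two steps to watch are the repeated passage between a coarse-grained effect and its refinement, which is legitimate only through the uniqueness clause of part~\ref{lem_OTC_uniqueeffect}, and the bookkeeping of which collections are genuinely (partial) tests, since the laws of Assumption~\ref{assump:coarse_grain} apply only to compatible families.
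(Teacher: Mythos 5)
Your proposal is correct and follows essentially the same route as the paper's proof: reduce each partial test to a two-outcome observation via control and coarse-graining, invoke uniqueness of complements (Assumption~\ref{assump:complementaryeffects}) and causality, and obtain determinism of the coherence isomorphisms from naturality. The only cosmetic differences are that you prove (iii) before the converse of (ii) and then derive that converse by cancellation, where the paper reads it off directly from $e^{\bot}=0 \iff e = 0^{\bot} = \discard{}$; both are sound.
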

\begin{proof} 
We make free use of Assumptions 1 to \ref{assump:complementaryeffects}, along with causality.
\begin{enumerate}[i)]
\item 
If $\{f_x\}_{x \in X} \cup \{g_y\}_{y \in Y}$ is \anoperation{}, then so is $\{f_x\}_{x \in X} \cup \{\bigovee_{y \in Y} \discard{} \eventcomp g_y\}$, thanks to control and coarse-graining. Conversely, for any \operation{} $\{f_x\}_{x \in X} \cup \{e \colon A \eventto I\}$ we have $e = (\bigovee_{x \in X} \discard{} \eventcomp f_x)^{\bot}$, and so $e$ is unique.

\item As above, $e = (\bigovee_{x \in X} \discard{} \eventcomp f_x)$ is unique such that $\{f_x\}_{x \in X} \cup \{e^{\bot}\}$ forms a test. Then $\{f_x\}$ is total iff $e^{\bot} = 0$ iff $e = \discard{}$.




\item Suppose $a \ovee b = a \ovee c \colon A \eventto I$. By \ref{lem_OTC_uniqueeffect}) there are (unique) observations $\{a,b,d \}$, $\{a, c, e \}$. Then $d = (a \ovee b)^{\bot} = (a \ovee c)^{\bot} = e$, and so $b = (a \ovee d)^{\bot} = c$.

\item By Assumption~\ref{assump_composition}, \operations{} are closed under $\otimes$, and so the event $\discard{A} \otimes \discard{B} \colon A \otimes B \eventto I \otimes I$ is deterministic, for any $A$, $B$. By Assumption~\ref{assump:TrivialDeterminism}, $\lambda_I$ is deterministic, and hence by uniqueness we have $\discard{A \otimes B} = \lambda \eventcomp (\discard{A} \otimes \discard{B}) \colon A \otimes B \eventto I$. Each $\lambda_A$ is then deterministic by naturality, since we have
$\discard{A} \eventcomp \lambda_A = \lambda_I \eventcomp (\id{I} \otimes \discard{A}) = \discard{I \otimes A}$, and similarly so are the remaining coherence isomorphisms. 
\end{enumerate}
\end{proof}

Typical operational approaches to physics, such as \cite{chiribella2010purification}, come with the extra assumption that scalars $p \colon I \eventto I$ may be identified with probabilities $p \in [0,1]$. Here we do not make this assumption. Nonetheless, the scalars in an OTC behave in many ways like probabilities. They come with a partial addition $\ovee$, with identity element $0$, resembling the addition $p + q$ of probabilities $p,q \in [0,1]$ for which $p + q \leq 1$. 
There is a special scalar $1 = \discard{I}$, such that for each scalar $p$ there is a unique $p^{\bot}$ behaving like `$1-p$', with $p \ovee p^{\bot} = 1$. 
Further, we can multiply any two scalars as $p \bullet q = p \eventcomp q$, and in a monoidal theory, this multiplication is always commutative, with $p \eventcomp q = p \otimes q = q \circ p$ (the so-called `miracle of scalars' \cite{kelly1980coherence}). 

Using these features of our scalars, we can carry out probabilistic-style reasoning in an arbitrary OTC. 
For example, given any \operation{} $\{p_x \colon I \eventto I \}_{x \in X}$, thought of as a probability distribution, and any collection of states $\{\omega_x \colon I \eventto A\}_{x \in X}$, we define their \emph{convex combination} $\bigovee_{x \in X} p_x \bullet \omega_x$ to be the event:
\[
\bigovee_{x \in X} 
\big(
\begin{tikzcd}
I \rar{p_x} & I
\rar{\omega_x} & A 
\end{tikzcd}
\big)
\] 
which is well-defined thanks to the assumption of control.
In a monoidal theory we may define convex combinations of events of arbitrary type similarly. Using this notion of convexity, one may go on to define typical notions from operational physics such as \emph{completely mixed states}, and \emph{pure} events, reasoning much like in \cite{chiribella2010purification}. 


\subsection{Examples}

i) \emph{Deterministic, classical physics} is described by the OTC $\ClassDet$ in which systems are sets $A, B, C \ldots$ and events are partial functions $f \colon A \pfun{} B$. A collection of events $\{f_x \colon A \pfun B\}_{x \in X}$ forms \anoperation{} when their domains partition $A$, with coarse-graining $f \ovee g \colon A \pfun{} B$ given by disjoint union of functions. The monoidal structure comes from the usual Cartesian product $A \times B$ of sets, with trivial system $I = \{ \star \}$. 
The scalars in this theory are simply $\{0, 1\}$. \\

\noindent ii) In the theory $\ClassProb$ of \emph{probabilistic, classical physics}, events from $A$ to $B$ are functions sending each $a \in A$ to a finite probability subdistribution over elements of $B$. In other words, they are functions $f \colon A \times B \to [0,1]$ such that $\sum_{b \in B} f(a,b) \leq 1$, with only finitely many non-zero values $f(a,b)$, for each $a \in A$. A collection $\{f_x \colon A \eventto B_x\}_{x \in X}$ of such events forms \anoperation{} when
\[
\sum_{x \in X} \sum_{b \in B_x} f_x(a)(b) = 1
\]
holds for all $a \in A$. We may view such events $A \eventto B$ as `$A \times B$ matrices', in which each column has finitely many non-zero entries and sum $\leq 1$. Composition of events is then given by matrix composition, and coarse-graining by element-wise addition of matrices. The monoidal structure is again defined on systems by $A \times B = A \otimes B$, and on matrices using the usual Kronecker product. States $\omega \colon I \eventto A$ then correspond to finite probability distributions over elements of $A$, while effects simply $e \colon A \eventto I$ assign a probability $e(a)$ to each element $a \in A$. The scalars are given by probabilities $p \in [0,1]$.\\

\noindent iii) The most simple OTC describing quantum theory, $\FinHilbOTC$, takes systems to be finite-dimensional Hilbert spaces $\HilbH$, $\HilbK$. 
Events $f \colon \HilbH \eventto \HilbK$ are given by completely-positive maps $\bar{f} \colon \curlyB(\HilbK) \to \curlyB(\HilbH)$ between their spaces of bounded operators which are sub-unital, i.e.\ trace non-increasing. Note that we work in the Heisenberg picture, with maps in the `opposite direction'. 
A finite collection of events $\{f_x \colon \HilbH \eventto \HilbK_x\}_{x \in X}$ forms \anoperation{} when the completely positive map they induce from $\bigoplus_{x \in X} \curlyB(\HilbK_x)$ to $\curlyB(\HilbH)$ is unital i.e.\ trace-preserving. Coarse-graining is given by addition of completely positive maps, and
the monoidal structure $\otimes$ extends to events from the usual tensor product of Hilbert spaces, with trivial system $I = \mathbb{C}$.

As special cases, states $\omega \colon I \eventto \HilbH$ are then given by completely positive, trace non-increasing maps $\curlyB(\HilbH) \to \mathbb{C}$, which correspond precisely to (subnormalised) density matrices $\rho \in \curlyB(\HilbH)$ by Gleason's Theorem.  Effects, i.e.\ completely positive, sub-unital maps $e \colon \mathbb{C} \to \curlyB(\HilbH)$ correspond to effects on $\HilbH$ in the usual sense, namely positive operators $e \in \curlyB(\HilbH)$ satisfying $0 \leq e \leq 1$. Again, the collection of scalars is $[0,1]$. 

More generally, we can extend our notion of system to define OTCs $\FinCStarOTC$ and $\CStarOTC$ of finite-dimensional and arbitrary C*-algebras, respectively, in exactly the same way. \\

Our final example demonstrates that, despite the discussion above, scalars in an OTC can be quite different from probabilities in general. In Section~\ref{sec:AdditionalAssumptions} we discuss extra axioms one may add to our framework to rule out such examples.\\

\noindent iv) For any unital semiring $R$, let $R_{\leq 1} = \{a \in R \mid (\exists b \in R) \ \ \ a + b = 1 \}$. The OTC $\Mat_R$ takes systems to be natural numbers $n \in \mathbb{N}$, with events $M \colon n \eventto m$ given by $n \times m$ matrices with values in $R_{\leq 1}$. A finite collection $\{M_x \colon n \eventto m_x\}_{x \in X}$ forms \anoperation{} iff we have $\sum_{x \in X} \sum^{m_x}_{j = 1} M_x(i,j) = 1$ for all $i \in I$, while coarse-graining is given by elementwise addition of matrices. The scalars in $\Mat_R$ are $R_{\leq 1}$. For example, in $\Mat_{\mathbb{Z}}$ the scalars are the integers $\mathbb{Z}$.

\section{Operational Categories} \label{sec:OperationalCategories}

Our definition of an OTC was quite long, involving placing on top of the category of events the additional structure of allowed \operations{} and coarse-graining rules, along with our Assumptions. In fact, the \operations{} and partial \operations{} of $\Theta$ also form categories $\catB = \OpCat(\Theta)$ and $\catC = \ParOpCat(\Theta)$, each definable in terms of the other. We will find that these categories each provide a more elegant way to study the theory $\Theta$, with all of its crucial aspects encoded in their categorical properties.

\subsection{The category of \operations{}}

For any OTC $\Theta$, we define its category $\OpCat(\Theta)$ of tests, with morphisms denoted $\catopsto$, as follows:


\begin{itemize}
\item objects are finite, indexed collections $\indsys{A_x}_{x \in X}$ of systems of $\Theta$;
\item morphisms $f \colon \indsys{A} \catopsto \indsys{B_y}_{y \in Y}$ are \operations{} $\{f_y \colon A \eventto B_y\}_{y \in Y}$ in $\Theta$. More generally, a morphism  $M \colon \indsys{A_x}_{x \in X} \catopsto \indsys{B_y}_{y \in Y}$ is  an $X$-indexed collections of \operations{} $\{M(x,y) \colon A_x \eventto B_y\}_{y \in Y}$. 
\end{itemize}
These morphisms may be viewed as matrices of events $M(x,y) \colon A_x \eventto B_y$ for which each `column' is \anoperation{} in $\Theta$. Composition is then given by matrix composition, using coarse-graining:
\[
(N \circ M )(x,z) = \bigovee_{y \in Y} 
\Big(
\vcenter{ \hbox{ 
\begin{tikzpicture}[baseline=(current bounding box.center)]
\node (a) at (0,0) {$A_x$};
\node (b) at (1.5,0) {$B_y$};
\node (c) at (3,0) {$C_z$};
\draw[-{Triangle[scale=1.1,angle'=45, open]}] (a) -- (b) node[midway,above]{\scriptsize $M(x,y)$};
\draw[-{Triangle[scale=1.1,angle'=45, open]}] (b) -- (c) node[midway,above]{\scriptsize $N(y,z)$};
\end{tikzpicture}
}
}
\Big)
\]
for $M \colon \indsys{A_x}_{x \in X} \catopsto \indsys{B_y}_{y \in Y}$, $N \colon \indsys{B_y}_{y \in Y} \catopsto \indsys{C_z}_{z \in Z}$.
When $\Theta$ is a monoidal theory, $\OpCat(\Theta)$ is a symmetric monoidal category under 
$\indsys{A_x}_{x \in X} \otimes \indsys{B_y}_{y \in Y} = \indsys{A_x \otimes B_y}_{(x,y) \in X \times Y}$ and 
\begin{equation*} 
(M \otimes N)((x,w),(y,z)) = M(x,y) \otimes N(w,z) \colon A_x \otimes {B}_{w} \eventto C_y \otimes {D}_{z}
\end{equation*}
for morphisms $M \colon \indsys{A_x}_{x \in X} \catopsto \indsys{C_y}_{y \in Y}$ and $N \colon \indsys{B_{w}}_{w \in W} \catopsto \indsys{{D}_{z}}_{z \in Z}$.

\paragraph{Properties of $\OpCat(\Theta)$} 
We now wish to explore the properties of this category $\OpCat(\Theta)$. We introduce each of the basic categorical notions we require by example; Appendix~\ref{appendix:cattheory} contains more information for those new to these concepts. 


Firstly, causality implies that for every object $A = \indsys{A_x}_{x \in X}$, there is a unique arrow $\unique \colon A \catopsto \indsys{I}$, given by $ \unique(x) = \discard{} \colon A_x \eventto I$. This makes $\indsys{I}$ is a \emph{terminal object} in this category, which we denote by $1$. Similarly, there is trivially a unique arrow from the empty collection of systems $ 0 = \indsys{\text{ }}$ to each object $A$, making it an \emph{initial object}.


Now consider a general object $A = \indsys{A_x}_{x \in X}$ in $\OpCat(\Theta)$. For each $x \in X$ and singleton $\indsys{A_x}$ there is an arrow $\coproj_x \colon \indsys{A_x} \catopsto A$ given by the \operation{} $\{\id{} \colon A_x \eventto A_x \} \cup \{0 \colon A_x \eventto A_y \}_{x \neq y}$. These come with the property that, for any collection of arrows $M_x \colon \indsys{A_x} \catopsto B$, for $x \in X$, there is a unique arrow $M \colon A \catopsto B$ with $M \catopscomp \coproj_x = M_x$, for all $x$. This means precisely that $A$ forms a \emph{coproduct} $A = \coprod_{x \in X} \indsys{A_x}$ of the collection of objects $\indsys{A_x}$, for $x \in X$. Hence the category $\OpCat(\Theta)$ has coproducts $(+,0)$ of all finite collections of objects. 

The coproducts and terminal object are related by the following rule. 
Consider \anoperation{} $\{f_x \colon B \eventto A_x\}_{x \in X} \cup \{e \colon B \eventto I\}$ in $\Theta$, corresponding to an arrow $f \colon \indsys{B} \catopsto A + 1$ in $\OpCat(\Theta)$, where $A = \indsys{A_x}_{x \in X}$. When $\{f_x\}_{x \in X}$ is total, it corresponds to a unique arrow $g \colon \indsys{B} \catopsto A$, with $f$ then equal to $\coproj_1 \catopscomp g$. This occurs precisely when 
the effect $e$ is equal to $0$, i.e.\ when the morphisms $(\unique + \unique) \catopscomp f = \{ \bigovee_{x \in X} \discard{} \eventcomp f_x, e \}$ and $\coproj_1 \catopscomp \unique = \{\discard{}, 0_{B,I} \}$ are equal: 
\[
\begin{tikzcd}[font=\normalsize]
\indsys{B}
\arrow[bend left]{drr}[description, inner sep = 0]{\totalsymbollargearrow}[above]{\unique}
\arrow[bend right]{ddr}[description, inner sep = 0]{\totalsymbollargearrow}[swap]{\{f_x \colon B \eventto A_x\}_{x \in X} \cup \{e \colon B \eventto I\}}
\arrow[dr, dotted, "\exists \ \unique \ \{f_x\}_{x \in X}" description] & & \\
& \indsys{A_x}_{x \in X} 
\arrow{d}[description, inner sep = 0]{\totalsymbollargearrow}[]{\coproj_1}
\arrow{r}[description, inner sep = 0]{\totalsymbollargearrow}[]{\unique}
& \indsys{I} 
\arrow{d}[description, inner sep = 0]{\totalsymbollargearrow}[]{\coproj_1} \\
& \indsys{A_x}_{x \in X} + \indsys{I}
\arrow{r}[description, inner sep = 0]{\totalsymbollargearrow}[swap]{\unique{} + \unique{}}
& \indsys{I, I}
\end{tikzcd}
\]

We can summarise this by saying that the bottom-right square above is a \emph{pullback} in $\OpCat(\Theta)$. 
From now on, we assume a basic familiarity with these categorical notions, as provided by Appendix~\ref{appendix:cattheory}. 

\subsection{Operational categories, in total form}

We now reach our main definition, which abstractly describes the category $\OpCat(\Theta)$ of \operations{} of a (monoidal) OTC $\Theta$. All of the categorical structures here were first identified by Jacobs in \cite{NewDirections2014aJacobs}.

\begin{definition} \label{def:Opcat} An \emph{operational category} is a category $\catB$ with a terminal object $1$ and finite coproducts $(+, 0)$, for which for all objects $A$:
\begin{enumerate}[1)]
\item \label{opcatjointmonic} the morphisms $[\triangleright_1, \coproj_2], [\triangleright_2, \coproj_2] \colon  (A + A) + 1 \to A + 1$  are jointly monic, where we define $\triangleright_i \colon A + A \to A + 1$ by $\triangleright_1 = [\coproj_1, \coproj_2 \circ \unique]$ and $\triangleright_2 = [\coproj_2 \circ \unique, \coproj_1]$;

\item \label{opcatpullweak} the following diagram is a pullback:
\begin{tikzcd}
A \rar{\unique} \dar[swap, "\coproj_1"] & 1 \dar{\coproj_1} \\
A + 1 \rar[swap]{\unique + \unique} & 1 + 1
\end{tikzcd}
\end{enumerate}

\noindent A \emph{monoidal operational category} is in addition symmetric monoidal $(\catB, \otimes, I)$ with the terminal object as its tensor unit $1 = I$, and with the tensor $\otimes$ \emph{distributing over coproducts}, meaning that the canonical maps 
\begin{equation}\label{eq:distributive}
\begin{tikzcd}[column sep = large]
(A \otimes B) + (A \otimes C)  
\arrow[rr, "\lbrack \id{A} \otimes \coproj_1 \CommaBin \  \id{A} \otimes \coproj_2 \rbrack" ]
&&  A \otimes(B + C)
\end{tikzcd},  \qquad
\begin{tikzcd}
0 \rar{\unique} &  A \otimes 0 
\end{tikzcd}
\end{equation}
are isomorphisms.
\end{definition}

We will also sometimes call such a category a (monoidal) operational category \emph{in total form}. 
The `joint monicity' in condition \ref{opcatjointmonic}) means that whenever we have $[\triangleright_i, \coproj_2] \circ f = [\triangleright_i, \coproj_2] \circ g$, for $i = 1 ,2$, then $f = g$. In any distributive monoidal category, the isomorphisms above extend to arbitrary finite coproducts, with $A \otimes (\coprod_{x \in X} B_x) \simeq \coprod_{x \in X} (A \otimes B_x)$. As a special case, writing $n$ for the object $n \cdot 1 = \underbrace{1 + \ldots + 1}_n$, we have $A \otimes n \simeq n \cdot A$. In Appendix~\ref{appendix:propertiesOfOpCats}, we show that the pullback in condition \ref{opcatpullweak}) extends as follows:

\begin{lemma} \label{opcatlemma}
In an operational category, all coprojections $\coproj_1 \colon A \to A + B$ are monic, and diagrams of the following forms are pullbacks:
\[
\begin{tikzcd}
A \rar{f} \arrow[d, swap, "\coproj_1"] & B \dar{\coproj_1} \\
A + C \rar[swap]{f + \id{}} & B + C
\end{tikzcd}
\qquad
\qquad
\qquad
\begin{tikzcd}
0 \arrow[r, "\unique"] \arrow[d, "\unique", swap] & A \arrow[d, "\coproj_1"] \\
B \arrow[r, "\coproj_2", swap] & A + B
\end{tikzcd}
\]
\end{lemma}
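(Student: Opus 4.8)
The plan is to reduce the two pullback squares to a single \emph{generalised pullback lemma} that upgrades condition~\ref{opcatpullweak}) of Definition~\ref{def:Opcat} from the summand $1$ to an arbitrary second summand. Call the composite $(\unique + \unique)\circ g \colon X \to 1 + 1$ the \emph{indicator} of a morphism $g \colon X \to A + C$; I aim to show that, for all $A$ and $C$, the square with top $\unique \colon A \to 1$, left $\coproj_1 \colon A \to A + C$, right $\coproj_1 \colon 1 \to 1 + 1$ and bottom $\unique + \unique \colon A + C \to 1 + 1$ is a pullback — equivalently, that $g$ factors through $\coproj_1 \colon A \to A + C$ exactly when its indicator equals $\coproj_1 \circ \unique$. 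Granting this, both squares of the lemma follow by routine chases, sketched below. Throughout I would use that monomorphisms are stable under pullback and that a left factor of a monomorphism is monic.

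Monicity of coprojections comes first and is independent of condition~\ref{opcatjointmonic}). Since $1$ is terminal, $\coproj_1 \colon 1 \to 1 + 1$ is trivially monic (any parallel pair into $1$ is equal). Condition~\ref{opcatpullweak}) exhibits $\coproj_1 \colon A \to A + 1$ as the pullback of $\coproj_1 \colon 1 \to 1 + 1$ along $\unique + \unique$, so it is monic. For a general second summand $B$ I would use $(\id{A} + \unique)\circ \coproj_1 = \coproj_1 \colon A \to A + 1$, so $\coproj_1 \colon A \to A + B$ is a left factor of a monic and hence monic; the symmetric argument handles $\coproj_2$. This settles the first assertion.

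The disjointness (right) square then has a clean base case: instantiating condition~\ref{opcatpullweak}) at $A = 0$ and using $0 + 1 \cong 1$ turns it into exactly the statement that the pullback of $\coproj_1, \coproj_2 \colon 1 \to 1 + 1$ is $0$. For general $A, B$ I would paste this base case with the generalised pullback along $\unique + \unique \colon A + B \to 1 + 1$: a competitor $X$ with $\coproj_1 u = \coproj_2 v$ has indicator both $\coproj_1 \circ \unique$ and $\coproj_2\circ \unique$, forcing $\coproj_1 \circ \unique = \coproj_2 \circ \unique \colon X \to 1+1$ and hence (base case) a map $X \to 0$, after which one checks the induced cone is unique. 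The left square is similar but easier: the indicator is preserved by $f + \id{C}$ (because $\unique \circ f = \unique$), so a morphism into $A + C$ whose image under $f + \id{C}$ lands in the first summand $B$ already has indicator $\coproj_1\circ\unique$, hence factors through $\coproj_1 \colon A \to A + C$ by the generalised pullback, the factorisation being unique by monicity.

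The main obstacle is the generalised pullback itself, and this is exactly where condition~\ref{opcatjointmonic}) is essential. Given $g \colon X \to A + C$ with indicator $\coproj_1 \circ \unique$, condition~\ref{opcatpullweak}) applied to the \emph{collapse} $(\id{A} + \unique)\circ g \colon X \to A + 1$ produces a candidate $h \colon X \to A$ with $\coproj_1 h = (\id{A} + \unique)\circ g$; the difficulty is to promote this to $\coproj_1 h = g$ in $A + C$, i.e.\ to show $g$ carries no hidden content in the summand $C$. The key is that the two collapses $\id{A} + \unique \colon A + C \to A + 1$ and $\unique + \id{C} \colon A + C \to 1 + C$ are \emph{jointly monic}: one verifies that $g$ and $\coproj_1 h$ agree after both collapses, whence joint monicity forces $g = \coproj_1 h$. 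In the diagonal case $C = A$ this joint monicity is precisely condition~\ref{opcatjointmonic}) restricted along the (now monic) coprojection $\coproj_1 \colon A + A \to (A+A)+1$, since $\triangleright_1 = \id{A} + \unique$ and $\triangleright_2$ agrees with $\unique + \id{A}$ up to the swap isomorphism — the partial projections $\triangleright_i$ are built exactly so that condition~\ref{opcatjointmonic}) encodes this cancellation. Extending the joint monicity of the collapses from the diagonal to an arbitrary second summand $C$, together with the strictness of $0$ needed for the disjointness square, is the delicate combinatorial heart of the argument; I expect this coupling of the two summands — rather than any single diagram chase — to be where the real work lies, with condition~\ref{opcatjointmonic}) supplying the cancellation and condition~\ref{opcatpullweak}) supplying the factorisations.
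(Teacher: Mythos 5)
Your monicity argument is fine and matches the paper's. The problem is the ``generalised pullback lemma'' on which everything else rests: the claim that the square with left leg $\coproj_1 \colon A \to A + C$, top $\unique \colon A \to 1$, right leg $\coproj_1 \colon 1 \to 1+1$ and bottom $\unique + \unique$ is a pullback for \emph{arbitrary} $C$ is exactly diagram \eqref{eq:posOpCatPullback} of Section~\ref{sec:AdditionalAssumptions}, which the paper shows is equivalent to the Positivity Axiom~\ref{axiom:positivity} --- an \emph{extra} assumption that does not follow from Definition~\ref{def:Opcat}. It fails, for instance, in $\OpCat(\Mat_{\mathbb{Z}})$: the test $\{1\} \cup \{1, -1\}$ on the trivial system has indicator $\coproj_1 \circ \unique$ (since $1 \ovee (-1) = 0$) but does not factor through the first summand. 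Your attempted proof of the lemma breaks at precisely the predictable point: having produced $h$ with $\coproj_1 \circ h = (\id{A} + \unique) \circ g$, you assert that $g$ and $\coproj_1 \circ h$ ``agree after both collapses'', but agreement after the second collapse $\unique + \id{C}$ is not implied by the indicator condition --- it would require knowing that the $C$-component of $g$ is zero, which is the very thing positivity is needed for. So the reduction of both squares to indicator data discards essential information and cannot work.

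The lemma is nevertheless true, and the repair is to use the \emph{full} hypothesis of the left-hand square rather than only its indicator. Given $(f + \id{}) \circ h = \coproj_1 \circ g$, one reads off directly in $\Partial(\catB)$ that $\triangleright_2 \partialcomp h = \triangleright_2 \partialcomp (f + \id{}) \partialcomp h = \triangleright_2 \partialcomp \coproj_1 \partialcomp g = 0$ on the nose --- not merely that it discards to $0$ --- while condition~\ref{opcatpullweak}) applied to the collapse $(\id{} + \unique)\circ h$ gives $r$ with $\triangleright_1 \partialcomp h = r = \triangleright_1 \partialcomp \coproj_1 \partialcomp r$. Joint monicity of $\triangleright_1, \triangleright_2$ on the mixed coproduct $A + C$ (Lemma~\ref{lemma:opcatpartialjointmon}, the extension of condition~\ref{opcatjointmonic}) from $A+A$ to $A+C$ that you correctly identify as necessary) then yields $h = \coproj_1 \circ r$, and monicity of $\coproj_1$ gives $g = f \circ r$ and uniqueness. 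The right-hand square needs no separate pasting argument: it is the left-hand square instantiated at $f = \unique \colon 0 \to A$ and $C = B$, transported along the isomorphism $0 + B \simeq B$ under which $\unique + \id{}$ becomes $\coproj_2$. In short, you located the correct cancellation tool but fed it the wrong data, and the intermediate lemma you would need to make the indicator reduction work is genuinely false without positivity.
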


\begin{keyexample*} As outlined above, for any (monoidal) OTC $\Theta$ the category $\OpCat(\Theta)$ is a (monoidal) operational category. In the monoidal case, by definition we have $\indsys{A} \otimes \indsys{B,C} = \indsys{A \otimes B, A \otimes C}$, from which distributivity follows. We have only yet to discuss the joint monicity condition \ref{opcatjointmonic}). To understand it, we will need to consider how $\OpCat(\Theta)$ in fact also captures the partial \operations{} of $\Theta$, to which we now turn.
\end{keyexample*}

\subsection{Operational categories, in partial form}

When working with an operational theory $\Theta$, it is helpful to consider not only \operations{}, but also more general partial \operations{}, including individual events. We define its category $\ParOpCat(\Theta)$ of partial \operations{}, with arrows denoted $\catpopsto$, just like $\OpCat(\Theta)$, but with morphisms $M \colon \indsys{A_x}_{x \in X} \catpopsto \indsys{B_y}_{y \in Y}$ now given more generally by $X$-indexed collections of partial \operations{}.

\paragraph{Properties of $\ParOpCat(\Theta)$} 
The empty collection $ 0 = \indsys{\text{ }}$ of systems is now a \emph{zero object}, meaning it is both initial and terminal, with the unique arrow $0 \colon { \indsys{A_x}_{x \in X} \catpopsto 0 \catpopsto \indsys{B_y}_{y \in Y} }$ between any two objects given by the zero matrix. $\ParOpCat(\Theta)$ has finite coproducts just like those of $\OpCat(\Theta)$, i.e.\ we have $\indsys{A_x}_{x \in X} = \coprod_{x \in X} \indsys{A_x}$, so that each partial \operation{} $\{f_x \colon A \eventto B_x\}_{x \in X}$ in $\Theta$ corresponds to a morphism $f \colon \indsys{A} \catpopsto \coprod_{x \in X} \indsys{B_x}$. Each individual event $f_y$ is then described categorically as the composite:  
\[
\begin{tikzcd}
\indsys{A} \tikzcdpartialright{f} & \coprod_{x \in X} \indsys{B_x} \tikzcdpartialright{\triangleright_y} & \indsys{B_y}
\end{tikzcd}
\]
where in any category with coproducts and a zero object we define `projections' ${\triangleright_y \colon \coprod_{x \in X} A_x \to A_y}$ by:
\begin{equation} \label{eq:projectionarrows_partial}
\triangleright_y \circ \coproj_x \quad = \quad  
\left\{\begin{array}{ll}
\id{} & \mbox{if } x = y \\
0 & \mbox{if } x \neq y
\end{array}\right.
\end{equation}
\noindent
(via $A + 0 \simeq A$, this coincides with our earlier definition of $\triangleright_i$ in $\catB$).
Since any partial \operation{} $f \colon \indsys{A} \catpopsto \coprod_{x \in X} \indsys{B_x}$ is determined entirely by its collection of events, together the maps $\triangleright_x$ are jointly monic. 

Finally, causality provides the extra structure of a family of `discarding' morphisms $\discard{A} \colon A \catpopsto I$, given on an object $ A = \indsys{A_x}_{x \in X}$ as before by $\discard{A}(x) = \discard{} \colon A_x \eventto I$. In any category $\catC$ with such specified morphisms, we will call an arrow $f \colon A \to B$ \emph{total} when it satisfies $\discard{B} \circ f = \discard{A}$, writing $\catC_{\total}$ for the subcategory of total arrows. By Lemma~\ref{lemma:OTCConsequences}, a partial \operation{} is indeed total in $\Theta$ iff it is total in this sense. We summarise these properties with our next main definition.

\begin{definition} \label{def:opcatpartial} An \emph{operational category in partial form}  $(\catC, I, \discard{})$ is a given by category $\catC$ with finite coproducts and a zero object $(+,0)$, together with a specified object $I$ and family of arrows $\discard{A} \colon A \to I$, such that for all objects $A$, $B$:
\begin{enumerate}[1)]
\item \label{paropcatatdiscardConds} $\discard{I} = \id{I}$ and $\discard{A + B} = [\discard{A}, \discard{B}] \colon A + B \to I$;
\item \label{paropcatcatJM} the maps $\triangleright_1, \triangleright_2 \colon A + A \to A$ are jointly monic;

\item \label{paropcatuniquetotal} for every $f \colon A \to B$ there is a unique total $g \colon A \to B + I$ with $f = \triangleright_1 \circ g$. 
\end{enumerate}

\noindent A \emph{monoidal operational category in partial form} is in addition symmetric monoidal $(\catC, \otimes, I)$, with the chosen object $I$ forming the tensor unit, such that:
\begin{enumerate}[1)]
\setcounter{enumi}{3}
\item \label{monoidalparopcatdistrib} the tensor $\otimes$ distributes over the coproducts;

\item \label{monoidalparopcatdiscard} $\discard{A \otimes B} = \lambda_I \circ (\discard{A} \otimes \discard{B}) \colon A \otimes B \to I$, where $\lambda_I \colon I \otimes I \to I$ is the coherence isomorphism. As before, this may be depicted:  
\[
\begin{pic}[scale=0.5]
\node[ground] (g) at (0,0) {};
\draw (g.south) to +(0,-0.75) node[below]{$A \otimes B$}; 
\end{pic}
=
\begin{pic}[scale=0.5]
\node[ground] (g1) at (-0.5,0){};
\node[ground] (g2) at (0.5,0){};
\draw (g1.south) to +(0,-0.75) node[below]{$A$}; 
\draw (g2.south) to +(0,-0.75) node[below]{$B$}; 
\end{pic}
\]
\end{enumerate}
\end{definition}

The stronger joint monicity condition we mentioned for $\ParOpCat(\Theta)$ holds more generally: 

\begin{lemma} \label{lemma:opcatpartialjointmon} \cite[Lemma~5]{EffectusIntro} In any operational category in partial form $\catC$, for any finite coproduct the projections $\triangleright_y \colon \coprod_{x \in X} A_x \to A_y$ are jointly monic.
\end{lemma}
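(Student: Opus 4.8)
The plan is to induct on the cardinality of the index set $X$, reducing the claim first to binary coproducts and then reducing the binary case to the axiomatic joint monicity of $\triangleright_1, \triangleright_2 \colon C + C \to C$ (condition \ref{paropcatcatJM}) applied to a cleverly chosen object $C$. The two base cases are immediate: when $X = \emptyset$ the coproduct is the zero object $0$, and since $0$ is terminal any parallel pair into it agrees, so the empty family of projections is vacuously jointly monic; when $|X| = 1$ the single projection is $\id{}$.

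For the inductive step I fix some $x_0 \in X$ and use the associativity isomorphism $\coprod_{x \in X} A_x \cong A_{x_0} + \coprod_{x \neq x_0} A_x$. A routine check on coprojections shows that $\triangleright_{x_0}$ factors through the binary projection onto $A_{x_0}$, while for $x \neq x_0$ the projection $\triangleright_x$ factors as the $(|X|-1)$-ary projection $\triangleright_x$ of $\coprod_{x \neq x_0} A_x$ postcomposed onto the binary projection onto $\coprod_{x \neq x_0} A_x$. Hence if $f, g$ are equalised by all the $\triangleright_x$, the inductive hypothesis forces them to agree after both binary projections, and the binary case then gives $f = g$.

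The heart of the argument is therefore the binary case: $\triangleright_1 \colon A + B \to A$ and $\triangleright_2 \colon A + B \to B$ are jointly monic even when $A \neq B$, whereas the axiom only supplies this for equal summands. Set $C = A + B$ and consider the coproduct of the two coprojections $\coproj_1 \colon A \to C$, $\coproj_2 \colon B \to C$, namely
\[
\Psi \;=\; \coproj_1 + \coproj_2 \colon A + B \longrightarrow C + C.
\]
A short computation with cotupling gives, for $i = 1, 2$, that $\triangleright_i \circ \Psi$ (here $\triangleright_i$ is the projection of $C + C$) equals $\coproj_i \circ \triangleright_i$, i.e. the projection of $A + B$ followed by the coprojection into $C$. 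Consequently, any $f, g \colon D \to A + B$ equalised by both $\triangleright_1, \triangleright_2$ on $A + B$ are equalised by both projections of $C + C$, so the axiom yields $\Psi \circ f = \Psi \circ g$.

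It then remains to show $\Psi$ is monic, and this is the step to get right. The key observation is that $\Psi$ is actually a \emph{split} mono: composing with the codiagonal $\triangledown = [\id{}, \id{}] \colon C + C \to C$ gives $\triangledown \circ \Psi = [\coproj_1, \coproj_2] = \id{}$, using only $\triangledown \circ \coproj_i = \id{}$ together with the definition of $\Psi$. (Monicity of the individual coprojections, should it be wanted, is likewise immediate, since $\triangleright_i \circ \coproj_i = \id{}$ exhibits each $\coproj_i$ as split monic.) Thus $\Psi \circ f = \Psi \circ g$ forces $f = g$, completing the binary case and hence the induction. I do not anticipate a genuine obstacle: the only non-formal ingredient is guessing the auxiliary map $\Psi$ and recognising the codiagonal as its retraction, after which every verification is a mechanical coproduct calculation.
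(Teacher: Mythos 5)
Your proof is correct; the paper itself gives no argument here but defers to the cited reference \cite[Lemma~5]{EffectusIntro}, and your reduction of the binary case to axiom \ref{paropcatcatJM}) via the split monomorphism $\coproj_1 + \coproj_2 \colon A + B \to (A+B) + (A+B)$ (with the codiagonal as retraction) is essentially the standard argument given there, with the extension to arbitrary finite coproducts by induction being routine as you say.
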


\begin{keyexample*} For any (monoidal) OTC $\Theta$, $\catC = \ParOpCat(\Theta)$ is a (monoidal) operational category in partial form. We have only to discuss requirements \ref{paropcatuniquetotal}) and \ref{monoidalparopcatdiscard}), which were shown in Lemma~\ref{lemma:OTCConsequences}. 
\end{keyexample*}

\subsection{Equivalence of total and partial forms} \label{subsec:totparequiv}
 
In fact, the categories $\OpCat(\Theta)$ and $\ParOpCat(\Theta)$ may each be defined in terms of the other, and we will see further that this extends to more general operational categories, explaining our `partial form' terminology.

First, let's consider how $\OpCat(\Theta)$ in fact encodes the partial \operations{} of $\Theta$, including its individual events. We saw in Lemma~\ref{lemma:OTCConsequences} that any partial \operation{} $\{f_x \colon A \eventto B_x\}_{x \in X}$ may be equated with the \operation{} $\{f_x\}_{x \in X} \cup \{ e \colon A \eventto I\}$, where $e = (\bigovee_{x \in X} \discard{} \eventcomp f_x)^{\bot}$. Intuitively, we perform some \operation{} containing $\{f_x\}_{x \in X}$ and if none of the outcomes $x \in X$ are obtained, discard the system. Hence partial \operations{} $\indsys{A} \catpopsto \indsys{B_x}_{x \in X}$ may be identified with arrows $\indsys{A} \catopsto \indsys{B_x}_{x \in X} + 1$ in $\OpCat(\Theta)$. 

\paragraph{The category $\Partial(\catB)$} This situation of a `partial' category associated to a given `total' category $\catB$ has been studied already by Cho~\cite{Partial2015Cho} and Jacobs et al.\ \cite{EffectusIntro} in the context of effectus theory (see Section~\ref{sec:EffectusTheory}), and we borrow their approach here. For any category $\catB$ with finite coproducts $(+,0)$ and a terminal object $1$, by a \emph{partial} arrow $f \colon A \partialto B$ we mean an arrow $f \colon A \to B + 1$ in $\catB$. These partial arrows form a category $\Partial(\catB)$ under composition:

\begin{equation*} 
\big(
\begin{tikzcd}
A \tikzcdpartialright{f} & B \tikzcdpartialright{g} & C
\end{tikzcd}
\big)
=
\big(
\begin{tikzcd}
A \rar{f} & B + 1 \rar{[g,\coproj_2]} & C + 1
\end{tikzcd}
\big)
\end{equation*}
\noindent which we denote by $g \partialcomp f$, with the identity $\id{A}  \colon A \partialto A$ given by $\coproj_1 \colon A \to A + 1$ in $\catB$. Abstractly, $\Partial(\catB)$ is described as the \emph{Kleisli category} of the \emph{lift monad} $(-) + 1$ on $\catB$. In the case $\catB = \OpCat(\Theta)$, $\Partial(\catB)$ is indeed isomorphic to $\ParOpCat(\Theta)$, as indicated above. 

Now certainly any \operation{} of $\Theta$ in particular forms a partial \operation{}. Categorically, there is an identity-on-objects functor $\totaspar{-} \colon \cat B \to \Partial(\catB)$ given by:
\begin{equation*} 
\big(
\begin{tikzcd}
A \tikzcdpartialright{ \totaspar{f} } & B 
\end{tikzcd}
\big)
=
\big(
\begin{tikzcd}
 A \rar{f} & B \rar{\coproj_1} &  B + 1
\end{tikzcd}
\big)
\end{equation*}

\noindent The category $\Partial(\catB)$ inherits nice properties from $\catB$ in general:

\begin{itemize}
\item the terminal object $1$ from $\catB$ provides a distinguished object $I$ of $\Partial(\catB)$ and family of arrows $\discard{A} \colon A \partialto I$ given by $\coproj_1 \circ \unique \colon A \to 1 + 1$ in $\catB$. 
\item the initial object $0$ of $\catB$ forms a zero object in $\Partial(\catB)$, with each zero arrow $0_{A,B} \colon A \partialto B $ given by the arrow $\coproj_2 \circ \unique \colon A \to B + 1$ of $\catB$. 
\item  for any pair of objects $A$, $B$ of $\catB$, the coproduct $A + B$ in $\catB$ is again a coproduct in $\Partial(\catB)$, with coprojections $\totaspar{\coproj_1} \colon A \partialto A + B$ and $\totaspar{\coproj_2} \colon B \partialto A + B$. Hence $\Partial(\catB)$ has finite coproducts also. 
\item  when $\catB$ is symmetric monoidal with the $\otimes$ distributing over coproducts, so is $\Partial(\catB)$. The tensor $A \otimes B$ on objects is the same as in $\catB$, and satisfies $\totaspar{f} \otimes \totaspar{g} = \totaspar{f \otimes g}$, for all $f$, $g$ in $\catB$, with the coherence isomorphisms all coming from $\catB$. 
\end{itemize}  

We can at last understand the condition~\ref{opcatjointmonic}) in the definition of an operational category $\catB$: it simply asserts the joint monicity of the maps $\triangleright_i: A + A \partialto A$ in the category $\Partial(\catB)$, and  so corresponds to the fact that (partial) \operations{} in $\Theta$ are determined by their individual events. In fact, the other conditions of Definition~\ref{def:opcatpartial} also follow:

\begin{theorem} \label{thm:opcattopartialopcat}
Let $\catB$ be a (monoidal) operational category. Then $\Partial(\catB)$ is a (monoidal) operational category in partial form. 
\end{theorem}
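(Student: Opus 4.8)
The plan is to verify the five defining conditions of Definition~\ref{def:opcatpartial} for $\Partial(\catB)$, drawing on the fact that the bullet points preceding the theorem already supply the ambient structure: $\Partial(\catB)$ has finite coproducts and a zero object $(+,0)$, a distinguished object $I = 1$ with discarding maps $\discard{A} = \coproj_1 \circ \unique$, and, in the monoidal case, a symmetric monoidal structure whose tensor distributes over coproducts and satisfies $\totaspar{f} \otimes \totaspar{g} = \totaspar{f \otimes g}$ with coherence isomorphisms inherited from $\catB$. The last of these is condition~\ref{monoidalparopcatdistrib}), so nothing further is needed there.

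Conditions~\ref{paropcatatdiscardConds}), \ref{paropcatcatJM}) and \ref{monoidalparopcatdiscard}) are then quick bookkeeping. For~\ref{paropcatatdiscardConds}), since $\unique_1 = \id{1}$ we get $\discard{I} = \coproj_1 \circ \unique_1 = \coproj_1$, which is exactly $\id{I}$ in $\Partial(\catB)$; and $\discard{A+B} = \coproj_1 \circ [\unique_A, \unique_B] = [\coproj_1 \circ \unique_A, \coproj_1 \circ \unique_B]$, which is precisely the cotuple $[\discard{A}, \discard{B}]$ in $\Partial(\catB)$, once one notes that the cotuple of partial arrows is computed as the ordinary cotuple in $\catB$. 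For~\ref{paropcatcatJM}), unfolding partial composition shows that $\triangleright_i \colon A + A \partialto A$ has underlying $\catB$-arrow $[\coproj_1, \coproj_2 \circ \unique]$, and that joint monicity of the $\triangleright_i$ in $\Partial(\catB)$ is the assertion that $[\triangleright_1, \coproj_2], [\triangleright_2, \coproj_2]$ are jointly monic in $\catB$, i.e.\ a verbatim restatement of condition~\ref{opcatjointmonic}) of Definition~\ref{def:Opcat}. For~\ref{monoidalparopcatdiscard}), writing $\discard{A} = \totaspar{\unique_A}$ and using $\totaspar{f}\otimes\totaspar{g} = \totaspar{f\otimes g}$ reduces the claim to $\unique_{A \otimes B} = \lambda_I \circ (\unique_A \otimes \unique_B)$ in $\catB$, which holds by uniqueness of maps into the terminal object $1 = I$.

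The substantive step is condition~\ref{paropcatuniquetotal}), and I would isolate it in a characterization of total morphisms: a partial arrow $g \colon A \partialto C$, i.e.\ an arrow $g \colon A \to C + 1$ in $\catB$, is total iff it factors (necessarily uniquely, as $\coproj_1$ is monic) as $g = \coproj_1 \circ h = \totaspar{h}$ for some $h \colon A \to C$. The forward direction is where the operational-category axioms enter: unwinding $\discard{C} \partialcomp g = \discard{A}$ presents $\discard{C}$ as $\unique_C + \id{1}$ and yields $(\unique_C + \id{1}) \circ g = \coproj_1 \circ \unique_A$; feeding this cone into the pullback of condition~\ref{opcatpullweak}), in the extended form of Lemma~\ref{opcatlemma}, produces the required factorization $h$. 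The converse is a direct computation using $\unique_C \circ h = \unique_A$.

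Given this characterization, condition~\ref{paropcatuniquetotal}) follows cleanly: the projection $\triangleright_1 \colon B + I \partialto B$ has underlying $\catB$-arrow $\id{B+1}$ (it routes the $I = 1$ summand into the `undefined' slot), so for a total map $g = \totaspar{h}$ one computes $\triangleright_1 \partialcomp g = h$. Hence the unique total $g$ with $\triangleright_1 \partialcomp g = f$ is $g = \totaspar{f}$, regarding $f$ as an honest $\catB$-map $A \to B + I$, and uniqueness is immediate from the characterization. I expect the main obstacle to be exactly this condition~\ref{paropcatuniquetotal}): the care lies in translating `total' and the partial composite $\triangleright_1 \partialcomp g$ back into $\catB$, and in checking that discarding genuinely presents $\discard{C}$ as $\unique_C + \id{1}$ so that the pullback axiom applies on the nose. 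Everything else is comparison of definitions.
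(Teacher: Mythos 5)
Your proposal is correct and follows essentially the same route as the paper: the paper's proof simply cites the bullet-pointed structure of $\Partial(\catB)$ established beforehand and notes that condition~\ref{paropcatuniquetotal}) follows from the defining pullback of an operational category, which is exactly the factorization argument you spell out (your characterization of total arrows as those of the form $\totaspar{h}$, and the computation $\triangleright_1 \partialcomp \totaspar{h} = h$, are the implicit content of that remark). The only difference is that you supply the routine verifications the paper leaves to the reader.
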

\begin{proof}
As outlined above. In particular, condition~\ref{paropcatuniquetotal}) follows from the pullback defining an operational category. 
\end{proof}

Conversely, we have seen that $\OpCat(\Theta)$ sits inside $\ParOpCat(\Theta)$ as its subcategory of total morphisms. More generally, we have the following.

\begin{theorem} \label{thm:paropcatToOpCat}
Let $\catC$ be a (monoidal) operational category in partial form. Then $\catC_{\total}$ is a (monoidal) operational category. 
\end{theorem}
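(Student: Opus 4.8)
The plan is to show that $\catC_{\total}$ inherits all the required structure from $\catC$, and then to transport the partial-form axioms of $\catC$ into the total-form axioms of Definition~\ref{def:Opcat}. First I would verify that $\catC_{\total}$ is a subcategory: identities are total, and totals compose. It then has a terminal object $1 = I$, since each $\discard{A}$ is total (as $\discard{I} \circ \discard{A} = \discard{A}$ using $\discard{I} = \id{I}$, condition~\ref{paropcatatdiscardConds})) and is the unique total arrow $A \to I$ (any total $f \colon A \to I$ satisfies $f = \discard{I} \circ f = \discard{A}$). The zero object $0$ of $\catC$ is initial in $\catC_{\total}$, with totality of the unique arrow $0 \to A$ automatic. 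The coproduct $A + B$ of $\catC$ remains a coproduct in $\catC_{\total}$: the coprojections are total because $\discard{A+B} \circ \coproj_1 = \discard{A}$ by condition~\ref{paropcatatdiscardConds}), and a cotuple of totals is total since $\discard{C} \circ [f,g] = [\discard{C} \circ f, \discard{C} \circ g] = [\discard{A}, \discard{B}] = \discard{A+B}$, with uniqueness of mediators inherited from $\catC$. Hence $\catC_{\total}$ is a category with terminal object $1 = I$ and finite coproducts $(+,0)$.

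For condition~\ref{opcatjointmonic}) I would use the partiality correspondence. Axiom~\ref{paropcatuniquetotal}) gives, for each $X$ and $A$, a bijection between $\catC(X, A)$ and the total arrows $X \to A + I$, i.e.\ the arrows of $\Partial(\catC_{\total})$; as in effectus theory \cite{Partial2015Cho} this assembles into an isomorphism $\catC \cong \Partial(\catC_{\total})$ identifying each partial-form projection $\triangleright_i \colon A + A \to A$ with the $\catC_{\total}$-arrow $[\coproj_1, \coproj_2 \circ \unique]$, resp.\ $[\coproj_2 \circ \unique, \coproj_1]$, that is, with the total-form $\triangleright_i \colon A + A \to A + 1$. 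Under this identification, post-composing by $\triangleright_i$ in $\catC$ corresponds exactly to post-composing by $[\triangleright_i, \coproj_2]$ in $\catC_{\total}$, so the joint monicity of $\triangleright_1, \triangleright_2$ in $\catC$ (Lemma~\ref{lemma:opcatpartialjointmon}) says precisely that $[\triangleright_1, \coproj_2], [\triangleright_2, \coproj_2] \colon (A+A)+1 \to A+1$ are jointly monic in $\catC_{\total}$, giving condition~\ref{opcatjointmonic}).

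The pullback condition~\ref{opcatpullweak}) is the crux and where I expect the real work to lie. A cone over the cospan consists of total arrows $u \colon X \to A + I$ and $v \colon X \to I$ with $(\unique + \unique) \circ u = \coproj_1 \circ v$; since $I$ is terminal in $\catC_{\total}$ we have $v = \discard{X}$, and the leg $\unique \circ h = v$ on a mediating total $h$ is automatic. So it suffices to show that every total $u$ with $(\unique + \unique) \circ u = \coproj_1 \circ \discard{X}$ factors uniquely as $u = \coproj_1 \circ h$ with $h$ total. I would set $h = \triangleright_1 \circ u$ using the partial-form projection $\triangleright_1 \colon A + I \to A$; applying the projection $\triangleright_1 \colon I + I \to I$ to the cone equation collapses it to $\discard{A} \circ h = \discard{X}$, so $h$ is total. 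Then $u$ and $\coproj_1 \circ h$ are both total arrows $X \to A + I$ with $\triangleright_1 \circ (-) = h$, so they agree by the uniqueness in Axiom~\ref{paropcatuniquetotal}); uniqueness of $h$ follows from $\triangleright_1 \circ \coproj_1 = \id{}$. The delicate point throughout is keeping the two $\triangleright$-conventions (the partial-form $A+A \to A$ versus the total-form $A+A \to A+1$) straight and producing the totality of the factoring map from the cone equation.

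In the monoidal case I would finally check that the symmetric monoidal structure restricts to $\catC_{\total}$. The tensor of total arrows is total by condition~\ref{monoidalparopcatdiscard}), since $\discard{A' \otimes B'} \circ (f \otimes g) = \lambda_I \circ (\discard{A'} \circ f \otimes \discard{B'} \circ g) = \lambda_I \circ (\discard{A} \otimes \discard{B}) = \discard{A \otimes B}$; and the coherence isomorphisms are total, e.g.\ $\discard{A} \circ \lambda_A = \lambda_I \circ (\id{I} \otimes \discard{A}) = \discard{I \otimes A}$ by naturality of $\lambda$ and condition~\ref{monoidalparopcatdiscard}), just as in Lemma~\ref{lemma:OTCConsequences}. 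Thus $(\catC_{\total}, \otimes, I)$ is symmetric monoidal with unit $1 = I$. The canonical distributivity maps of \eqref{eq:distributive} are total, being built from identities, coprojections, tensors and cotuples of total arrows, and any isomorphism of $\catC$ with total forward leg has total inverse: from $\discard{Y} \circ \phi = \discard{X}$ one gets $\discard{X} \circ \phi^{-1} = \discard{Y}$. Hence, using condition~\ref{monoidalparopcatdistrib}) for $\catC$, these maps remain isomorphisms in $\catC_{\total}$, establishing distributivity and completing the monoidal case.
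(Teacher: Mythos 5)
Your proposal is correct and takes essentially the same route as the paper: restrict the terminal object, coproducts, and (in the monoidal case) the tensor and its coherence/distributivity isomorphisms to $\catC_{\total}$ using conditions~\ref{paropcatatdiscardConds}) and \ref{monoidalparopcatdiscard}), with the joint-monicity and pullback axioms then following from conditions~\ref{paropcatcatJM}) and \ref{paropcatuniquetotal}) via the identification of $\catC$ with $\Partial(\catC_{\total})$. The paper dismisses these last two as ``straightforward to verify''; your write-up simply supplies the details, and your verification of the pullback (totality of $h = \triangleright_1 \circ u$ from the cone equation, then uniqueness from condition~\ref{paropcatuniquetotal})) is exactly the intended argument.
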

\begin{proof}
By construction, $\catC_{\total}$ has $I$ as a terminal object. Coproducts in $\catC$ restrict to $\catC_{\total}$, since the coprojections $\coproj_i$ are total by condition~\ref{paropcatatdiscardConds}). 
In the monoidal case, thanks to condition~\ref{monoidalparopcatdiscard}), the symmetric monoidal structure on $\catC$ restricts to $\catC_{\total}$ as expected, with all of the coherence and distributivity isomorphisms being total just as in Lemma~\ref{lemma:OTCConsequences}. The remaining axioms are straightforward to verify.
\end{proof}

Finally, we note that passing back and forth between these notions is indeed an equivalence.

\begin{theorem} \label{thm:opparopcatEquivalence} Let $\catB$, $\catC$ be (monoidal) operational categories in total and partial form, respectively. Then there are (monoidal) isomorphisms $\catB \simeq \Partial(\catB)_\total$ and $\catC \simeq \Partial{(\catC_{\total})}$. 
\end{theorem}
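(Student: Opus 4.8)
The plan is to construct both isomorphisms as identity-on-objects functors, reading each off from a single defining condition. For $\catB \simeq \Partial(\catB)_\total$ I will corestrict the canonical functor $\totaspar{-}\colon \catB \to \Partial(\catB)$ to the total maps, and its surjectivity on morphisms will be exactly the pullback condition~\ref{opcatpullweak}). For $\catC \simeq \Partial(\catC_\total)$ I will send each map to the unique total lift supplied by condition~\ref{paropcatuniquetotal}), where the only real content is functoriality.

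\emph{First isomorphism.} I would first verify that $\totaspar{f} = \coproj_1 \circ f \colon A \to B + 1$ is always total in $\Partial(\catB)$: unwinding the definitions of $\discard{}$ and of composition in $\Partial(\catB)$, the discard of $\totaspar{f}$ is $(\unique + \unique) \circ \coproj_1 \circ f = \coproj_1 \circ \unique_B \circ f = \coproj_1 \circ \unique_A = \discard{A}$, using naturality of the coprojection and terminality of $1$. Hence $\totaspar{-}$ corestricts to an identity-on-objects functor $\catB \to \Partial(\catB)_\total$. It is faithful because $\coproj_1 \colon B \to B + 1$ is monic (Lemma~\ref{opcatlemma}). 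For fullness I take a total arrow $g \colon A \to B + 1$ and observe that its totality says precisely $(\unique + \unique) \circ g = \coproj_1 \circ \unique_A$, which is exactly the statement that $g$ and $\unique_A$ form a cone over the cospan defining the pullback in condition~\ref{opcatpullweak}). The universal property then yields a unique $f \colon A \to B$ with $\coproj_1 \circ f = g$, i.e. $\totaspar{f} = g$, so the corestricted functor is a bijection on hom-sets and thus an isomorphism.

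\emph{Second isomorphism.} Here $\catC_\total$ is a total operational category by Theorem~\ref{thm:paropcatToOpCat}, so $\Partial(\catC_\total)$ is defined, its hom-sets being the total maps $A \to B + I$ of $\catC$. I define $F$ to be the identity on objects and to send $f \colon A \to B$ to the unique total lift $\bar f \colon A \to B + I$ with $\triangleright_1 \circ \bar f = f$ from condition~\ref{paropcatuniquetotal}). Bijectivity on hom-sets is immediate from the uniqueness clause, the inverse sending a total $g$ to $\triangleright_1 \circ g$ (and $\overline{\triangleright_1 \circ g} = g$ again by uniqueness). For identities, $\coproj_1 \colon A \to A + I$ is total by condition~\ref{paropcatatdiscardConds}) and satisfies $\triangleright_1 \circ \coproj_1 = \id{A}$, so $\bar{\id{A}} = \coproj_1$ is the identity of $\Partial(\catC_\total)$. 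For composition I would show the Kleisli composite $[\bar{f'}, \coproj_2] \circ \bar f$ is itself a total lift of $f' \circ f$ and hence equals $\overline{f' \circ f}$ by uniqueness: totality of $[\bar{f'}, \coproj_2]$ follows from condition~\ref{paropcatatdiscardConds}) and totality of $\bar{f'}$, while the lifting identity reduces, via $\triangleright_1 \circ [\bar{f'}, \coproj_2] = [\triangleright_1 \circ \bar{f'}, 0] = [f', 0] = f' \circ \triangleright_1$, to $f' \circ \triangleright_1 \circ \bar f = f' \circ f$.

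\emph{Monoidal case and main obstacle.} In both cases the functor is identity on objects with the tensor unit $I$ preserved on the nose, so once one recalls that $\totaspar{f} \otimes \totaspar{g} = \totaspar{f \otimes g}$ and that the coherence isomorphisms in $\Partial(-)$ are inherited from the total category, the isomorphisms are readily seen to be strict monoidal. I expect the main obstacle to be precisely the functoriality check for the second isomorphism --- keeping straight the several uses of the projections $\triangleright_1$, the coprojections, and the discard maps, and invoking the uniqueness clause of condition~\ref{paropcatuniquetotal}) at the right moment --- together with confirming in the monoidal setting that $\overline{f \otimes g} = \bar f \otimes \bar g$, which will rely on condition~\ref{monoidalparopcatdiscard}) and distributivity (condition~\ref{monoidalparopcatdistrib})).
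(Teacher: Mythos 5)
Your proposal is correct and follows essentially the same route as the paper: the first isomorphism is the corestriction of $\totaspar{-}$, with fullness and faithfulness read off from the defining pullback, and the second is (the inverse of) the functor $\triangleright_1 \circ (-) \colon \Partial(\catC_{\total}) \to \catC$, which is bijective on hom-sets by the unique-total-lift condition. You simply spell out the totality, identity, composition, and monoidal checks that the paper leaves implicit, and all of those verifications are sound.
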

\begin{proof}
The (symmetric monoidal) functor $\totaspar{-} \colon \catB \to \Partial(\catB)_{\total}$ is full and faithful by the defining pullback of operational categories. Conversely, by condition \ref{paropcatuniquetotal}) of Definition~\ref{def:opcatpartial} the (symmetric monoidal) functor $\triangleright \colon \Partial{(\catC_{\total})} \to \catC$ sending each $(f \colon A \to B + I)$ to $(\triangleright_1 \circ f \colon A \to B)$ is also full and faithful. 
\end{proof}

\section{From Operational Categories to Theories} \label{sec:OpCatsToOpTheories}

Let us now make clear how one can use operational categories to study an OTC $\Theta$. We mainly focus on the operational category in partial form $\catC = \ParOpCat(\Theta) \simeq \Partial(\OpCat(\Theta))$, with arrows and composition denoted $\catpopsto$, $\catpopscomp$. For any system $A$ of $\Theta$, let us now simply write $A$ for the corresponding object $\indsys{A}$ of $\catC$, so that a general object $\indsys{A_x}_{x \in X}$ is given by the coproduct $\coprod_{x \in X} A_x$ in $\catC$. 

\paragraph{Events, states and effects} 
Firstly, each event $f \colon A \eventto B$ in $\Theta$ has a corresponding arrow $f \colon A \partialto B$ in $\catC$, including the impossible events $0 \colon A \partialto B$. In particular, states, effects and scalars are given by arrows $\omega \colon I \partialto A$, $e \colon A \partialto I$ and $r \colon I \partialto I$ in $\catC$ respectively. Each effect then also corresponds to an arrow $e: A \catopsto 1 + 1$ in the total category $\catB = \OpCat(\Theta)$, with complementary effect given by $e^{\bot} = [\coproj_2, \coproj_1] \circ e: A \catopsto 1 + 1$.

\paragraph{\Operations{}}
As we've seen, a collection of events $\{f_x \colon A \partialto B_x\}_{x \in X}$ forms a partial \operation{} iff there is an arrow $f \colon A \partialto \coprod_{x \in X} B_x$ in $\catC$ with $\triangleright_x \partialcomp f = f_x$ for all $x \in X$. This \operation{} is total iff $f$ is total in the sense that $\discard{} \partialcomp f = \discard{}$. Conversely, thanks to the joint monicity of the projections $\triangleright_x$, each arrow $f \colon A \partialto \coprod_{x \in X} B_x$ is determined by its collection of events $\{\triangleright_x \partialcomp f \colon A \partialto{} B_x \}_{x \in X}$. In the monoidal case, (partial) \operations{} may be composed spatially $f \otimes g$, as one would expect. 

\paragraph{Control structure}
The ability to form controlled \operations{} is inherent in the coproduct structure of $\catC$. Given any (partial) \operation{} $\{f_i \colon A \catpopsto B_i\}^n_{i=1}$, and for each outcome a (partial) \operation{} $g_i \colon B_i \partialto C_i$, for some object $C_i = \indsys{C_j}_{j \in X_i}$ of $\catC$, the corresponding controlled \operation{} is given by the morphism:
\[
\begin{tikzcd}[column sep = large]
A \tikzcdpartialright{f} & B_1 + \ldots + B_n \tikzcdpartialright{ g_1 + \ldots + g_n } & C_1 + \ldots + C_n
\end{tikzcd}
\]

\paragraph{Coarse-graining} 
The coproducts in $\catC$ also neatly capture the coarse-graining structure of $\Theta$. For any compatible collection of events $\{f_x \colon A \eventto B \}_{x \in X}$, corresponding to some partial arrow $f \colon A \partialto \coprod_{x \in X} B$, their coarse-graining $\bigovee_{x \in X} f_x$ is given by the morphism:
\[
\begin{tikzcd}
A \tikzcdpartialright{f} & \coprod_{x \in X} B \tikzcdpartialright{\triangledown} & B
\end{tikzcd}
\]
where $\triangledown$ is the \emph{codiagonal} map, defined by $\triangledown \partialcomp \coproj_x = \id{B}$ for all $x \in X$. Intuitively, writing $n \cdot A = \coprod^n_{i = 1} A$, we think of each codiagonal $\triangledown \colon n \cdot A \partialto A$ as `deleting' the classical information stored in the coproduct $n \cdot A$. Indeed, whenever $\Theta$ is monoidal, we may see $\triangledown$ as discarding the $n$-bit classical system $n \eqdef n \cdot I$, since we have: 
\vspace{-5pt}
\[
\big(
\begin{tikzcd}
A \otimes n \simeq n \cdot A \tikzcdpartialright{\triangledown} &  A
\end{tikzcd}
\big)
\qquad
=
\qquad
\big( 
\xymatrix{
A \otimes n \ar[r]^-{ \hspace{3pt} \id{} \otimes \hspace{1.4pt} \discard{} \ } & A \otimes I \simeq A
}
\big)
\qquad
=
\qquad
\begin{pic}[scale=0.8]
\node[ground] (g1) at (0.5,0){};
\node[below] (b) at (0,-0.5){A};
\node[above] (c) at (0,0.5){A};
\draw (g1.south) to (0.5,-0.52) node[below]{$n$}; 
\draw (b.north) to (c.south);
\end{pic}
\]
\vspace{-10pt}



\noindent \textbf{Convex structure}. Since we may express all of the basic notions of an OTC categorically, so may we any derived ones, such as (sub)convex combinations of states: 
\[
 \bigovee^n_{i=1} r_i \bullet \omega_i
 =
 \big(
\begin{tikzcd}[column sep = large]
I \tikzcdpartialright{r} & n \cdot I \tikzcdpartialright{[\omega_1, \ldots , \omega_n]} & A
\end{tikzcd}
\big)
\]
or more general events in a monoidal theory: 
\[
\bigovee^n_{i =1} r_i \bullet f_i =
\big(
\begin{tikzcd}[column sep = large]
A \simeq A \otimes I \tikzcdpartialright{\id{} \otimes r} & A \otimes n \simeq n \cdot A \tikzcdpartialright{[f_1, \ldots ,f_n]} & B 
\end{tikzcd}
\big)
\]

\subsection{Defining an OTC from an operational category} \label{sec:OTCFromOPCat}

The above ideas suggest another way of looking at any operational category (in partial form) $\catC$. 
Since the notions of \operation{} and coarse-graining make sense for arbitrary arrows $f \colon A \to B$ in $\catC$, rather than seeing them as partial \operations{} in some OTC $\Theta$, we can alternatively view them as the events of a new OTC extending $\Theta$, in the following way. 

\begin{theorem} \label{thm:ParFormToOTC} Any (monoidal) operational category in partial form $(\catC, I, \discard{})$  defines a (monoidal) OTC, denoted $\OTCFrom{\catC}$, with $\catC$ as its category of events and trivial system $I$, as follows:
\begin{itemize} 
\item  a finite collection of events $\{f \colon A \to B_x\}_{x \in X}$ forms a partial \operation{} iff there exists some $f \colon A \to \coprod_{x \in X} B_x$ in $\catC$ with $\triangleright_x \circ f = f_x$ for all $x \in X$, and this partial \operation{} is total iff $f$ is total in $\catC$. 
\item the coarse-graining of a pair of compatible events $f$, $g \colon A \partialto{} B$ is given by $f \ovee g  = \triangledown \circ h$, where $h \colon A \to B + B$ is the unique arrow with $\triangleright_1 \circ h = f$ and $\triangleright_2 \circ h = g$.
\end{itemize}
\end{theorem}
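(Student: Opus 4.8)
The plan is to exhibit all the data of an OTC and then verify Definition~\ref{def:otc}, i.e.\ Assumptions~\ref{assump:control} through \ref{assump:complementaryeffects} (and Assumption~\ref{assump_composition} in the monoidal case), using only the axioms of an operational category in partial form together with joint monicity (Lemma~\ref{lemma:opcatpartialjointmon}). The data are as declared in the statement: events are the morphisms of $\catC$ with their composition and identities, the trivial system is $I$, the impossible events $0_{A,B}$ are the zero morphisms supplied by the zero object $0$, operations are as specified, and coarse-graining is $f \ovee g = \triangledown \circ h$. First I would check that coarse-graining is well defined: a compatible pair $\{f,g \colon A \to B\}$ is by definition a partial operation, so there is some $h \colon A \to B + B$ with $\triangleright_1 \circ h = f$ and $\triangleright_2 \circ h = g$, and this $h$ is unique by joint monicity; hence $f \ovee g$ is unambiguous, and it extends to the empty and singleton cases via the zero object and $\triangledown \circ \coproj_1 = \id{}$.

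Several assumptions then fall out almost immediately from the defining conditions. Complementary effects (Assumption~\ref{assump:complementaryeffects}) is condition~\ref{paropcatuniquetotal}) of Definition~\ref{def:opcatpartial} specialised to $B = I$: given $e \colon A \to I$, the unique total $g \colon A \to I + I$ with $\triangleright_1 \circ g = e$ yields $e^{\bot} := \triangleright_2 \circ g$, and its uniqueness is exactly the uniqueness of $g$. The same condition, applied to an arbitrary $f \colon A \to B$, produces a total $g \colon A \to B + I$ witnessing that $f$ lies in the operation $\{f, \triangleright_2 \circ g\}$, giving Assumption~\ref{assump:EveryEventSomeOp}. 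Trivial operations (Assumption~\ref{assump:TrivialDeterminism}) reduces to $\discard{A} \circ \id{A} = \discard{A}$, and in the monoidal case to computing $\discard{I} \circ \lambda_I = \lambda_I = \discard{I \otimes I}$ from conditions~\ref{paropcatatdiscardConds}) and \ref{monoidalparopcatdiscard}). Control (Assumption~\ref{assump:control}) is the functoriality of the coproduct: the controlled test is the composite of $f$ with the coproduct $\coprod_x g^x$ of the chosen maps, whose $(x,y)$-projection unwinds to $g^x_y \circ f_x$ via the $\triangleright_y \circ \coproj_x$ relations, and totality is preserved because a composite of total maps is total. In the monoidal case, parallel operations (Assumption~\ref{assump_composition}) is handled by the distributivity isomorphism $(\coprod_x B_x) \otimes (\coprod_y D_y) \simeq \coprod_{x,y} B_x \otimes D_y$ of condition~\ref{monoidalparopcatdistrib}), with totality of $f \otimes g$ following from condition~\ref{monoidalparopcatdiscard}).

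The substantive work lies in the impossible-event closure law and the coarse-graining axioms (Assumption~\ref{assump:coarse_grain}). For the law that $\{f_x\}$ is total iff $\{f_x\} \cup \{0_{A,C}\}$ is, the forward direction post-composes the coproduct witness with $\coproj_1$, while the converse uses joint monicity: a total witness $g$ into $(\coprod_x B_x) + C$ whose $C$-component is $0$ satisfies $g = \coproj_1 \circ r$, where $r$ is its projection onto $\coprod_x B_x$, and $r$ is total by the discard axiom~\ref{paropcatatdiscardConds}). The coarse-graining laws reduce to naturality of the codiagonal $\triangledown$ and the projections $\triangleright_i$ together with the universal property of coproducts. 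Writing $g \ovee h = \triangledown \circ m$ with $m \colon A \to B + B$: commutativity follows from $\triangledown \circ [\coproj_2, \coproj_1] = \triangledown$; the law $(g \ovee h) \circ k = (g \circ k) \ovee (h \circ k)$ is immediate since $m \circ k$ witnesses the right-hand side; the law $f \circ (g \ovee h) = (f \circ g) \ovee (f \circ h)$ follows from the naturality square $f \circ \triangledown = \triangledown \circ (f + f)$; the monoidal law $f \otimes (g \ovee h) = (f \otimes g) \ovee (f \otimes h)$ follows from the distributivity isomorphism of condition~\ref{monoidalparopcatdistrib}); and the closure under merging two outcomes of a test from post-composing a total witness with the total map $\id{} + \triangledown$.

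I expect associativity of $\ovee$ to be the fiddliest point, and the place to concentrate care. Here one must first observe that a compatible triple $\{f,g,h\}$ supplies a single witness $k \colon A \to B + B + B$, and then show that both bracketings equal $\triangledown_3 \circ k$ for the ternary codiagonal $\triangledown_3$, using $\triangledown \circ (\triangledown + \id{}) = \triangledown_3 = \triangledown \circ (\id{} + \triangledown)$ together with the naturality already invoked above. The recurring engine throughout is joint monicity, which both guarantees uniqueness of all coproduct witnesses and lets one verify equalities of maps into a coproduct by comparing projections; none of the individual steps is deep, but keeping the compatibility witnesses coherent across the several axioms is the main bookkeeping burden. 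The construction and these verifications closely mirror the corresponding passage between partial and total categories in effectus theory \cite{Partial2015Cho,EffectusIntro}.
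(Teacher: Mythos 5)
Your proposal is correct and takes essentially the same route as the paper's own (much terser) proof: every verification you describe --- complementary effects and membership in some test via condition~\ref{paropcatuniquetotal}), control via the coproduct structure, zero morphisms as impossible events, the coarse-graining laws via the codiagonal and joint monicity, and the monoidal laws via distributivity and totality of the distributivity isomorphisms --- is exactly what the paper sketches and leaves as ``straightforward to check''. Your extra care over associativity of $\ovee$ and the impossible-event closure law fills in details the paper omits, but introduces no new ideas or deviations.
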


\begin{proof}
We know that for each event $f \colon A \to B$ in $\catC$, there is a (unique) total $g \colon A \to{} B + I$ with $f = \triangleright_1 \circ g$, and so $f$ belongs to some \operation{}, namely $\{f, \triangleright_2 \partialcomp g\}$. 
Complementary effects $e^{\bot}$ are given as above. The control structure on \operations{} also comes from the coproducts of $\catC$ as above, along with the fact that $\discard{A + B} = [\discard{A}, \discard{B}] \colon A + B \to I$. It's straightforward to check that each of the coarse-graining equations are satisfied, with the zero arrows $A \to 0 \to B$ behaving as the impossible events. In particular, in the monoidal case, the law $f \otimes (g \ovee h) = (f \otimes g) \ovee (f \otimes h)$ follows from distributivity of the $\otimes$ over the coproducts. Similarly, \operations{} are preserved by $\otimes$ since the distributivity isomorphisms are total. 
\end{proof}

Hence we may alternatively view any operational category in partial form $\catC$ as the category $\Events_{\Theta}$ of events of an OTC $\Theta = \OTCFrom{\catC}$, with $\catB = \catC_{\total}$ then forming its subcategory $\DetEvents_{\Theta}$ of deterministic events. 

\begin{keyexample*}\label{example:OTCPLus} For any OTC $\Theta$, we define a new OTC $\Theta^{\otcplus}$ by setting $\Theta^{\otcplus} \eqdef \OTCFrom{\ParOpCat(\Theta)}$. Explicitly, as before, systems in $\Theta^+$ are finite indexed collections $\indsys{A_x}_{x \in X}$ of systems of $\Theta$, with events $M  \colon \indsys{A_x}_{x \in X} \eventto \indsys{B_y}_{y \in Y}$ given by matrices of events from $\Theta$ in which each column forms a partial \operation{}. Each partial \operation{} $\{f_x \colon A \eventto B_x\}_{x \in X}$ in $\Theta$ then corresponds to a single event $f \colon \indsys{A} \eventto \indsys{B_x}_{x \in X}$ in $\Theta^{\otcplus}$. 
\end{keyexample*}

\subsection{Direct sum systems} \label{subsec:Direct Sums}
Now, for each operational category in partial form $\catC$, the theory $\OTCFrom{\catC}$ comes with a useful extra property. The coproducts in $\catC$ provide the ability to `add systems together', which we characterise operationally in the following way. 

\begin{definition}
In any OTC $\Theta$, an indexed collection $\{A_x\}_{x \in X}$ of systems has a \emph{direct sum} if there is a system $\bigoplus_{x \in X} A_x$ and \operation{} $\{ \triangleright_y \colon \bigoplus_{x \in X} A_x \eventto A_y\}_{y \in X}$ such that, for each partial \operation{} 
$\{ f_x \colon B \eventto A_x\}_{x \in X}$ there is a unique event $f  \colon B \eventto \bigoplus_{x \in X} A_x$ satisfying $\triangleright_x \eventcomp f = f_x$ for all $x \in X$. We say $\Theta$ \emph{has direct sums} if each finite such collection has a direct sum.
\end{definition}

Our terminology goes back to~\cite{chiribella2010purification}, where the use of direct sum systems is proposed in the context of operational-probabilistic theories. The presence of direct sums allows us to consider (partial) \operations{} $\{f_x \colon A \eventto B_x \}_{x \in X}$ more simply as single events $f \colon A \eventto \bigoplus_{x \in X} B_x$, just as the coproducts in an operational category do. In fact, both concepts are equivalent.


\begin{lemma} \label{OTCPlus_Coproduct_iff} 
For a non-empty finite collection $\{A_x\}_{x \in X}$ of systems, and further system $A$, the following are equivalent:  
\begin{enumerate}[i)]
\item $A$ forms a direct sum $(A = \bigoplus_{x \in X} A_x, \triangleright_x)$; 
\item There is a \operation{} $\{\triangleright_x \colon A \eventto A_x\}_{x \in X}$ and collection of events $\coproj_x \colon A_x \eventto A$ satisfying the equations \eqref{eq:projectionarrows_partial} and $\bigovee_{x \in X} \coproj_x \eventcomp \triangleright_x = \id{A}$;

\item $A$ forms a coproduct $(A = \coprod_{x \in X} A_x, \coproj_x)$ in $\Events_{\Theta}$, for which the events $\triangleright_x \colon A\eventto A_x$ defined by \eqref{eq:projectionarrows_partial} are jointly monic, and form \anoperation{} $\{\triangleright_x\}_{x \in X}$. 
\end{enumerate}
\end{lemma}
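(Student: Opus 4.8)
The plan is to prove the cycle (i) $\Rightarrow$ (ii) $\Rightarrow$ (iii) $\Rightarrow$ (i), treating condition (ii) --- which is precisely the system of `biproduct equations' $\triangleright_y \eventcomp \coproj_x = \id{}$ or $0$ from \eqref{eq:projectionarrows_partial} together with $\bigovee_{x} \coproj_x \eventcomp \triangleright_x = \id{A}$ --- as the pivot from which both the `coproduct' and the `direct sum' universal properties are extracted. The one recurring technical point is that coarse-graining is only partially defined, so before writing any $\bigovee_x h_x$ I must check that $\{h_x\}$ is compatible. I expect this to be the main obstacle, and I would dispatch it once with the following observation, used repeatedly: if $\{a_x \colon A \eventto C\}_{x}$ is a partial \operation{} and $u \colon D \eventto A$ is any event, then $\{a_x \eventcomp u\}_x$ is again a partial \operation{}, with $\bigovee_x (a_x \eventcomp u) = (\bigovee_x a_x) \eventcomp u$; dually, if $\{\triangleright_x \colon A \eventto A_x\}_x$ is \anoperation{} and each $b_x \colon A_x \eventto C$ is an event, then $\{b_x \eventcomp \triangleright_x\}_x$ is a partial \operation{}. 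Both follow by extending the given families to total \operations{} (using Assumption~\ref{assump:EveryEventSomeOp}), applying control (Assumption~\ref{assump:control}) and discarding outcomes, combined with the distribution of $\eventcomp$ over $\ovee$ in Assumption~\ref{assump:coarse_grain}.

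For (i) $\Rightarrow$ (ii), given the direct sum $(A, \triangleright_x)$ I define each $\coproj_x \colon A_x \eventto A$ by the universal property applied to the partial \operation{} on $A_x$ whose $x$-component is $\id{A_x}$ and whose other components are $0$ (this is total, since identities are deterministic by Assumption~\ref{assump:TrivialDeterminism}, padded with impossible events); this yields \eqref{eq:projectionarrows_partial} at once. Setting $h = \bigovee_{x} \coproj_x \eventcomp \triangleright_x$ (well-defined by the remark), left-distribution and \eqref{eq:projectionarrows_partial} give $\triangleright_y \eventcomp h = \bigovee_{x} (\triangleright_y \eventcomp \coproj_x) \eventcomp \triangleright_x = \triangleright_y$ for every $y$; as $\id{A}$ satisfies the same equations relative to the defining \operation{} $\{\triangleright_y\}$, uniqueness forces $h = \id{A}$. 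For (ii) $\Rightarrow$ (iii) the \operation{} $\{\triangleright_x\}$ is given, and joint monicity of the $\triangleright_x$ is immediate: $\triangleright_x \eventcomp u = \triangleright_x \eventcomp v$ for all $x$ yields $u = (\bigovee_x \coproj_x \eventcomp \triangleright_x) \eventcomp u = \bigovee_x \coproj_x \eventcomp (\triangleright_x \eventcomp u) = \bigovee_x \coproj_x \eventcomp (\triangleright_x \eventcomp v) = v$. To see $(A, \coproj_x)$ is a coproduct in $\Events_{\Theta}$, given events $g_x \colon A_x \eventto C$ I set $g = \bigovee_x g_x \eventcomp \triangleright_x$ (compatible by the remark); then \eqref{eq:projectionarrows_partial} gives $g \eventcomp \coproj_y = g_y$, and any $g'$ with $g' \eventcomp \coproj_x = g_x$ for all $x$ satisfies $g' = g' \eventcomp \id{A} = \bigovee_x (g' \eventcomp \coproj_x) \eventcomp \triangleright_x = g$, giving uniqueness.

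Finally, for (iii) $\Rightarrow$ (i), uniqueness in the direct-sum property is exactly the assumed joint monicity of the $\triangleright_x$. For existence, given a partial \operation{} $\{f_x \colon B \eventto A_x\}_x$ I set $f = \bigovee_x \coproj_x \eventcomp f_x$ (compatible by the remark, since $\{f_x\}$ is a partial \operation{}) and compute $\triangleright_y \eventcomp f = \bigovee_x (\triangleright_y \eventcomp \coproj_x) \eventcomp f_x = f_y$ using \eqref{eq:projectionarrows_partial}; as $\{\triangleright_x\}$ is \anoperation{} by hypothesis, this exhibits $A$ with the $\triangleright_x$ as the required direct sum, closing the cycle. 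Note that the nonemptiness of $X$ is used implicitly to keep the relevant \operations{} and coarse-grainings from degenerating; the whole argument is `biproduct'-style bookkeeping once the compatibility observation of the first paragraph is in hand.
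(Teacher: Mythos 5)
Your proposal is correct and follows essentially the same route as the paper: the same construction of $\coproj_x$ from the universal property in (i)~$\Rightarrow$~(ii), the same identities $\bigovee_x \coproj_x \eventcomp \triangleright_x = \id{A}$ and $f = \bigovee_x f_x \eventcomp \triangleright_x$ driving (ii)~$\Rightarrow$~(iii), and the same mediating map $\bigovee_x \coproj_x \eventcomp f_x$ with joint monicity giving uniqueness in (iii)~$\Rightarrow$~(i). The only difference is that you make explicit the compatibility/well-definedness checks (via control and Assumption~\ref{assump:EveryEventSomeOp}) and the joint-monicity verification in (ii)~$\Rightarrow$~(iii), which the paper leaves implicit.
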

\begin{proof}

(i) $\Rightarrow$ (ii): 
We define $\coproj_x \colon A_x \eventto A$ to be the unique event corresponding to the \operation{} 
$\{ \id{A_x} \colon A_x \eventto A_x\} \cup \{0 \colon A_y \eventto A_x \}_{y \neq x}$. Then, using control, the event $\bigovee_{x \in X} \coproj_x \eventcomp \triangleright_x$ is well-defined, and satisfies: 
\[
\triangleright_y \eventcomp (\bigovee_{x \in X} \coproj_x \eventcomp \triangleright_x) = 
\bigovee_{x \in X} (\triangleright_y \eventcomp \coproj_x \eventcomp \triangleright_x )=
\triangleright_x
\]
and so by uniqueness is equal to $\id{A}$. 

(ii) $\Rightarrow$ (iii): 
For any collection of events $f_x \colon A_x \eventto B$, if $f \colon A \eventto B$ satisfies $f \eventcomp \coproj_x = f_x$ for all $x \in X$ then $f = f \eventcomp (\bigovee_{x \in X} \coproj_x \eventcomp \triangleright_x) = \bigovee_{x \in X} f_x \eventcomp \triangleright_x$. Hence this defines the unique such $f$.

(iii) $\Rightarrow$ (i): For any partial \operation{} $\{f_x \colon B \eventto A_x\}_{x \in X}$, the event 
$f = \bigovee_{x \in X}(\coproj_x \eventcomp f_x) \colon B \eventto A$ is well-defined and
satisfies  $\triangleright_x \eventcomp f = f_x$ for all $x \in X$. It is unique by joint monicity of the $\triangleright_x$. \end{proof}

Further, unravelling the definitions gives that an empty direct sum is the same as a terminal object in $\Event_{\Theta}$, which is then a zero object thanks to the family of events $0_{A,B} \colon A \eventto B$.

\begin{lemma} \label{OTCWithDirSums_distrib}
For any monoidal OTC $\Theta$ with direct sums, in $\Events_{\Theta}$ the tensor $\otimes$ distributes over the coproducts, and hence the direct sums. 
\end{lemma}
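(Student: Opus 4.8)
The plan is to reduce everything to Lemma~\ref{OTCPlus_Coproduct_iff}, which characterises a coproduct in $\Events_\Theta$ (equivalently, a direct sum) as a system equipped with projections $\triangleright_x$ and coprojections $\coproj_x$ satisfying the equations~\eqref{eq:projectionarrows_partial} together with $\bigovee_{x} \coproj_x \eventcomp \triangleright_x = \id{}$. Thus, to show that $\otimes$ distributes over a finite coproduct, I would exhibit directly such data on $A \otimes \big(\coprod_{x \in X} B_x\big)$, witnessing it as the coproduct $\coprod_{x \in X}(A \otimes B_x)$, with the resulting comparison map being exactly the canonical distributivity map of~\eqref{eq:distributive}.

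Concretely, for the non-empty case let $\coprod_{x} B_x$ carry projections $\triangleright_x$ and coprojections $\coproj_x$ as in Lemma~\ref{OTCPlus_Coproduct_iff}(iii), and set $\triangleright_x' = \id{A} \otimes \triangleright_x$ and $\coproj_x' = \id{A} \otimes \coproj_x$. First I would check that $\{\triangleright_x'\}_{x \in X}$ is a genuine \operation{}: since $\id{A}$ is deterministic (Assumption~\ref{assump:TrivialDeterminism}) and $\{\triangleright_x\}_{x}$ is a test, this follows from Assumption~\ref{assump_composition}. Functoriality of $\otimes$ then gives $\triangleright_y' \eventcomp \coproj_x' = \id{A} \otimes (\triangleright_y \eventcomp \coproj_x)$, which equals $\id{}$ when $x = y$ and, by the monoidal impossible-event law $\id{A} \otimes 0 = 0$, equals $0$ when $x \neq y$; so~\eqref{eq:projectionarrows_partial} holds. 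The crucial identity is
\[
\bigovee_{x \in X} \coproj_x' \eventcomp \triangleright_x' \;=\; \bigovee_{x \in X} \id{A} \otimes (\coproj_x \eventcomp \triangleright_x) \;=\; \id{A} \otimes \Big( \bigovee_{x \in X} \coproj_x \eventcomp \triangleright_x \Big) \;=\; \id{A} \otimes \id{} \;=\; \id{A \otimes (\coprod_x B_x)},
\]
where the middle step is the monoidal coarse-graining law of Assumption~\ref{assump:coarse_grain}, iterated along the finite coarse-graining $\bigovee$. By Lemma~\ref{OTCPlus_Coproduct_iff}, (ii)$\Rightarrow$(iii), this makes $A \otimes (\coprod_x B_x)$ a coproduct with coprojections $\coproj_x' = \id{A} \otimes \coproj_x$; the unique map out of $\coprod_x (A \otimes B_x)$ sending its coprojections to $\id{A} \otimes \coproj_x$ is then the comparison isomorphism between two coproducts of the same diagram, which is precisely the distributivity map of~\eqref{eq:distributive}.

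For the empty coproduct, the remark preceding the lemma identifies the empty direct sum $0$ as a zero object of $\Events_\Theta$, so $\id{0} = 0_{0,0}$. Tensoring and again invoking $\id{A} \otimes 0 = 0$ gives $\id{A \otimes 0} = \id{A} \otimes \id{0} = \id{A} \otimes 0_{0,0} = 0_{A \otimes 0,\, A \otimes 0}$. Since this identity factors through $0$, the object $A \otimes 0$ is itself a zero object, and hence the canonical map $0 \to A \otimes 0$ is the unique isomorphism of initial objects. Together with the non-empty case this gives both isomorphisms in~\eqref{eq:distributive}, and distributivity over direct sums follows since, by Lemma~\ref{OTCPlus_Coproduct_iff}, direct sums are exactly these coproducts. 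I do not expect a genuine obstacle here: the only non-formal input is that $\id{A} \otimes (-)$ commutes with the finite coarse-graining $\bigovee$ (the monoidal coarse-graining axiom) and annihilates impossible events; the sole point needing care is verifying that all the intermediate tensored morphisms and coarse-grainings in the displayed computation are well-defined, which is guaranteed by Assumptions~\ref{assump_composition} and~\ref{assump:control}.
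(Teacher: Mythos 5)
Your proof is correct and is essentially the paper's argument in different packaging: the paper directly exhibits $\coproj_1 \eventcomp (\id{A} \otimes \triangleright_B) \ovee \coproj_2 \eventcomp (\id{A} \otimes \triangleright_C)$ as a two-sided inverse to the canonical map, whereas you obtain the same inverse by tensoring the projection/coprojection data and invoking Lemma~\ref{OTCPlus_Coproduct_iff} (ii)$\Rightarrow$(iii), whose proof constructs exactly that map. Both arguments hinge on the same key computation $\bigovee_{x} (\id{A} \otimes \coproj_x) \eventcomp (\id{A} \otimes \triangleright_x) = \id{}$ via the monoidal coarse-graining law, and your handling of the empty case ($\id{A \otimes 0} = \id{A} \otimes 0 = 0$) matches the paper's.
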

\begin{proof} 
Using Assumptions~\ref{assump_composition} and \ref{assump:control}, along with the coarse-graining equations, one may verify that the event 
$\coproj_1 \eventcomp (\id{A} \otimes \triangleright_B) \ovee \coproj_2 \eventcomp (\id{A} \otimes \triangleright_C) \colon A \otimes (B + C) \eventto A \otimes B + A \otimes C$ 
is well-defined and inverse to $[\id{A} \otimes \coproj_1, \id{A} \otimes \coproj_2]$. We also have $0 \otimes A \simeq 0$ since $\id{0 \otimes A} = 0 \otimes \id{A} = 0$. 
\end{proof}

\subsection{Equivalence of operational categories and theories with direct sums}

Thanks to the above characterisation of direct sums in terms of coproducts, we have:  

\begin{corollary} \label{cor:OTCFromHasDSums}
For any (monoidal) operational category in partial form $\catC$, the (monoidal) theory $\OTCFrom{\catC}$ has direct sums.
\end{corollary}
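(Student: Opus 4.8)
The plan is to exhibit, for each finite collection of systems in $\OTCFrom{\catC}$, an explicit candidate direct sum built from the coproducts already present in $\catC$, and then to verify it using the characterisation of direct sums given by Lemma~\ref{OTCPlus_Coproduct_iff}. The crucial observation is that, by construction (Theorem~\ref{thm:ParFormToOTC}), the category of events $\Events_{\OTCFrom{\catC}}$ is simply $\catC$ itself, so the full coproduct structure of $\catC$ is directly available as structure on the events of the theory.

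First I would treat the non-empty case. Given a non-empty finite collection $\{A_x\}_{x \in X}$ of objects of $\catC$, I take $A = \coprod_{x \in X} A_x$ to be their coproduct in $\catC$, and claim this is a direct sum in $\OTCFrom{\catC}$. To see this I verify condition~(iii) of Lemma~\ref{OTCPlus_Coproduct_iff}: the object $A$ is a coproduct $(A = \coprod_{x \in X} A_x, \coproj_x)$ in $\Events_{\OTCFrom{\catC}} = \catC$ by construction; the projections $\triangleright_x$ defined by~\eqref{eq:projectionarrows_partial} are jointly monic by Lemma~\ref{lemma:opcatpartialjointmon}; and they form a genuine (total) \operation{} $\{\triangleright_x\}_{x \in X}$, since by the description of operations in Theorem~\ref{thm:ParFormToOTC} this collection is a total \operation{} precisely when the mediating arrow $g$ with $\triangleright_x \circ g = \triangleright_x$ for all $x$ — namely $g = \id{A}$ — is total in $\catC$, which it trivially is. Lemma~\ref{OTCPlus_Coproduct_iff} then yields that $A = \coprod_{x \in X} A_x$ is a direct sum $\bigoplus_{x \in X} A_x$.

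For the empty collection, the remark following Lemma~\ref{OTCPlus_Coproduct_iff} reduces the existence of an empty direct sum to the existence of a terminal (hence zero) object in the category of events; this is supplied directly by the zero object $0$ of $\catC$, which is terminal in $\Events_{\OTCFrom{\catC}}$. In the monoidal case, distributivity of $\otimes$ over these direct sums is automatic, since they coincide with the coproducts of $\catC$, over which $\otimes$ distributes by hypothesis.

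I expect no genuine obstacle here: the substance of the corollary is entirely packaged into the earlier characterisation of direct sums in terms of jointly monic projecting \operations{} and the joint-monicity Lemma~\ref{lemma:opcatpartialjointmon}, so the argument reduces to recognising the coproduct of $\catC$ as the sought direct sum and checking hypotheses that are already in hand. The only point warranting a moment's care is confirming that the family of projections constitutes a \emph{total} \operation{} rather than merely a partial one, which follows because its mediating arrow is the identity.
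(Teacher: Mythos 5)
Your proof is correct and follows the same route as the paper, which presents the corollary as an immediate consequence of the characterisation in Lemma~\ref{OTCPlus_Coproduct_iff}: you simply spell out the verification of condition~(iii) for the coproducts of $\catC$ (joint monicity via Lemma~\ref{lemma:opcatpartialjointmon}, totality of the test $\{\triangleright_x\}_{x\in X}$ via the identity mediating arrow) and handle the empty case via the zero object, exactly as the paper intends.
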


In particular, starting from any OTC $\Theta$ we may always pass to the extended one $\Theta^{\otcplus} = \OTCFrom{\ParOpCat(\Theta)}$ with direct sums, without altering $\Theta$ if they were already present: 

\begin{theorem} \label{thm:EquivIffDirSums} For every OTC $\Theta$, the theory $\Theta^{\otcplus}$ has direct sums. Conversely, $\Theta$ has direct sums iff there is an equivalence of (monoidal) theories $\Theta \simeq \Theta^{\otcplus}$, preserving direct sums. 
\end{theorem}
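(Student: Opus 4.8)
The first claim is immediate from the surrounding results: by definition $\Theta^{\otcplus}=\OTCFrom{\ParOpCat(\Theta)}$, and $\ParOpCat(\Theta)$ is a (monoidal) operational category in partial form, so Corollary~\ref{cor:OTCFromHasDSums} already yields that $\Theta^{\otcplus}$ has direct sums. The real content is the equivalence, and the plan is to produce it explicitly. First I would write down the canonical comparison $J\colon\Theta\to\Theta^{\otcplus}$ that is the identity on underlying data, sending a system $A$ to the singleton $\indsys{A}$ and an event $f\colon A\eventto B$ to itself, viewed as the one-outcome partial operation $\{f\}$ (legitimate by Assumption~\ref{assump:EveryEventSomeOp}). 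From the explicit description of $\ParOpCat(\Theta)$, arrows $\indsys{A}\catpopsto\indsys{B}$ are exactly single partial operations $\{f\colon A\eventto B\}$, that is, exactly events of $\Theta$, so $J$ is full and faithful. The essential checks are that $J$ preserves and reflects all operational structure: unwinding the clauses of Theorem~\ref{thm:ParFormToOTC} on singleton objects, a family $\{f_x\colon A\eventto B_x\}_{x\in X}$ is a (partial, resp.\ total) operation in $\Theta^{\otcplus}$ iff it is one in $\Theta$, coarse-graining agrees, and in the monoidal case the identity $\indsys{A}\otimes\indsys{B}=\indsys{A\otimes B}$ makes $J$ strong symmetric monoidal. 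Thus $J$ is, for every $\Theta$, a full and faithful morphism of (monoidal) theories.

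For the forward implication I would assume $\Theta$ has direct sums and show $J$ is essentially surjective. Given an object $\indsys{A_x}_{x\in X}$ of $\Theta^{\otcplus}$, take a direct sum $\bigoplus_{x\in X}A_x$ in $\Theta$; by the second characterisation of Lemma~\ref{OTCPlus_Coproduct_iff} its projections $\triangleright_x$ and coprojections $\coproj_x$ assemble, inside $\ParOpCat(\Theta)=\Events_{\Theta^{\otcplus}}$, into a mutually inverse pair of morphisms exhibiting $\indsys{\bigoplus_{x\in X}A_x}\cong\indsys{A_x}_{x\in X}$ (the empty collection being handled by the zero object coming from the empty direct sum). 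Hence every object of $\Theta^{\otcplus}$ is isomorphic to a singleton, $J$ is essentially surjective, and therefore an equivalence of (monoidal) theories. That this equivalence preserves direct sums is then immediate, since $J$ carries the direct sum $\bigoplus_{x\in X}A_x$ to an object satisfying, by Lemma~\ref{OTCPlus_Coproduct_iff}, the very same operational universal property in $\Theta^{\otcplus}$.

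For the converse I would argue by transport of structure. An equivalence of (monoidal) theories, together with a quasi-inverse, preserves and reflects operations and coarse-graining, and by Lemma~\ref{OTCPlus_Coproduct_iff} the existence of a direct sum of a finite family is a property expressed entirely in those terms; hence ``having direct sums'' is invariant under equivalence of theories. Since $\Theta^{\otcplus}$ has direct sums by the first part, it follows that $\Theta$ does too, concretely by transporting the direct sum of an image family back along a quasi-inverse. I expect the only genuine obstacle to lie in the opening step: confirming that $J$ is an equivalence of \emph{theories}, not merely of underlying categories, which means matching the defining clauses of Theorem~\ref{thm:ParFormToOTC} against the original operational data of $\Theta$ on singleton objects. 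This is routine bookkeeping, whereas the conceptual core, the identification of direct sums with coproducts, is already supplied by Lemma~\ref{OTCPlus_Coproduct_iff}.
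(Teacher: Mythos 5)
Your proof is correct and follows essentially the same route as the paper: the first claim via Corollary~\ref{cor:OTCFromHasDSums}, the converse by a one-line transport of direct sums across the equivalence, and the forward direction by exhibiting the equivalence through Lemma~\ref{OTCPlus_Coproduct_iff}. The only cosmetic difference is that the paper writes down the explicit quasi-inverse $\Theta^{\otcplus} \to \Theta$ collapsing each $\indsys{A_x}_{x \in X}$ to $\bigoplus_{x \in X} A_x$, whereas you verify that the canonical embedding $\Theta \to \Theta^{\otcplus}$ is full, faithful and essentially surjective --- the same equivalence viewed from the other side.
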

\begin{proof}
$\Theta^+$ has direct sums by Corollary~\ref{cor:OTCFromHasDSums}. Hence if $\Theta$ and $\Theta^+$ are equivalent then $\Theta$ must also. 
Conversely, if $\Theta$ has direct sums, consider the assignment which sends each system $\indsys{A_x}_{x \in X}$ of $\Theta^+$ to the system $\bigoplus_{x \in X} A_x$ of $\Theta$, and each event $M \colon \indsys{A_x}_{x \in X} \eventto \indsys{B_y}_{y \in Y}$ to the unique event $\hat{M} \colon \bigoplus_{x \in X} A_x \eventto \bigoplus_{y \in Y} B_y$ satisfying $\triangleright_y \eventcomp \hat{M} \eventcomp \coproj_x = M(x,y) \colon A_x \eventto B_y$, using Lemma~\ref{OTCPlus_Coproduct_iff}.
It's straightforward to check that this defines a (monoidal) equivalence of categories preserving discarding $\discard{}$ and the coproducts, using Lemma~\ref{OTCWithDirSums_distrib} in the monoidal case. By our next result, this in fact ensures that $\Theta$ and $\Theta^{\otcplus}$ are equivalent as theories.
\end{proof}

Now we've seen that once direct sums are present, they can be described equivalently as coproducts. In fact, these coproducts encode the full structure of the theory, just as in the definition of the theory $\OTCFrom{\catC}$. As we've seen, any partial \operation{} $\{f_x \colon A \eventto B_x\}_{x \in X}$ is described by a single event $f \colon A \eventto \bigoplus_{x \in X} B_x \simeq \coprod_{x \in X} B_x$, and will be total precisely when $f$ is deterministic. Further, coarse-graining may again be described using the codiagonal maps, since we have:
\begin{equation*}
\triangledown \eventcomp f  = \triangledown \eventcomp (\bigovee_{x \in X} \coproj_x \eventcomp \triangleright_x) \eventcomp f
 = \bigovee_{x \in X} (\triangledown \eventcomp \coproj_x) \eventcomp (\triangleright_x \eventcomp f) 
 = \bigovee_{x \in X} (\id{}) \eventcomp f_x = \bigovee_{x \in X} f_x
\end{equation*}
for each compatible collection $\{f_x \colon A \eventto B\}_{x \in X}$, with corresponding event $f \colon A \eventto \coprod_{x \in X} B_x$. We have established the following.

\begin{lemma} \label{thm:OTCDSumstoopcat}
For any OTC $\Theta$ with direct sums, $\catC = \Events_{\Theta}$ is an operational category in partial form, with $\Theta = \OTCFrom{\catC}$. 
\end{lemma}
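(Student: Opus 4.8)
The plan is to prove the two assertions separately: first that $\catC = \Events_{\Theta}$ satisfies every clause of Definition~\ref{def:opcatpartial}, and then that the theory $\OTCFrom{\catC}$ reconstructed from it coincides with $\Theta$ on events, tests and coarse-graining. Almost all of the work has been prepared by the structural lemmas of this section, so the proof is largely a matter of assembling them in the right order and isolating the one point that needs a genuine argument.

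For the first assertion I would read off the required data as follows. The finite coproducts $(+,0)$ of $\catC$ are exactly the direct sums of $\Theta$, which exist by hypothesis: by Lemma~\ref{OTCPlus_Coproduct_iff} a non-empty direct sum is precisely a coproduct whose projections $\triangleright_x$ are jointly monic and form a test, while the empty direct sum is a terminal object, and hence a zero object by virtue of the impossible events $0_{A,B}$. The distinguished object $I$ is the trivial system and $\discard{A} \colon A \eventto I$ is the unique deterministic effect of Lemma~\ref{lemmaCausality}. Condition~\ref{paropcatatdiscardConds}) then holds because $\discard{I} = \id{I}$ by uniqueness, and because each $\discard{A+B} \eventcomp \coproj_i$ is a deterministic effect (the coprojections being deterministic) and so equals $\discard{}$, giving $\discard{A+B} = [\discard{A}, \discard{B}]$ by the coproduct property. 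Condition~\ref{paropcatcatJM}) is the binary case of the joint monicity of direct-sum projections from Lemma~\ref{OTCPlus_Coproduct_iff}, and condition~\ref{paropcatuniquetotal}) is exactly Lemma~\ref{lemma:OTCConsequences}\ref{lem_OTC_uniqueeffect}), with totality of the completing map matching the test condition via Lemma~\ref{lemma:OTCConsequences}. In the monoidal case, condition~\ref{monoidalparopcatdistrib}) is Lemma~\ref{OTCWithDirSums_distrib} and condition~\ref{monoidalparopcatdiscard}) is part~iv) of Lemma~\ref{lemma:OTCConsequences}.

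For the second assertion, Theorem~\ref{thm:ParFormToOTC} already gives that $\OTCFrom{\catC}$ has $\catC$ as its category of events, so events agree literally; it remains to match tests and coarse-graining. The direct-sum universal property sends each partial \operation{} $\{f_x \colon A \eventto B_x\}_{x \in X}$ of $\Theta$ to the unique event $f \colon A \eventto \bigoplus_{x} B_x \simeq \coprod_{x} B_x$ with $\triangleright_x \eventcomp f = f_x$, which is precisely the datum defining a partial \operation{} in $\OTCFrom{\catC}$; totality agrees because $f$ is deterministic iff $\bigovee_{x} \discard{} \eventcomp f_x = \discard{}$, which by Lemma~\ref{lemma:OTCConsequences} is totality of $\{f_x\}$ in $\Theta$. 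Coarse-graining matches by the displayed identity $\triangledown \eventcomp f = \bigovee_{x} f_x$ established immediately before the statement. Hence the two theories have the same events, partial \operations{}, \operations{} and coarse-graining rules, i.e.\ $\Theta = \OTCFrom{\catC}$.

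I expect the main obstacle to lie in the \emph{converse} direction of the test-matching, namely showing that every event $f \colon A \eventto \coprod_{x} B_x$ of $\catC$ yields a collection $\{\triangleright_x \eventcomp f\}_{x}$ that is a partial \operation{} of $\Theta$ in the \emph{original} sense (a subset of some admissible test), and not merely of the reconstructed theory. To handle this I would use Assumption~\ref{assump:EveryEventSomeOp} to place $f$ inside some \operation{}, and then apply control (Assumption~\ref{assump:control}) with the test $\{\triangleright_x\}$ along the $f$-branch and trivial tests elsewhere, exhibiting $\{\triangleright_x \eventcomp f\}_{x}$ as a subset of an honest test. Everything else reduces to citing the earlier lemmas and routine bookkeeping.
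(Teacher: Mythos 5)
Your proof is correct, but it is organised differently from the paper's. The paper disposes of the first assertion in one line by invoking Theorem~\ref{thm:EquivIffDirSums}: the equivalence $\Events_{\Theta} \simeq \Events_{\Theta^{\otcplus}} \simeq \ParOpCat(\Theta)$ preserves coproducts and discarding, and since $\ParOpCat(\Theta)$ is already known to be an operational category in partial form, the structure transfers to $\catC = \Events_{\Theta}$; the matching of \operations{} and coarse-graining is then delegated to the discussion preceding the lemma. You instead verify each clause of Definition~\ref{def:opcatpartial} directly on $\Events_{\Theta}$, assembling Lemma~\ref{OTCPlus_Coproduct_iff} (direct sums as coproducts with jointly monic projections), Lemma~\ref{lemmaCausality} and Lemma~\ref{lemma:OTCConsequences} for conditions~\ref{paropcatatdiscardConds})--\ref{paropcatuniquetotal}), and Lemma~\ref{OTCWithDirSums_distrib} for the monoidal clauses. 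Both routes are sound and draw on the same underlying lemmas; what your version buys is self-containedness — it avoids leaning on Theorem~\ref{thm:EquivIffDirSums}, whose own proof forward-references this lemma to upgrade the categorical equivalence to an equivalence of theories, so the direct verification makes the logical order cleaner. You also make explicit a point the paper leaves implicit in ``as outlined above'': that every event $f \colon A \eventto \coprod_{x} B_x$ of $\Theta$ really does yield a partial \operation{} $\{\triangleright_x \eventcomp f\}_x$ in the original sense, which you establish by placing $f$ in a test via Assumption~\ref{assump:EveryEventSomeOp} and then applying control with $\{\triangleright_x\}_x$ along the $f$-branch. This is exactly the right argument, and is the one genuinely non-bookkeeping step in the direct approach; in the paper it is absorbed into the fact that morphisms of $\ParOpCat(\Theta)$ are partial \operations{} by construction.
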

\begin{proof}
By Theorem~\ref{thm:EquivIffDirSums}, we have a (monoidal) equivalence of categories $\Events_{\Theta} \simeq \Events_{\Theta^{\otcplus}} \simeq \ParOpCat(\Theta)$ preserving coproducts and discarding, and so $\catC$ indeed forms an operational category in partial form. As outlined above, \operations{} and coarse-graining are defined in $\Theta$ just as in $\OTCFrom{\catC}$. 
\end{proof}

We have reached our first main result, summarising Corollary~\ref{cor:OTCFromHasDSums}, Lemma~\ref{thm:OTCDSumstoopcat} and Section~\ref{subsec:totparequiv}.

\begin{theorem} \label{thm:MainEquivalence}
The following structures are equivalent:
\begin{itemize} 
\item a (monoidal) operational theory with control $\Theta$ with direct sums;
\item a (monoidal) operational category in partial form $\catC$;
\item a (monoidal) operational category $\catB$; 
\end{itemize}
under the correspondences $\catC = \Events_{\Theta}$, $\Theta = \OTCFrom{\catC}$, $\catB \simeq \catC_{\total}$ and $\catC \simeq \Partial(\catB)$.
\end{theorem}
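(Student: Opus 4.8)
The plan is to assemble the statement from the preceding results, viewing it as the composite of two separate equivalences that have each already been established: the total/partial correspondence of Section~\ref{subsec:totparequiv} relating $\catB$ and $\catC$, and the theory/category correspondence relating $\catC$ and $\Theta$. Since the intermediate constructions have already been verified to land in the correct class of structures, the only work remaining is to check that the round-trips return the original object up to the appropriate notion of sameness.

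First I would handle the leg between an operational category $\catB$ and an operational category in partial form $\catC$. By Theorem~\ref{thm:opcattopartialopcat} the assignment $\catB \mapsto \Partial(\catB)$ produces an operational category in partial form, and by Theorem~\ref{thm:paropcatToOpCat} the assignment $\catC \mapsto \catC_{\total}$ produces an operational category. That these are mutually inverse up to (monoidal) isomorphism is precisely the content of Theorem~\ref{thm:opparopcatEquivalence}, which gives $\catB \simeq \Partial(\catB)_{\total}$ and $\catC \simeq \Partial(\catC_{\total})$. This settles the correspondences $\catB \simeq \catC_{\total}$ and $\catC \simeq \Partial(\catB)$.

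Next I would handle the leg between $\catC$ and an OTC $\Theta$ with direct sums. Theorem~\ref{thm:ParFormToOTC} shows that $\catC \mapsto \OTCFrom{\catC}$ yields an OTC whose category of events is \emph{by construction} equal to $\catC$, and Corollary~\ref{cor:OTCFromHasDSums} shows this OTC has direct sums; hence $\Events_{\OTCFrom{\catC}} = \catC$ on the nose. Conversely, Lemma~\ref{thm:OTCDSumstoopcat} shows that for an OTC $\Theta$ with direct sums the category $\Events_{\Theta}$ is an operational category in partial form and, crucially, that the tests and coarse-graining of $\Theta$ agree with those of $\OTCFrom{\Events_{\Theta}}$, giving $\Theta = \OTCFrom{\Events_{\Theta}}$. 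Composing the two legs yields the full three-way correspondence under the maps $\catC = \Events_{\Theta}$, $\Theta = \OTCFrom{\catC}$, $\catB \simeq \catC_{\total}$, $\catC \simeq \Partial(\catB)$.

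The proof is thus essentially bookkeeping, and the only genuine subtlety is in being precise about what \emph{equivalent} means: the $\catC \leftrightarrow \Theta$ correspondence is a strict bijection on structures, whereas the $\catB \leftrightarrow \catC$ correspondence holds only up to isomorphism of categories. I expect the main (modest) obstacle to be ensuring these notions fit together coherently --- in particular that an isomorphism of operational categories in partial form induces a matching equivalence of the associated theories, so that a general equivalence of theories on one side corresponds to a general equivalence of categories on the other. Making this fully precise is exactly what is deferred to the functorial treatment of Section~\ref{sec:CategoricalResults}; for the present statement it suffices to invoke the cited results together with the observation that all of the constructions preserve discarding $\discard{}$ and the coproduct/direct-sum structure.
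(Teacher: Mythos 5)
Your proposal is correct and follows exactly the route the paper takes: the theorem is stated there as a summary of Corollary~\ref{cor:OTCFromHasDSums}, Lemma~\ref{thm:OTCDSumstoopcat} and the total/partial equivalence of Section~\ref{subsec:totparequiv} (Theorems~\ref{thm:opcattopartialopcat}, \ref{thm:paropcatToOpCat} and \ref{thm:opparopcatEquivalence}), which are precisely the results you assemble. Your closing remark that the precise notion of ``equivalent'' is deferred to the functorial treatment of Section~\ref{sec:CategoricalResults} also matches the paper's handling of that point.
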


\subsection{Examples}
Now that we understand the relationship between operational categories and theories, let's briefly look again at each of our main examples of OTCs. For further details on these categories, see \cite{EffectusIntro}. Most of our example theories $\Theta$ in fact have direct sums, and so are determined by their categories of events $\Events_{\Theta} \simeq \ParOpCat(\Theta)$, or deterministic events $\DetEvents_{\Theta} \simeq \OpCat(\Theta)$, which are operational categories in partial and total form, respectively.

\begin{enumerate}[i)]
\item The category of events of $\ClassDet$ is the category $\cat{PFun}$ of sets and partial functions, with direct sums given by disjoint union of sets. The deterministic events form the operational category $\Set$ of sets and (total) functions. 

\item The theory $\ClassProb$ has direct sums described in the same way. An event $f \colon A \eventto B$ here is deterministic when it sends each $a \in A$ to a (normalised) distribution over $B$. Their category is described abstractly as the \emph{Kleisli category} of the \emph{distribution monad}, $\Kleisli{\mathcal{D}}$.

\item In contrast, the theory $\FinHilbOTC$ does not have direct sums. Its direct sum completion $\FinHilbOTC^{\otcplus}$ 
is the theory $\FinCStarOTC$ of finite-dimensional C*-algebras, via the correspondence $\indsys{\HilbH_x}_{x \in X} \mapsto \bigoplus_{x \in X} \curlyB(\HilbH_x)$. Direct sums also exist in the infinite-dimensional case $\CStarOTC$. In both cases, the corresponding `total' operational category is the category $\FDimCStarBracket$, of (finite-dimensional) C*-algebras and unital, completely positive maps, while its partial form $\FDimCStarSUBracket$ instead has as arrows completely positive, sub-unital maps. 

\item Finally, $\Mat_R$ has direct sums, given on systems $n \in \mathbb{N}$ by addition of natural numbers.

\end{enumerate}

\section{Further Operational Assumptions} \label{sec:AdditionalAssumptions}

Our definition of an OTC was deliberately chosen to be as weak as possible while still allowing for the categorical approach presented above. There are further basic requirements that one might expect to form a part of our framework, such as the following. 

\begin{axiom}[Positivity] \label{axiom:positivity} Whenever $\{f_x\}_{x \in X}$ and $\{f_x \}_{x \in X} \cup \{g_y\}_{y \in Y}$ both form \operations{}, we have $g_y = 0$ for all $y \in Y$.
\end{axiom}

Intuitively, since on any run of the first \operation{} one of the events $f_x$ must occur, each of the events $g_y$ must be impossible. This condition translates categorically as follows. 

\begin{lemma} For any OTC $\Theta$, the following are equivalent:
\begin{enumerate}[i)]
\item $\Theta$ is positive;
\item Events in $\Theta$ satisfy $f \ovee g = 0 \implies f = g = 0$ and $\discard{} \eventcomp f = 0 \implies f = 0$;
\item Events in $\Theta^{\otcplus}$ satisfy $\discard{} \eventcomp f = 0 \implies f = 0$;
\item In $\OpCat(\Theta)$, diagrams of the following form are pullbacks: 
\begin{equation} \label{eq:posOpCatPullback}
\begin{tikzcd}
A \rar{\unique} \arrow[d, swap, "\coproj_1"] & 1 \dar{\coproj_1} \\
A + B \rar[swap]{\unique + \unique} & 1 + 1
\end{tikzcd}
\end{equation}
\end{enumerate}

We will call any operational category $\catB$ with this property \emph{positive}. Note that the pullback in the definition of an operational category is a special case of \eqref{eq:posOpCatPullback}. 
\end{lemma}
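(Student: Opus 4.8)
The plan is to prove the cycle $(i) \Leftrightarrow (ii) \Leftrightarrow (iii)$ together with $(ii) \Leftrightarrow (iv)$. The unifying observation is that each of (ii), (iii), (iv) is a way of saying ``if a coarse-grained discarding effect vanishes then the underlying events vanish'', just phrased in $\Theta$, in $\Theta^{\otcplus}$, and categorically in $\OpCat(\Theta)$ respectively.

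First, for $(i) \Leftrightarrow (ii)$ I would argue directly in $\Theta$ using causality and the cancellativity of coarse-graining from Lemma~\ref{lemma:OTCConsequences}. For $(i) \Rightarrow (ii)$: given $f \ovee g = 0$, the partial test $\{f,g\}$ extends to a total $\{f,g,e\}$; merging $f,g$ shows $\{0,e\}$ is total, which by the impossible-event axiom forces $e = \discard{}$, so $\{\discard{},f,g\}$ and $\{\discard{}\}$ are both tests and positivity yields $f = g = 0$. For the second clause, if $\discard{} \eventcomp f = 0$, pick any test $\{f_w\}$ with $f = f_{w_0}$; totality together with $a \ovee 0 = a$ shows that deleting $f$ still leaves a test, so positivity gives $f = 0$. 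Conversely, $(ii) \Rightarrow (i)$ compares the totality equations $\bigovee_x \discard{} \eventcomp f_x = \discard{}$ of the two nested tests, cancels $\discard{}$ to obtain $\bigovee_y \discard{} \eventcomp g_y = 0$, and applies the two clauses of (ii).

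Next, for $(ii) \Leftrightarrow (iii)$ I would use that events of $\Theta^{\otcplus}$ are matrices whose columns are partial tests of $\Theta$, and that $\Theta^{\otcplus}$ has direct sums by Corollary~\ref{cor:OTCFromHasDSums}. For $(ii) \Rightarrow (iii)$: an event $F$ of $\Theta^{\otcplus}$ with $\discard{} \eventcomp F = 0$ satisfies $\bigovee_y \discard{} \eventcomp F(x,y) = 0$ column-wise, so each entry vanishes by the two clauses of (ii). For $(iii) \Rightarrow (ii)$: the second clause of (ii) is immediate since $\Theta$-events embed in $\Theta^{\otcplus}$; for the first, given compatible $f,g$ with $f \ovee g = 0$ I form the single $\Theta^{\otcplus}$-event $\langle f,g\rangle \colon A \eventto B + B$ and use $\discard{B+B} = \discard{B} \eventcomp \triangledown$ together with $\triangledown \eventcomp \langle f,g\rangle = f \ovee g$ (Theorem~\ref{thm:ParFormToOTC}) to get $\discard{} \eventcomp \langle f,g\rangle = 0$, so (iii) forces $\langle f,g\rangle = 0$ and hence $f = g = 0$.

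Finally, for $(ii) \Leftrightarrow (iv)$ the key step is to read the pullback operationally in $\catB = \OpCat(\Theta)$: a cone $h \colon Z \catopsto A + B$ with $(\unique + \unique) \circ h = \coproj_1 \circ \unique$ is exactly a test whose total discarding over the $B$-block vanishes, $\bigovee_b \discard{} \eventcomp h(\cdot,b) = 0$, whereas a factorisation through $\coproj_1 \colon A \to A + B$ means precisely that all $B$-block events of $h$ are $0$; uniqueness of the factorisation is automatic since $\coproj_1$ is monic by Lemma~\ref{opcatlemma}. Then $(ii) \Rightarrow (iv)$ is the two clauses of (ii) applied to the $B$-block (recovering totality of the factoring test via $a \ovee 0 = a$), and $(iv) \Rightarrow (ii)$ follows by engineering cones: to get $\discard{} \eventcomp f = 0 \Rightarrow f = 0$ place $f$ alone in the $B$-summand of a test extending it, and to get $f \ovee g = 0 \Rightarrow f = g = 0$ place $\{f,g\}$ in a $B$-summand of the form $\indsys{B} + \indsys{B}$. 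I expect the main obstacle to be the bookkeeping here: correctly matching the categorical cone and factorisation data with the block structure of tests, and checking that the factoring morphism really is a \emph{total} test in $\catB$, which rests on reassembling the totality equation once the relevant block has been shown to vanish.
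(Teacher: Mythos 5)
Your proposal is correct and follows essentially the same route as the paper: the paper's (much terser) proof likewise unpacks the pullback (iv) columnwise as the statement that any partial test with vanishing total discarding consists of impossible events, identifies this with (iii) via the matrix description of $\Theta^{\otcplus}$, and asserts the equivalence with (i) and (ii) that you verify explicitly using cancellativity and Lemma~\ref{lemma:OTCConsequences}. Your write-up simply supplies, accurately, the details the paper leaves as ``easily seen''.
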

\begin{proof} 
When interpreted in $\OpCat(\Theta)$, the above pullback states that for any \operation{} $\{f_x \colon C \eventto A_x \}_{x \in X} \cup \{g_y \colon C \eventto B_y \}_{y \in Y}$ in $\Theta$ satisfying 
\begin{equation} \label{eq:poscatproof}
\bigovee_{y \in Y} \discard{} \eventcomp g_y = 0
\end{equation}
we have $g_y = 0$ for all $y \in Y$. Equivalently, any partial \operation{} $\{g_y \colon C \eventto B_y\}_{y \in Y}$ satisfying \eqref{eq:poscatproof} has $g_y = 0$ for all $y \in Y$. This gives (iii) by the definition of $\Theta^{\otcplus}$, and is easily seen to be equivalent to each of (i) and (ii).
\end{proof}

\begin{examples*} \label{examples:PosOpCats} Each of our leading examples of operational theories $\ClassDet$, $\ClassProb$, $\FinHilbOTC$ and $\mathsf{(Fin)CStar}$ are positive, and hence so are their corresponding operational categories $\Set$, $\Kl(\mathcal{D})$ and $\FDimCStarBracket$. The theory $\Mat_{\mathbb{Z}}$ is not positive, since it comes with non-zero scalars $1$ and $-1$ satisfying $1 \ovee -1 = 0$.
\end{examples*}

The positivity axiom comes with a few nice consequences, which are discussed in Appendix~\ref{appendix:propertiesOfOpCats}. For example, in the category $\catC = \Partial(\catB)$ the discarding maps $\discard{A} \colon A \partialto I$ are now uniquely determined, rather than having to be stated as extra structure. Moreover, isomorphisms in $\Partial(\catB)$ are always total, i.e.\ deterministic, as one would expect when viewing them as reversible physical events. Categorically, the initial object $0$ becomes \emph{strict}, and \eqref{eq:posOpCatPullback} extends to more general pullbacks:
\begin{equation}
\label{eq:generalPosPullbacks}
\begin{tikzcd}
A \rar{f} \dar[swap]{\coproj_1} & B \dar{\coproj_1} \\
A + C \rar[swap]{f + g} & B + D
\end{tikzcd}
\end{equation}

Beyond positivity, there are stronger requirements one might wish to adopt on purely operational grounds, such as rules ensuring the scalars $r: I \eventto I$ behave even more like probabilities. The strongest assumption we can make of more general events is to identify those which are `testably the same', as follows. 

\paragraph{Operational Equivalence}
We say two events $f$, $g \colon A \eventto B$ of a monoidal OTC $\Theta$ are \emph{operationally equivalent}, and write $f \monopequiv g$, when   
\[ \label{firstopquotient}
 \begin{pic}
  \node[effectwide](e) at (1,1.5){$e$};
  \node[statewide](w) at (1,0){};
  \draw ([xshift=3mm]w.north west |-,0.75) node[mor](f) {$f$};
  \node[scale = 1.6] at (1,0){$\omega$};
  \draw([xshift=3mm] w.north west) to (f.south);
  \draw([xshift=-3mm] w.north east) to ([xshift=-3mm] e.south east);
  \draw (f.north) to ([xshift=3mm] e.south west);
  \end{pic}
  =
  \begin{pic}
  \node[effectwide](e) at (1,1.5){$e$};
  \node[statewide](w) at (1,0){};
  \draw ([xshift=3mm]w.north west |-,0.75) node[mor](f) {$g$};
  \node[scale = 1.6] at (1,0){$\omega$};
  \draw([xshift=3mm] w.north west) to (f.south);
  \draw([xshift=-3mm] w.north east) to ([xshift=-3mm] e.south east);
  \draw (f.north) to ([xshift=3mm] e.south west);
  \end{pic}
\]  


\noindent for all external systems $C$, states $\omega \colon I \eventto A \otimes C$, and effects $e \colon B \otimes C \eventto I$.  
In a non-monoidal theory, we instead consider the simpler condition that $f \simeq g$ whenever $e \eventcomp f \eventcomp \omega = e \eventcomp g \eventcomp \omega$ for all states $\omega \colon A \eventto I$ and effects $e \colon B \eventto I$. Both define equivalence relations, 
intuitively with $f \monoropequiv g$ whenever $f$ and $g$ give the same probabilities to all possible experiments. 
We call a (monoidal) OTC \emph{(monoidally) separated} if $f \monoropequiv g \implies f = g$ for all events $f$, $g$. In fact, all of our main examples of OTCs are separated. 
The significance of separation is discussed in \cite{EPTCS172.1}, on which the following result is based.

\paragraph{The Quotient OTC} Given any (monoidal) OTC $\Theta$, we define a new (monoidally) separated OTC $\Theta/{\monoropequiv}$ as follows. 
Events $f \colon A \eventto B$ in $\Theta/{\monoropequiv}$ are equivalence classes $[g]$ of events $g \colon A \eventto B$ of $\Theta$ under $\monoropequiv$. \Operations{} are collections $\{[f_x] \colon A \eventto B_x \}_{x \in X}$ for which there is some \operation{} $\{ f_x \colon A \eventto B_x \}_{x \in X}$ in $\Theta$, with coarse-graining defined by $[f]\ovee [g] = [f \ovee g]$.

\begin{theorem} \label{thm:OTCSeparation}  
Give any (monoidal) OTC $\Theta$,  $\Theta/{\monoropequiv}$ is a well-defined (monoidal) OTC which is (monoidally) separated. If $\Theta$ is separated, then $\Theta$ and $\Theta/{\monoropequiv}$ are isomorphic theories. 
Further, when $\Theta$ has direct sums, so does $\Theta/{\monoropequiv}$.
\end{theorem}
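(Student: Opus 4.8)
The plan is to treat $\monoropequiv$ as a \emph{congruence} on all of the structure of $\Theta$, so that the quotient inherits its axioms from $\Theta$, and then to exploit the fact that operational equivalence is \emph{trivial on scalars} in order to deduce separation.

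First I would establish that $\monoropequiv$ is compatible with composition, coarse-graining, and (in the monoidal case) the tensor: that $f \monoropequiv f'$, $g \monoropequiv g'$ imply $g \eventcomp f \monoropequiv g' \eventcomp f'$, $f \otimes g \monoropequiv f' \otimes g'$, and $f \ovee g \monoropequiv f' \ovee g'$ whenever both coarse-grainings are defined. Each of these follows directly from the defining condition by absorbing the extra morphism into the quantified effect or state and, in the monoidal case, by enlarging the ancilla system $C$; the coarse-graining case additionally uses the distributivity of $\ovee$ over composition and the tensor from Assumption~\ref{assump:coarse_grain}, so that $e \eventcomp ((f \ovee g) \otimes \id{}) \eventcomp \omega$ splits as a $\ovee$-sum to which $f \monoropequiv f'$ and $g \monoropequiv g'$ apply termwise. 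This guarantees that composition, identities, the tensor and $\ovee$ descend to classes, making $\Theta/{\monoropequiv}$ a well-defined (symmetric monoidal) category with coarse-graining; since operations in $\Theta/{\monoropequiv}$ are by definition the images of operations of $\Theta$, the control, impossible-event, and trivial-operation data transport along the identity-on-objects quotient functor $[-]\colon \Theta \to \Theta/{\monoropequiv}$.

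Next I would check the remaining axioms for $\Theta/{\monoropequiv}$, most of which are immediate because $[-]$ preserves operations and coarse-graining. The one point needing care is the uniqueness of complementary effects (Assumption~\ref{assump:complementaryeffects}): taking $[e]^{\bot} \eqdef [e^{\bot}]$, I must show complementation respects $\monoropequiv$, i.e.\ $\tilde e \monoropequiv e \implies \tilde e^{\bot} \monoropequiv e^{\bot}$. This uses $e \ovee e^{\bot} = \discard{}$ together with the cancellativity of $\ovee$ on effects from Lemma~\ref{lemma:OTCConsequences}(iii): evaluating on an arbitrary state $\omega$ (and ancilla) reduces the claim to $\tilde e^{\bot} \eventcomp \omega = e^{\bot} \eventcomp \omega$, which follows by cancelling the common term $\tilde e \eventcomp \omega = e \eventcomp \omega$. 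For separation, the key observation is that on scalars $r,s \colon I \eventto I$ operational equivalence coincides with equality, since taking the trivial ancilla and $\omega = e = \id{I}$ in the definition yields $r = s$ directly from $r \monoropequiv s$. Now states and effects of $\Theta/{\monoropequiv}$ are exactly the classes $[\omega]$, $[e]$, and $[e] \eventcomp ([f] \otimes \id{}) \eventcomp [\omega] = [\,e \eventcomp (f \otimes \id{}) \eventcomp \omega\,]$, so $[f] \monoropequiv [g]$ in $\Theta/{\monoropequiv}$ says precisely that $e \eventcomp (f \otimes \id{}) \eventcomp \omega \monoropequiv e \eventcomp (g \otimes \id{}) \eventcomp \omega$ as scalars of $\Theta$; by the observation these are \emph{equal} for all ancillas, states and effects, which is exactly $f \monoropequiv g$, i.e.\ $[f] = [g]$. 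If $\Theta$ is already separated then every class $[f]$ is a singleton, so $[-]$ is a bijection on events preserving and reflecting all structure, hence an isomorphism of theories.

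For the final claim I would invoke the characterisation of direct sums in Lemma~\ref{OTCPlus_Coproduct_iff}(ii): if $\{A_x\}_{x \in X}$ has a direct sum $(\bigoplus_{x} A_x, \triangleright_x)$ with coprojections $\coproj_x$ satisfying \eqref{eq:projectionarrows_partial} and $\bigovee_{x} \coproj_x \eventcomp \triangleright_x = \id{}$, then the images $[\triangleright_x]$, $[\coproj_x]$ satisfy the same equations in $\Theta/{\monoropequiv}$ (as $[-]$ preserves composition, the zero and identity events, operations, and $\ovee$), so $\bigoplus_{x} A_x$ remains a direct sum there. I expect the main obstacle to be the congruence step for $\monoropequiv$, specifically the monoidal case: one must verify that absorbing an adjacent morphism into the ancilla and effect genuinely reduces to an instance of the defining condition, and that $\ovee$ descends to classes independently of the compatible representatives chosen. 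Everything else is bookkeeping that transports structure along the structure-preserving functor $[-]$, while separation is short once scalar triviality is in hand.
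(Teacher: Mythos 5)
Your proposal is correct and follows essentially the same route as the paper: the crux in both is that complementation descends to $\monoropequiv$-classes via $e \ovee e^{\bot} = \discard{}$ and cancellativity of $\ovee$ on scalars (Lemma~\ref{lemma:OTCConsequences}(iii)), with direct sums transported via the characterisation in Lemma~\ref{OTCPlus_Coproduct_iff}. You simply spell out the congruence checks, the scalar-triviality argument behind separation, and the singleton-classes observation that the paper compresses into ``clearly''.
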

\begin{proof}
Assumption~\ref{assump:complementaryeffects} is the only interesting one to check, requiring us to show that for all effects $a, b \colon A \eventto I$ with $a \monoropequiv b$ we have $a^{\bot} \monoropequiv b^{\bot}$.  But for all states $\omega \colon I \eventto A$, we have:
\[
\discard{A} \eventcomp \omega = (a \ovee a^{\bot}) \eventcomp \omega = a \eventcomp \omega \ovee a^{\bot} \eventcomp \omega = b \eventcomp \omega \ovee a^{\bot} \eventcomp \omega = b \eventcomp \omega \ovee b^{\bot} \eventcomp \omega
\]
and so, thanks to cancellativity of scalars in $\Theta$, $a^{\bot} \eventcomp \omega = b^{\bot} \eventcomp \omega$, as required. Clearly $\Theta / \monoropequiv$ is (monoidally) separated, and if $\Theta$ is then both theories are identical. 
Finally, if $\Theta$ has direct sums, then the events $[\triangleright_x] \colon \bigoplus_{y \in X} A_y \eventto A_x$, for $x \in X$, form a direct sum in $\Theta/{\monoropequiv}$.
\end{proof}

In particular, starting from any (monoidal) operational category $\catB$, we may always pass to a new one $\catB/{\monoropequiv}$ for which the theory $\OTCFrom{\Partial(\catB)}$ is (monoidally) separated, in the same way.


\section{Effectus Theory} \label{sec:EffectusTheory}

The categorical structures we have made use of above were first considered by Jacobs et al.\ \cite{NewDirections2014aJacobs,StatesConv2015JWW,Partial2015Cho} in a recent approach to the study of quantum computation using categorical logic called \emph{effectus theory}. An accessible introduction to effectus theory is given in \cite{EffectusIntro}. 

\begin{definition}
\cite{EffectusIntro}
A \emph{(monoidal) effectus} is a positive (monoidal) operational category for which diagrams of the following form are pullbacks:
\begin{equation}  \label{eq:effectusdefnpullback}
\begin{tikzcd}
A + B \rar{\unique + \id{}} \dar[swap]{\id{} + \unique} & 1 + B \dar{\unique + \unique} \\
A + 1 \rar[swap]{\unique + \unique} & 1 + 1
\end{tikzcd}
\end{equation}
\end{definition}

Thanks to our results we may now give an operational interpretation to effectus theory, equating effectuses with certain OTCs. We will consider theories satisfying the following axiom.

\begin{axiom}[Observations determine \operations{}] \label{axiom:obsdetops}
A collection of events $\{f_x \colon A \eventto B_x\}_{x \in X}$ forms \anoperation{} in $\Theta$ whenever $\{\discard{} \eventcomp f_x \colon A \eventto I \}_{x \in X}$ does.
\end{axiom}

When combined with the ability to form direct sums, this implies the following.

\begin{axiom}[Combining] \label{axiom:combiningops}
Any pair of partial \operations{} $\{f_x \colon A \eventto B_x\}_{x \in X}$, $\{g_y \colon A \eventto C_y\}_{y \in Y}$ satisfying $\bigovee_{x \in X} \discard{} \eventcomp f_x = (\bigovee_{y \in Y} \discard{} \eventcomp g_y)^{\bot}$
form \anoperation{} $\{f_x\}_{x \in X} \cup \{g_y\}_{y \in Y}$.
\end{axiom}

\begin{lemma} \label{lemma:EffAsOpCat} The following are equivalent for an OTC $\Theta$:
\begin{enumerate}[i)]
\item Diagrams of the form \eqref{eq:effectusdefnpullback} are pullbacks in $\OpCat(\Theta)$; 
\item Observations determine \operations{} in $\Theta^{\otcplus}$;
\item $\Theta$ satisfies combining.
\end{enumerate}
\end{lemma}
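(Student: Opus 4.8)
The plan is to prove the cycle (i)$\Rightarrow$(ii)$\Rightarrow$(iii)$\Rightarrow$(i), routing everything through the theory $\Theta^{\otcplus}$, which has direct sums by Corollary~\ref{cor:OTCFromHasDSums} and satisfies $\OpCat(\Theta^{\otcplus}) \simeq \OpCat(\Theta)$ by Theorem~\ref{thm:opparopcatEquivalence}. Under this equivalence condition (i) --- the pullback \eqref{eq:effectusdefnpullback} in $\OpCat(\Theta)$ --- is the same as the assertion that \eqref{eq:effectusdefnpullback} is a pullback in $\OpCat(\Theta^{\otcplus})$, so I may freely pass between the two. The two genuinely different tasks are then to read the single pullback \eqref{eq:effectusdefnpullback} operationally, and to compare it with Axiom~\ref{axiom:obsdetops} interpreted in $\Theta^{\otcplus}$.

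First I would establish (iii)$\Leftrightarrow$(i) by interpreting \eqref{eq:effectusdefnpullback} directly in $\OpCat(\Theta)$. A cone over the cospan $A+1 \to 1+1 \leftarrow 1+B$ with apex $C$ consists of a pair of operations $q \colon C \to A+1$ and $p \colon C \to 1+B$; unwinding the coproduct description, these are exactly a partial \operation{} $\{q_i\}$ landing in the components of $A$ and a partial \operation{} $\{p_j\}$ landing in those of $B$. The commutation condition in $1+1$ unwinds, using complementary effects (Assumption~\ref{assump:complementaryeffects}) and causality, to $\bigovee_i \discard{}\eventcomp q_i = (\bigovee_j \discard{}\eventcomp p_j)^{\bot}$, and the mediating map $C \to A+B$ is precisely the union $\{q_i\} \cup \{p_j\}$. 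Its uniqueness is automatic from joint monicity of the partial projections (Lemma~\ref{lemma:opcatpartialjointmon}), so that \eqref{eq:effectusdefnpullback} is a pullback if and only if every such pair of partial \operations{} with complementary total weight forms \anoperation{} --- which is exactly Axiom~\ref{axiom:combiningops}. This gives (i)$\Leftrightarrow$(iii).

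For the remaining links I would unwind Axiom~\ref{axiom:obsdetops} in $\Theta^{\otcplus}$. Since events of $\Theta^{\otcplus}$ are the partial \operations{} of $\Theta$, and discarding, observations and \operations{} in $\Theta^{\otcplus}$ all unwind through the coproduct/direct-sum structure, the axiom reads: a finite family of events whose discardings assemble into an observation must itself assemble into \anoperation{}. The direction (ii)$\Rightarrow$(iii) is then immediate: given two partial \operations{} with complementary total weight, their discardings form a two-element observation (again by complementary effects), so Axiom~\ref{axiom:obsdetops} forces their union to be \anoperation{}, which is combining; this is the instance already recorded before the lemma.

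The main obstacle is the converse direction, (i)$\Rightarrow$(ii): promoting the single effectus pullback (equivalently, the two-\operation{} combining law) to Axiom~\ref{axiom:obsdetops} for arbitrary finite families. The naive induction --- peel off one event and combine it with the assembled remainder --- stalls, because the remainder carries non-unit total weight, whereas combining produces only total \operations{}; and restoring complementarity by adjoining the missing effect as an extra outcome returns a family of the same cardinality, so the induction does not descend. The correct route is categorical: the single pullback \eqref{eq:effectusdefnpullback} pastes together to yield its $n$-ary generalisations, and these, read in $\OpCat(\Theta^{\otcplus})$, are precisely the assembly statement of Axiom~\ref{axiom:obsdetops}. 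This pasting is a standard fact of effectus theory, for which I would appeal to \cite{EffectusIntro}. With (i)$\Rightarrow$(ii) in hand the cycle closes, and the three conditions are equivalent.
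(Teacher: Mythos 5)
Your treatment of (i)$\iff$(iii) and of (ii)$\Rightarrow$(iii) coincides with the paper's: the cone description of \eqref{eq:effectusdefnpullback} as a pair of partial \operations{} with complementary total weight, the mediating map as their union, and uniqueness from Lemma~\ref{lemma:opcatpartialjointmon} are all exactly the intended reading. The genuine gap is the remaining implication, from the binary combining law to Axiom~\ref{axiom:obsdetops} for arbitrary finite families, which you do not prove but delegate to ``a standard pasting fact'' cited from \cite{EffectusIntro}. That delegation is doubly problematic. First, the $n$-ary pullback results there are established for effectuses, i.e.\ under positivity (Axiom~\ref{axiom:positivity}), which the present lemma does not assume, so the citation does not obviously cover the case at hand. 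Second, your stated reason for abandoning the elementary route is mistaken: the induction does not stall --- you have chosen the wrong induction variable. You induct on the cardinality of the family and correctly observe that this does not descend; the paper instead inducts on the number of members that have been upgraded from effects to events, and that induction goes through using only binary combining.

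Concretely: the relevant special case of combining is that whenever $\{\discard{}\eventcomp f_1\}\cup\{g_2,\dots,g_n\}$ is \anoperation{}, so is $\{f_1,g_2,\dots,g_n\}$. This follows by applying Axiom~\ref{axiom:combiningops} to the singleton partial \operation{} $\{f_1\}$ (a partial \operation{} by Assumption~\ref{assump:EveryEventSomeOp}), of weight $\discard{}\eventcomp f_1$, and the partial \operation{} $\{g_2,\dots,g_n\}$, whose weight is $(\discard{}\eventcomp f_1)^{\bot}$ by Lemma~\ref{lemma:OTCConsequences}, using $\discard{}\eventcomp(\discard{}\eventcomp f_1)=\discard{}\eventcomp f_1$. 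Now suppose $\{\discard{}\eventcomp f_i\}_{i=1}^n$ is an observation. Starting from this test, apply the special case $n$ times, replacing $\discard{}\eventcomp f_i$ by $f_i$ one index at a time; every intermediate collection is a (total) \operation{}, so the complementarity hypothesis is available at each step, and after $n$ steps $\{f_1,\dots,f_n\}$ is \anoperation{}. This shows observations determine \operations{} in $\Theta$, and hence in $\Theta^{\otcplus}$ via $\OpCat(\Theta^{\otcplus})\simeq\OpCat(\Theta)$. With this replacement for your appeal to the literature, your cycle closes and the proof is complete.
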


\begin{proof}
(i) $\iff$ (iii): For any operational category $\catB$, the pullback \eqref{eq:effectusdefnpullback} states precisely that  for any two partial arrows $f \colon C \partialto A$, $g \colon C \partialto B$ with  $\discard{} \partialcomp f = (\discard{} \partialcomp g)^{\bot}$ there is some (necessarily total) $h \colon C \partialto A + B$ with $\triangleright_1 \partialcomp h = f$, $\triangleright_2 \partialcomp h = g$. In the case $\catB = \OpCat(\Theta)$, this is precisely the condition (iii).

(ii) $\Rightarrow$ (iii): Suppose that partial \operations{} $f \colon C \catpopsto A$, $g \colon C \catpopsto B$, satisfy $\discard{} \catpopscomp f = (\discard{} \catpopscomp g)^{\bot}$. Then $\{\discard{} \catpopscomp f, \discard{} \catpopscomp g \}$ is \anoperation{} in $\Theta^+$, and since observations determine \operations{}, so is $\{f, g\}$, i.e.\ there exists an $h \colon C \catpopsto A + B$ as above.

(iii) $\Rightarrow$ (ii): As a special case of (iii), we have that whenever $\{\discard{} \circ f_1\} \cup \{f_2, \ldots, f_n \}$ forms \anoperation{} in $\Theta$, then so does $\{f_i\}_{i= 1}^n$. From this, it follows inductively that observations determine \operations{} in $\Theta$. Since $\OpCat(\Theta^{\otcplus}) \simeq \OpCat(\Theta)$, this shows that they also do in $\Theta^{\otcplus}$. 
\end{proof}

This gives our next main result, which provides a new operational understanding of the effectus axioms.

\begin{corollary} \label{cor:Effiscetainotcs} The following structures are equivalent:
\begin{itemize}
\item a (monoidal) effectus $\catB$; 
\item a positive (monoidal) OTC with direct sums $\Theta$, in which observations determine \operations{};
\end{itemize}
under the correspondences $\catB \simeq \OpCat(\Theta) \simeq \DetEvents_{\Theta}$, $\Theta \simeq \OTCFrom{\Partial(\catB)}$.
\end{corollary}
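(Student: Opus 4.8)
The plan is to assemble this corollary directly from the main equivalence Theorem~\ref{thm:MainEquivalence} together with the two characterisations of the extra axioms proved just beforehand, rather than to reprove anything from scratch. Theorem~\ref{thm:MainEquivalence} already identifies (monoidal) OTCs $\Theta$ with direct sums and (monoidal) operational categories $\catB$, via the correspondences $\catB \simeq \OpCat(\Theta) \simeq \DetEvents_{\Theta}$ and $\Theta \simeq \OTCFrom{\Partial(\catB)}$ required in the statement. A (monoidal) effectus is by definition nothing more than a positive (monoidal) operational category for which the diagrams \eqref{eq:effectusdefnpullback} are pullbacks, and the class of OTCs in the statement is obtained from ``OTC with direct sums'' by imposing positivity and the condition that observations determine \operations{}. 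Hence it suffices to show that these two extra conditions are matched up by the equivalence of Theorem~\ref{thm:MainEquivalence}.

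First I would transfer positivity: by the lemma characterising positivity (following Axiom~\ref{axiom:positivity}), an OTC $\Theta$ is positive precisely when the operational category $\OpCat(\Theta) \simeq \catB$ is positive, so the positivity conditions correspond on the two sides. Next I would transfer the effectus pullback condition, and here I use that $\Theta$ has direct sums: by Theorem~\ref{thm:EquivIffDirSums} there is then an equivalence of theories $\Theta \simeq \Theta^{\otcplus}$, so observations determine \operations{} in $\Theta$ iff they do in $\Theta^{\otcplus}$. By Lemma~\ref{lemma:EffAsOpCat} the latter holds iff the diagrams \eqref{eq:effectusdefnpullback} are pullbacks in $\OpCat(\Theta) \simeq \catB$, that is, iff $\catB$ satisfies the defining pullbacks of an effectus.

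Combining these, a positive (monoidal) OTC with direct sums in which observations determine \operations{} corresponds, under Theorem~\ref{thm:MainEquivalence}, to exactly a positive (monoidal) operational category satisfying \eqref{eq:effectusdefnpullback}, i.e.\ a (monoidal) effectus, with the stated correspondences inherited from that theorem. In the monoidal case no further work is required, since Theorem~\ref{thm:MainEquivalence} supplies monoidal equivalences and both additional conditions concern only the coproduct and terminal-object structure that these equivalences preserve. I expect the one point needing care to be the passage between $\Theta$ and $\Theta^{\otcplus}$: Lemma~\ref{lemma:EffAsOpCat} phrases the pullback condition in terms of $\Theta^{\otcplus}$, whereas the corollary asks for observations to determine \operations{} in $\Theta$ itself, so I must explicitly invoke the equivalence $\Theta \simeq \Theta^{\otcplus}$ — valid precisely because $\Theta$ has direct sums — to bridge the two formulations.
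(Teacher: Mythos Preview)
Your proposal is correct and matches the paper's approach: the paper gives no explicit proof of this corollary, treating it as an immediate consequence of Theorem~\ref{thm:MainEquivalence}, the positivity characterisation, and Lemma~\ref{lemma:EffAsOpCat}, which is precisely the assembly you describe. Your observation that the passage between $\Theta$ and $\Theta^{\otcplus}$ via Theorem~\ref{thm:EquivIffDirSums} is the one step needing care is apt, and handled correctly.
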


\begin{examples*}
Our main examples $\ClassDet$, $\ClassProb$, $\FinHilbOTC$ and $\mathsf{(Fin)CStar}$ all have observations determining their \operations{}, and hence their categories of \operations{} $\Set$, $\Kleisli{\mathcal{D}}$ and $\FDimCStarBracket$ are all monoidal effectuses. 
\end{examples*}

The mild extra physical assumption that observations determine \operations{} has several pleasing consequences for the operational category $\catB$, which are explored in \cite{NewDirections2014aJacobs,Partial2015Cho,EffectusIntro}.
Crucially, the coarse-graining $\ovee$ now makes each makes each homset $\Partial(\catB)(A,B)$ into a \emph{partial commutative monoid}, with each space of effects in particular forming an \emph{effect algebra}, a well-known structure from quantum logic. Further, composition makes the scalars $M = \Partial(\catB)(1, 1)$ into an \emph{effect monoid}, and our earlier description of `convex combinations' of arrows in $\Partial(\catB)$ is then made precise by the result that these homsets form \emph{(sub)convex sets} over the effect monoid $M$.


\section{Functorial Correspondence between Operational Theories and Categories } \label{sec:CategoricalResults}

Here, we briefly discuss how each of our main results can be stated structurally, in terms of functors between categories. Proofs for this section can be found in Appendix~\ref{appendix:omittedproofs}. 
For this, we will now need to consider morphisms \emph{between} operational theories.


\begin{definition} \label{defmorphismOTC} A \emph{morphism of OTCs} $F \colon \Theta \to \Theta'$ is a functor $F \colon \Events_{\Theta} \to \Events_{\Theta'}$ which preserves \operations{}, coarse-graining and the trivial system, in that:
\begin{itemize}
\item whenever $\{f_x \colon A \eventto B_x\}_{x \in X}$ forms \anoperation{}, so does $\{F(f_x) \colon F(A) \eventto F(B_x) \}_{x \in X}$;
\item $F(f \ovee g) = F(f) \ovee F(g)$ for all compatible events $f,g$;
\item $\discard{F(I)} \colon F(I) \eventto I$ is an isomorphism.
\end{itemize}
A \emph{morphism of monoidal OTCs} is additionally strong symmetric monoidal as a functor, with $\discard{F(I)}^{-1}$ as its coherence isomorphism $I \eventto F(I)$. We write $\MonOTC$ for the category of (monoidal) OTCs and morphisms between them.
\end{definition}

\begin{lemma} \label{lemma:morphismOTC} Any morphism of OTCs $F \colon \Theta \to \Theta'$ satisfies $\discard{F(A)} = \discard{F(I)} \eventcomp F(\discard{A}) \colon F(A) \eventto I$. Further, $F$ preserves impossible events $0_{A,B}$, and any direct sums which exist in $\Theta$.
\end{lemma}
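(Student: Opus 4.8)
The plan is to prove the three claims in order, each feeding into the next.

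\textbf{The discarding identity.} First I would show $\discard{F(A)} = \discard{F(I)} \eventcomp F(\discard{A})$. Since $\discard{A}$ is the unique deterministic effect (Lemma~\ref{lemmaCausality}), the singleton $\{\discard{A}\}$ is a test, so by preservation of tests $\{F(\discard{A})\}$ is a test; that is, $F(\discard{A}) \colon F(A) \eventto F(I)$ is deterministic. By the characterisation of determinism in Lemma~\ref{lemma:OTCConsequences}(ii) — an event is deterministic exactly when post-composing discarding returns discarding — applied to $F(\discard{A})$ this reads $\discard{F(I)} \eventcomp F(\discard{A}) = \discard{F(A)}$, which is the claim.

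\textbf{Impossible events.} This is the main work, and I would reduce the general case to zero \emph{effects}. For $0_{A,I}$ we have $0^{\bot} = \discard{}$ (Lemma~\ref{lemmaCausality}), so $\{0_{A,I}, \discard{A}\}$ is a test; applying $F$ and then using control (Assumption~\ref{assump:control}) to post-compose the deterministic effect $\discard{F(I)}$ on each branch shows that $\{\discard{F(I)} \eventcomp F(0_{A,I}),\ \discard{F(I)} \eventcomp F(\discard{A})\}$ is a test. By the discarding identity the second entry is $\discard{F(A)}$, so the first, $e := \discard{F(I)} \eventcomp F(0_{A,I})$, is a complement of $\discard{F(A)}$; but $0_{F(A),I}$ is also such a complement (again since $0^{\bot} = \discard{}$), so uniqueness of complements (Assumption~\ref{assump:complementaryeffects}) forces $e = 0_{F(A),I}$, and composing with the inverse of the isomorphism $\discard{F(I)}$ together with absorption of $0$ gives $F(0_{A,I}) = 0_{F(A),F(I)}$. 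For general $0_{A,B}$ I would then use the factorisation $0_{A,B} = 0_{I,B} \eventcomp 0_{A,I}$ (an instance of absorption), so $F(0_{A,B}) = F(0_{I,B}) \eventcomp F(0_{A,I}) = F(0_{I,B}) \eventcomp 0_{F(A),F(I)} = 0_{F(A),F(B)}$, the last step again by absorption. The main obstacle is that positivity is \emph{not} assumed, so the naive route ``$\discard{} \eventcomp F(0) = 0$ hence $F(0) = 0$'' is unavailable; passing through complementary effects is what circumvents this.

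\textbf{Direct sums.} Finally, suppose $\{A_x\}_{x \in X}$ has a direct sum $(\bigoplus_{x} A_x, \triangleright_y)$. For non-empty $X$ I would invoke Lemma~\ref{OTCPlus_Coproduct_iff}, extracting from (i)$\Rightarrow$(ii) coprojections $\coproj_x$ for which $\triangleright_y \eventcomp \coproj_x$ is $\id{}$ when $x = y$ and $0$ otherwise, and $\bigovee_{x} \coproj_x \eventcomp \triangleright_x = \id{}$. Transporting this data along $F$ yields the same equations for $F(\triangleright_x)$ and $F(\coproj_x)$ — using that $F$ preserves tests (hence partial tests, and thus compatibility), preserves coarse-graining (extended from the binary case to finite families by induction), and, crucially, preserves impossible events (Part~2) to handle the off-diagonal $0$ terms — so by (ii)$\Rightarrow$(i) the object $F(\bigoplus_{x} A_x)$ with projections $F(\triangleright_y)$ is a direct sum of $\{F(A_x)\}$. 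The empty case is a zero object $Z$, for which $\id{Z} = 0_{Z,Z}$ by terminality; applying $F$ and Part~2 gives $\id{F(Z)} = 0_{F(Z),F(Z)}$, and absorption then forces every event into or out of $F(Z)$ to be zero, so $F(Z)$ is again a zero object, i.e.\ an empty direct sum.
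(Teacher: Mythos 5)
Your proof is correct and follows essentially the same route as the paper's: preservation of tests makes $F(\discard{A})$ deterministic, post-composing $\discard{F(I)}$ and invoking uniqueness of complements gives $F(0_{A,I}) = 0$, the general case factors through $0_{A,B} = 0_{I,B} \eventcomp 0_{A,I}$, and direct sums are preserved via the characterisation in Lemma~\ref{OTCPlus_Coproduct_iff}. The extra detail you supply (e.g.\ the empty direct sum) is consistent with the paper's remarks and fills in steps the paper leaves implicit.
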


We write $\cat{(Mon)OTC^{\otcplus}}$ for the full subcategory of $\MonOTC$ of OTCs coming with direct sums. By the above result, morphisms in $\cat{(Mon)OTC^{\otcplus}}$ preserve direct sums, as expected. 


Our first structural result shows that $\Theta^{\otcplus}$ is the OTC in which we `freely add direct sums' to an OTC $\Theta$. For this, we define $\OTCPlusStrict$ to be the category of (monoidal) OTCs $(\Theta, \oplus)$ coming with specified direct sums, and morphisms of OTCs $F \colon \Theta \to \Theta'$ which preserve them strictly, i.e.\ 
for which $F( \bigoplus_{x \in X} A_x) = \bigoplus_{x \in X} F(A_x)$ in $\Theta'$, with projection events $F(\triangleright_x)$, for each direct sum $\bigoplus_{x \in X} A_x$ in $\Theta$.

\begin{theorem} \label{thm:FreeOTCPlus} The assignment $\Theta \mapsto \Theta^{\otcplus}$ defines a left adjoint $(-)^{\otcplus}$ to the forgetful functor $U \colon \OTCPlusStrict \to \OTC$.
\end{theorem}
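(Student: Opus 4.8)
The plan is to exhibit the adjunction $(-)^{\otcplus} \dashv U$ by supplying its unit together with the associated couniversal property, which is equivalent to giving the natural bijection $\OTCPlusStrict(\Theta^{\otcplus}, (\Theta', \oplus)) \cong \OTC(\Theta, U(\Theta', \oplus))$. First I would define the unit $\eta_\Theta \colon \Theta \to U(\Theta^{\otcplus})$ to be the evident embedding, sending each system $A$ to the singleton $\indsys{A}$ and each event $f \colon A \eventto B$ to the corresponding single-entry event $\indsys{A} \eventto \indsys{B}$ of $\Theta^{\otcplus}$. That $\eta_\Theta$ is a morphism of OTCs is routine given the description of $\Theta^{\otcplus} = \OTCFrom{\ParOpCat(\Theta)}$ in Theorem~\ref{thm:ParFormToOTC}: a collection $\{f_x\}_{x \in X}$ is a (partial) \operation{} in $\Theta$ exactly when the corresponding single map $\indsys{A} \catpopsto \indsys{B_x}_{x \in X}$ exhibits $\{\eta_\Theta(f_x)\}_{x \in X}$ as one in $\Theta^{\otcplus}$; coarse-graining of singletons agrees on the nose; and $\discard{\eta_\Theta(I)} = \id{\indsys{I}}$ is an isomorphism.

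For the couniversal property, given any $(\Theta', \oplus)$ in $\OTCPlusStrict$ and a morphism $F \colon \Theta \to U(\Theta', \oplus)$, I would construct the mediating morphism $\hat{F} \colon \Theta^{\otcplus} \to (\Theta', \oplus)$ essentially as in the proof of Theorem~\ref{thm:EquivIffDirSums}, now interposing $F$. On systems set $\hat{F}(\indsys{A_x}_{x \in X}) = \bigoplus_{x \in X} F(A_x)$, using the specified direct sums of $\Theta'$; on an event $M \colon \indsys{A_x}_{x \in X} \eventto \indsys{B_y}_{y \in Y}$ let $\hat{F}(M)$ be the unique event $\bigoplus_{x} F(A_x) \eventto \bigoplus_{y} F(B_y)$ satisfying $\triangleright_y \eventcomp \hat{F}(M) \eventcomp \coproj_x = F(M(x,y))$ for all $x,y$. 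This exists and is unique because, by Lemma~\ref{OTCPlus_Coproduct_iff}, direct sums in $\Theta'$ are coproducts with jointly monic projections, and because $F$, preserving total \operations{}, preserves partial \operations{} and hence carries each column of $M$ to a partial \operation{} of $\Theta'$. By construction $\hat{F}$ preserves direct sums strictly, and restricted along $\eta_\Theta$ it recovers $F$ under the singleton identification $\bigoplus_{x \in \{*\}} F(A) = F(A)$, yielding the triangle identity $U(\hat{F}) \circ \eta_\Theta = F$.

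The bulk of the work, and the main obstacle, is verifying that $\hat{F}$ is genuinely a morphism in $\OTCPlusStrict$: that it is functorial and preserves \operations{}, coarse-graining and the trivial system. Here I would lean on the fact that, as established for $\OTCFrom{-}$ in Theorem~\ref{thm:ParFormToOTC} and Lemma~\ref{OTCPlus_Coproduct_iff}, each of these structures in $\Theta^{\otcplus}$ is expressed purely through its coproduct (equivalently, direct sum) structure: \operations{} are exactly the matrices whose columns arise from maps into a coproduct, and coarse-graining is composition with a codiagonal. Since $\hat{F}$ preserves coproducts and projections by definition, these descriptions transport verbatim, with preservation of \operations{} and of coarse-graining reducing to the universal property of direct sums; in the monoidal case one further invokes Lemma~\ref{OTCWithDirSums_distrib}, the distributivity of $\otimes$ over direct sums supplying the coherence isomorphisms that make $\hat{F}$ strong symmetric monoidal. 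Finally, uniqueness of $\hat{F}$ holds because any strict direct-sum-preserving $G$ with $U(G) \circ \eta_\Theta = F$ must satisfy $G(\indsys{A_x}_{x \in X}) = \bigoplus_{x} F(A_x)$ on objects and, by joint monicity of the $\triangleright_y$, must agree with $\hat{F}$ on every event; naturality of the induced bijection of hom-sets is then automatic from this couniversal property.
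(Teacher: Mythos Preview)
Your proposal is correct and follows essentially the same approach as the paper: define the unit $\eta_\Theta$ as the singleton embedding, then for each $F \colon \Theta \to U(\Theta',\oplus)$ build the unique strict extension $\hat{F}$ by $\hat{F}(\indsys{A_x}_x) = \bigoplus_x F(A_x)$ and $\triangleright_y \eventcomp \hat{F}(M) \eventcomp \coproj_x = F(M(x,y))$, appealing to Lemma~\ref{OTCPlus_Coproduct_iff} for existence and uniqueness. Your write-up is in fact more explicit than the paper's about why $\hat{F}$ is a morphism of OTCs and about the monoidal case via Lemma~\ref{OTCWithDirSums_distrib}.
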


Next, we consider morphisms of operational categories (c.f.~\cite{Partial2015Cho},~\cite{StatesConv2015JWW}). A \emph{morphism of operational categories in partial form} $F \colon (\catC, \discard{}) \to (\catC', \discard{})$ is a functor $F \colon \catC \to \catC'$ preserving finite coproducts, and which `preserves discarding' in that
$\discard{F(I)} \colon F(I) \to I$ is an isomorphism and $ \ \discard{F(A)} = \discard{F(I)} \circ F(\discard{A}) \colon F(A) \to I$ for all objects $A$. Again, in the monoidal case, we require $F$ to be strong symmetric monoidal as a functor, with $\discard{F(I)}^{-1}$ as its coherence isomorphism $I \to F(I)$. A \emph{morphism of (monoidal) operational categories} $F \colon \catB \to \catB'$ is a (strong symmetric monoidal) functor which preserves coproducts and the terminal object. We write $\cat{(Mon)OpCat}$ and $\cat{(Mon)OpCatPar}$ for the categories of (monoidal) operational categories in total and partial form and their morphisms, respectively.

\begin{theorem} \label{thm:1catEquivalence}
The assignments of Theorem~\ref{thm:MainEquivalence} define equivalences of categories 
\begin{equation*} 
\begin{tikzcd}[bend angle = 15]
 \cat{(Mon)OTC^{\otcplus}} \arrow[rr, bend left, "\Events_{(-)}"] & { \  \simeq}&  \cat{(Mon)OpCatPar} \arrow[ll, bend left, "\OTCFrom{-}"]  \arrow[rr, bend left, "(-)_{\total}"] & {\! \simeq} & \cat{(Mon)OpCat}  \arrow[ll, bend left, "\Partial(-)"]
 \end{tikzcd}
\end{equation*}
in which $\cat{(Mon)OTC^{\otcplus}} \simeq \cat{(Mon)OpCatPar}$ is in fact an isomorphism of categories. These restrict to equivalences between the full subcategories of OTCs with direct sums satisfying Axioms~\ref{axiom:positivity} and \ref{axiom:obsdetops}, and effectuses in partial and total form, respectively. 
\end{theorem}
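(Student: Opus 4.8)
The plan is to build the two correspondences of Theorem~\ref{thm:MainEquivalence} up from object-level bijections to functors, then verify that the left-hand pair is an isomorphism and the right-hand pair an equivalence, and finally observe that the subcategory statement is essentially formal. Throughout I take the object correspondences $\catC = \Events_{\Theta}$, $\Theta = \OTCFrom{\catC}$, $\catB \simeq \catC_{\total}$, $\catC \simeq \Partial(\catB)$ as given. In the monoidal case each step carries an extra layer of coherence bookkeeping—checking that the relevant functors are strong symmetric monoidal with $\discard{F(I)}^{-1}$ as coherence isomorphism—but this introduces no new ideas and I would run the two cases in parallel.

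First I would treat the isomorphism $\cat{(Mon)OTC^{\otcplus}} \simeq \cat{(Mon)OpCatPar}$. On objects this is already a pair of mutually inverse assignments with \emph{literal} inverses: for $\Theta$ with direct sums, $\Events_{\Theta}$ is an operational category in partial form with $\OTCFrom{\Events_{\Theta}} = \Theta$ by Lemma~\ref{thm:OTCDSumstoopcat}, while $\Events_{\OTCFrom{\catC}} = \catC$ holds on the nose by the construction of $\OTCFrom{\catC}$. The real content is that the two notions of morphism coincide as functors $\Events_{\Theta} \to \Events_{\Theta'}$. In one direction, a morphism of operational categories in partial form preserves finite coproducts and discarding; since partial operations and coarse-graining are defined purely from the coproduct structure and the codiagonals $\triangledown$ (Section~\ref{sec:OpCatsToOpTheories}), such a functor preserves operations and coarse-graining and so is a morphism of OTCs. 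Conversely, a morphism of OTCs preserves impossible events and all existing direct sums by Lemma~\ref{lemma:morphismOTC}, and in an OTC with direct sums the finite coproducts of $\Events_{\Theta}$ are exactly the direct sums (Lemmas~\ref{OTCPlus_Coproduct_iff} and~\ref{thm:OTCDSumstoopcat}), with the empty coproduct—the zero object—preserved as the empty direct sum; hence such a functor preserves finite coproducts. The discarding conditions agree on both sides, so the two morphism classes are identical, the assignments on arrows are the identity, and both round-trips are literally the identity functor, giving an isomorphism of categories.

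Next I would establish the equivalence $\cat{(Mon)OpCatPar} \simeq \cat{(Mon)OpCat}$ via $(-)_{\total}$ and $\Partial(-)$. For functoriality of $(-)_{\total}$ I would check that a morphism $F$ restricts to total arrows: since $F$ satisfies $\discard{F(A)} = \discard{F(I)} \circ F(\discard{A})$, it sends any total $f$ (one with $\discard{} \circ f = \discard{}$) to a total $F(f)$, and the restriction $F_{\total}$ is then a morphism of operational categories. For $\Partial(-)$, a coproduct- and terminal-object-preserving $G \colon \catB \to \catB'$ induces $\Partial(G)$ acting as $G$ on objects and sending a partial arrow $f \colon A \to B + 1$ to $G(f)$ postcomposed with the canonical isomorphism $G(B+1) \simeq G(B) + 1$; I would verify this respects partial composition, identities, and the discarding maps $\coproj_1 \circ \unique$. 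The unit and counit of the equivalence are supplied object-wise by the isomorphisms $\totaspar{-} \colon \catB \simeq \Partial(\catB)_{\total}$ and $\triangleright \colon \Partial(\catC_{\total}) \simeq \catC$ of Theorem~\ref{thm:opparopcatEquivalence}; the remaining task is a direct diagram chase showing these families are natural in $\catB$ and $\catC$. This is only an equivalence rather than an isomorphism precisely because the round-trips produce freshly built categories $\Partial(\catC_{\total})$ and $\Partial(\catB)_{\total}$ that are canonically isomorphic to, but not equal to, $\catC$ and $\catB$.

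Finally, the subcategory statement is almost formal. Positivity and ``observations determine \operations{}'' are properties of \emph{objects}, and the corresponding classes on each side—OTCs with direct sums satisfying Axioms~\ref{axiom:positivity},~\ref{axiom:obsdetops}, operational categories in partial form whose total part is an effectus, and effectuses—are full subcategories. The lemma characterising positivity together with Corollary~\ref{cor:Effiscetainotcs} (via Lemma~\ref{lemma:EffAsOpCat}) show that an object lies in one such subcategory exactly when its image lies in the matching one, so the isomorphism and equivalence already constructed restrict as claimed. I expect the main obstacle to be the clean verification that $\Partial(-)$ is functorial and that the unit and counit are natural, both of which must be pushed through the coproduct coherence isomorphisms $G(B+1) \simeq G(B)+1$; by contrast the conceptual crux, the identification of the two morphism notions on the left, is essentially handed to us by Lemmas~\ref{lemma:morphismOTC} and~\ref{OTCPlus_Coproduct_iff}.
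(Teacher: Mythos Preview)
Your proposal is correct and follows essentially the same strategy as the paper: identify the two morphism classes on the left via Lemmas~\ref{lemma:morphismOTC} and~\ref{OTCPlus_Coproduct_iff} to get a literal isomorphism, and for the right-hand side lift $\Partial(-)$ and $(-)_{\total}$ to functors using the canonical isomorphism $G(B+1)\simeq G(B)+1$, with unit and counit given by the object-wise isomorphisms of Theorem~\ref{thm:opparopcatEquivalence}.

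The only organisational difference is that the paper, following Cho, first establishes $\Partial(-)\dashv(-)_{\total}$ as a (2-)adjunction between the larger categories $\cat{(Mon)Total}$ and $\cat{(Mon)Partial}$ (where conditions~\ref{paropcatcatJM}) and~\ref{paropcatuniquetotal}) of Definition~\ref{def:opcatpartial} are dropped), verifying the triangle identities there, and only then restricts to operational categories and observes that the unit and counit components become isomorphisms. Your direct approach avoids this detour but must check naturality by hand; the paper's route packages that naturality into the adjunction verification and gives the 2-categorical statement for free. Either way the substantive checks---that $\Partial(G)$ respects partial composition through the coproduct coherence isomorphisms, and in the monoidal case that the coherence morphisms $u$ and $\phi_{A,B}$ are total---are the same.
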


Finally, combining Theorems~\ref{thm:FreeOTCPlus} and~\ref{thm:1catEquivalence}, we have the following.

\begin{corollary} \label{thm:1catOTCOpCatAdjunction} There is an adjunction 
\begin{equation*} 
\begin{tikzcd}[bend angle = 15, row sep = large]
 \OTC \arrow[rr, bend left, "\OpCat(-)"] & { \ \ \ \   \ \bot} &  \cat{(Mon)OpCat}_{\text{strict}} \arrow[ll, bend left, "\OTCFrom{\Partial(-)}"] 
 \end{tikzcd}
\end{equation*}
between $\cat{(Mon)OTC}$ and the category $\cat{(Mon)OpCat}_{\text{strict}}$ of (monoidal) operational categories coming with specified coproducts, and functors $F \colon \catB \to \catB'$ which preserve them strictly. This restricts to an adjunction between the respective full subcategories of OTCs satisfying Axioms~\ref{axiom:positivity} and \ref{axiom:combiningops} and effectuses in total form.  
\end{corollary}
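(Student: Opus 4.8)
The plan is to obtain the adjunction by composing the free–forgetful adjunction of Theorem~\ref{thm:FreeOTCPlus} with a ``specified-data, strict'' refinement of the equivalence of Theorem~\ref{thm:1catEquivalence}. First I would upgrade the equivalence $\cat{(Mon)OTC^{\otcplus}} \simeq \cat{(Mon)OpCat}$ to an equivalence
\[
E \colon \OTCPlusStrict \xrightarrow{\ \simeq\ } \cat{(Mon)OpCat}_{\text{strict}}
\]
between the two categories carrying \emph{chosen} structure and \emph{strict} morphisms. On objects, $E$ sends a (monoidal) OTC with specified direct sums $(\Theta, \oplus)$ to the operational category $(\Events_\Theta)_{\total} = \OpCat(\Theta)$ equipped with the coproducts induced by $\oplus$; its quasi-inverse sends $(\catB, +)$ to $\OTCFrom{\Partial(\catB)}$ with the direct sums induced by $+$. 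That these are mutually inverse follows from Lemma~\ref{thm:OTCDSumstoopcat} and the correspondences of Theorem~\ref{thm:MainEquivalence}; the point needing care is that a chosen direct sum in $\Theta$ corresponds under $\Events_{(-)}$ to a chosen coproduct in $\Events_\Theta$, which restricts to a coproduct in $(\Events_\Theta)_{\total}$ exactly as in the proof of Theorem~\ref{thm:paropcatToOpCat}, and that these identifications, and the corresponding preservation conditions on morphisms, are strict.

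Next I would observe that the intended right adjoint factors as $\OTCFrom{\Partial(-)} = U \circ E^{-1}$, where $U \colon \OTCPlusStrict \to \OTC$ is the forgetful functor of Theorem~\ref{thm:FreeOTCPlus}: indeed $E^{-1}(\catB) = \OTCFrom{\Partial(\catB)}$ equipped with its induced direct sums, and $U$ then discards that choice. Since $E^{-1}$ is an equivalence it is in particular a left adjoint (to $E$), and $(-)^{\otcplus} \dashv U$ by Theorem~\ref{thm:FreeOTCPlus}, so the composite $U \circ E^{-1}$ has left adjoint $E \circ (-)^{\otcplus}$. It then remains to identify this left adjoint with $\OpCat(-)$: on an OTC $\Theta$ we have $E\big((-)^{\otcplus}(\Theta)\big) = E(\Theta^{\otcplus}, \oplus) = \OpCat(\Theta^{\otcplus})$ with its coproducts, and $\OpCat(\Theta^{\otcplus}) \simeq \OpCat(\Theta)$ naturally in $\Theta$, as noted in the proof of Lemma~\ref{lemma:EffAsOpCat}. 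Since $\OpCat(F)$ acts by $\indsys{A_x}_{x\in X} \mapsto \indsys{F(A_x)}_{x\in X}$ and so strictly preserves the concatenation coproducts, $\OpCat(-)$ indeed lands in $\cat{(Mon)OpCat}_{\text{strict}}$, giving the adjunction $\OpCat(-) \dashv \OTCFrom{\Partial(-)}$.

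For the restriction I would verify that both functors preserve the relevant subclasses. If $\Theta$ satisfies Axioms~\ref{axiom:positivity} and~\ref{axiom:combiningops}, then by the lemma characterising positivity $\OpCat(\Theta)$ is positive, and by Lemma~\ref{lemma:EffAsOpCat} the effectus pullbacks~\eqref{eq:effectusdefnpullback} hold in $\OpCat(\Theta)$, so $\OpCat(\Theta)$ is an effectus; conversely, if $\catB$ is an effectus then $\OTCFrom{\Partial(\catB)}$ is a positive OTC with direct sums in which observations determine \operations{} (again Lemma~\ref{lemma:EffAsOpCat}), hence satisfies combining. Using $\OpCat(\Theta^{\otcplus}) \simeq \OpCat(\Theta)$ one checks that $(-)^{\otcplus}$ itself preserves positivity and combining, and that $E$ restricts to the equivalence between effectuses and OTCs-with-direct-sums satisfying the axioms from Theorem~\ref{thm:1catEquivalence}; the whole adjunction therefore cuts down to the stated one between OTCs satisfying Axioms~\ref{axiom:positivity},~\ref{axiom:combiningops} and effectuses in total form.

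The main obstacle I expect is the first step: carefully lifting the equivalence of Theorem~\ref{thm:1catEquivalence} to the categories with \emph{specified} coproducts and direct sums and \emph{strict} morphisms, and checking that this lifted equivalence is strictly compatible with the two forgetful functors into $\OTC$, so that $\OTCFrom{\Partial(-)}$ really does factor as $U \circ E^{-1}$. Once that bookkeeping is in place, the remainder is the formal composition of an adjunction with an equivalence together with the routine identification of the resulting left adjoint as $\OpCat(-)$.
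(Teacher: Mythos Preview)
Your proposal is correct and follows essentially the same approach as the paper, which simply states that the corollary follows by combining Theorems~\ref{thm:FreeOTCPlus} and~\ref{thm:1catEquivalence}. One small simplification: by your own definition of $E$ you have $E(\Theta^{\otcplus}) = (\Events_{\Theta^{\otcplus}})_{\total} = \ParOpCat(\Theta)_{\total} = \OpCat(\Theta)$ on the nose, so the detour through $\OpCat(\Theta^{\otcplus})$ and the natural isomorphism $\OpCat(\Theta^{\otcplus}) \simeq \OpCat(\Theta)$ is unnecessary.
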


\section{Discussion}

In this work we introduced operational theories with control, as structures describing the experiments one may perform in a given domain of physics, and argued that they are best understood using the notion of an operational category. We saw that, starting from any theory $\Theta$, its category $\OpCat(\Theta)$ forms an operational category, with the coproducts and terminal object providing an elegant description of the flow of classical data during tests in $\Theta$. 

Alternatively, any operational category $\catB$ may be viewed in partial form $\catC = \Partial(\catB)$, and seen as the category of events of an OTC with direct sums $\OTCFrom{\catC}$, and conversely every such theory arises in this way. In particular, studying $\OpCat(\Theta)$ is equivalent to studying the completion $\Theta^{\otcplus}$ of our theory under direct sums, with category of events $\ParOpCat(\Theta) \simeq \Partial(\catB)$. 

As a special case, we saw that effectuses may be identified with OTCs with direct sums satisfying our Axioms~\ref{axiom:positivity} and \ref{axiom:obsdetops}.

\subsubsection*{Comparison with operational-probabilistic theories}

Our operational theories with control (OTCs) are based on the operational-probabilistic theories (OPTs) of~\cite{PhysRevA.84.012311InfoDerivQT}, and as such, the majority of their results and proofs carry over immediately into any OTC, and hence any (separated) operational category. OPTs differ by not assuming causality, but are otherwise less general. Firstly, Chiribella et al.\ only consider \operations{} of the form $\{f_x \colon A \eventto B\}_{x \in X}$, without varying output systems, though note it is natural to consider more general \operations{}, particularly in causal theories \cite{chiribella2010purification}.  More crucially, along with separation (see Section~\ref{sec:AdditionalAssumptions}), OPTs come with several extra assumptions typical to probabilistic approaches:

\begin{itemize}
\item Scalars $p \colon I \eventto I$ correspond to actual probabilities $p \in [0,1]$. Along with separation, this allows events $f \colon A \eventto B$ to be described by positive maps between ordered vector spaces.
\item Each such space of maps $\Events(A,B)$ is taken to be closed under pointwise limits, on operational grounds.  
\item Each space of states $\Events(I, A)$ is taken to be finite-dimensional.
\end{itemize}

It would be desirable to find further categorical properties which one may add to the definition of an operational category which ensure that each of the above hold, hence providing the full reasoning power of OPTs within a purely categorical framework.

\subsubsection*{Comparison with categorical quantum mechanics}

The framework of categorical quantum mechanics (CQM) due to Abramsky and Coecke \cite{abramskycoecke:categoricalsemantics} resembles our approach. Both study physical theories such as quantum theory abstractly, using symmetric monoidal categories in which arrows $f \colon A \to B$ are interpreted as physical processes. However, there are two main differences. 

Firstly, CQM models quantum theory using the category $\cat{FHilb}$ of finite-dimensional Hilbert spaces and linear maps, which lacks discarding due to the `No-deleting' theorem \cite{pati2000impossibility}. In contrast, since we include classical systems in our description, our basic example of a `quantum' operational category $\FDimCStar$ instead takes (finite-dimensional) C*-algebras as its objects, and includes `mixed' states and processes. Secondly, our approach only models the subcategory of (sub)unital maps, while in CQM one considers arbitrary completely positive maps. This is closely related to the fact that our categories only come with a partial addition $\ovee$, rather than a total one. In future work, we plan to describe a construction which allows one to pass from any suitable operational category, coming with such a partial addition $\ovee$, to one of supernormalised maps, in which $\ovee$ then becomes total.

\subsubsection*{Biproducts versus coproducts }

In Section~\ref{subsec:Direct Sums} we used the terminology of `direct sums' to describe the coproducts in the category $\Partial(\catB)$. Direct sums of spaces are more commonly understood categorically using \emph{biproducts}. In fact, our coproducts closely resemble biproducts in several ways: they induce a partial addition $f \ovee g$ on morphisms in $\Partial(\catB)$ in just the same way as biproducts induce a total addition, and their projections $\triangleright_i$ satisfy the same set of equations as those of a biproduct (Lemma~\ref{OTCPlus_Coproduct_iff}). Our completion of an OTC $\Theta$ to one with direct sums $\Theta^{\otcplus}$ resembles the completion of any semiadditive category to one coming with biproducts. Biproduct structures were used by Abramsky and Coecke in their original paper \cite{abramskycoecke:categoricalsemantics} to model classical data, similarly to our use of coproducts. However, since these biproducts take place in the category $\cat{FHilb}$, the addition on morphisms $f + g$ they induce models quantum \emph{superpositions}, rather than simply coarse-graining.

\subsubsection*{A categorical semantics for operational theories}

Effectus theory has been developed explicitly as a categorical logic for use in the modelling of quantum computation. Our results make this precise, by demonstrating that effectuses, or more generally operational categories, have as their `internal logic' the language of operational theories with control. 

This is akin to the well-known correspondences between intuitionistic logic and topoi, or between models of the simply typed $\lambda$-calculus and Cartesian closed categories. The latter, `Curry-Howard-Lambek', correspondence provides a foundation for functional programming languages such as Haskell \cite{abramsky2011introduction}, and similarly, one may hope that operational categories can be used as the foundation for a programming language suitable for \emph{general probabilistic computation}. Indeed, effectus theory is already being explored as the basis for a quantum programming language by Adams \cite{adams2014qpel}. 

Practically, this correspondence allows one to prove results about operational categories by reasoning using operational theories, and vice versa.  In one direction, the perspective of effectus theory, based on categorical logic, can help clarify ideas in operational physics. For example, in effectus theory, effects $p \colon A \to 1 + 1$ on a physical system are typically referred to as \emph{predicates}. The analogy is with classical logic, described by the effectus $\Set$, in which predicates correspond precisely to characteristic functions $p \colon A \to 1 + 1 = \{0,1\}$, and hence subsets $P \subseteq A$. In this way, one can identify in what ways `operational logic' is really like classical logic, and in what ways it differs.

Conversely, ideas from operational physics, such as those in the quantum reconstruction theorem of \cite{PhysRevA.84.012311InfoDerivQT}, may now be applied to effectus theory. One may hope for a rich interplay between operational ideas and the universal properties studied in categorical logic, just as topos theory has benefited from its connections with logic and geometry. For example, in \cite{QC2015ChoJaWW}, intriguing first steps are taken towards understanding the process of measurement through chains of adjunctions.


\subsection*{Acknowledgements}
Many thanks to Chris Heunen for interesting discussions and feedback on this work, and to Bart Jacobs, who suggested the adjunction of Corollary~\ref{thm:1catOTCOpCatAdjunction}, and allowed my visit to the Institute of Computing at Radboud University Nijmegen during September 2015, where some of these ideas were developed. This work benefited further from discussions with Aleks Kissinger, Kenta Cho, and Bas and Bram Westerbaan, and was supported by EPSRC Studentship OUCL/2014/SET.

\bibliographystyle{eptcs_dphiltweak}
\bibliography{main_v18}

\appendix

\section{Basic Notions from Category Theory} \label{appendix:cattheory}

Here we provide a quick introduction to some of the basic categorical notions used above. The classic text on category theory is \cite{mac1978categories}, while \cite{coecke2011categories} provides an introduction aimed at physicists, including an introduction to symmetric monoidal categories. 

We say that an arrow $f \colon A \to B$ in a category is \emph{monic} when $f \circ g = f \circ h \implies g = h$, for any pair of arrows $g, h \colon C \to A$.  More strongly, a morphism $f \colon A \to B$ is an \emph{isomorphism} when there is some (necessarily unique) arrow $f^{-1} \colon B \to A$ satisfying $f^{-1} \circ f = \id{A}$ and $f \circ f^{-1} = \id{B}$. 
An object $1$ is \emph{terminal} when  every object $A$ has a unique arrow $\unique \colon A \to 1$. Similarly, an object $0$ is \emph{initial} when there is always a unique arrow $\unique \colon 0 \to A$. 

In any category, a \emph{coproduct} of a collection of objects $\{A_x\}_{x \in X}$ is given by an object $\coprod_{x \in X} A_x$ and morphisms $\coproj_y \colon A_y \to \coprod_{x \in X} A_x$, called \emph{coprojections}, such that, for each collection of morphisms $\{f_x \colon A_x \to B \mid x \in X \}$, there is a unique arrow $f \colon \coprod_{x \in X} A_x \to B $ satisfying $f \circ \coproj_x = f_x$ for all $x \in X$. A category has coproducts for all such finite collections $X$ as long as it has an initial object $0$ and coproducts of pairs of objects, which are denoted by $A + B$. In this case, one has $\coprod^n_{i = 1} A_i \simeq A_1 + \ldots + A_n$. For any pair of arrows $f \colon A \to C$, $g \colon B \to C$, we write $[f,g] \colon A + B \to C$ for the unique arrow satisfying $[f,g] \circ \coproj_1 = f$ and $[f,g] \circ \coproj_2 = g$. Given arrows $f \colon A \to B$ and $g \colon C \to D$, we denote by $f + g \colon A + C \to B + D$ the unique arrow with $(f + g) \circ \coproj_1 = \coproj_1 \circ f$ and $(f + g) \circ \coproj_2 = \coproj_2 \circ g$. We define the morphism $f_1 + \ldots + f_n  \colon A_1 + \ldots + A_n \to B_1 + \ldots + B_n$ similarly, given $f_i \colon A_i \to B_i$ for $i=1, \ldots, n$.
For any $n$ and object $A$, the coproduct $\coprod^n_{i = 1} A = \underbrace{A + \ldots + A}_n$ is called the $n$-th \emph{copower} of $A$, denoted $n \cdot A$. It comes with a \emph{codiagonal} morphism $\triangledown \colon n \cdot A \to A$, defined by $\triangledown \circ \coproj_i = \id{}$, for all $i$.

Finally, a \emph{pullback} of a pair of arrows $f \colon A \to C$, $g \colon B \to C$ is given by an object $P$ and arrows $p \colon P \to A$, $q \colon P \to B$ with the following property: for every pair of arrows $h \colon Q \to A$, $k \colon Q \to B$ satisfying $g \circ k = f \circ h$, there is a unique arrow $x \colon Q \to P$ with $k = q \circ x$ and $h = p \circ x$.
\[
\begin{tikzcd}
Q
\arrow[drr, bend left, "h"]
\arrow[ddr, bend right, swap, "k"]
\arrow[dr, dotted, "\exists \unique {x}" description] & & \\
& P \arrow[r, "p"] \arrow[d, "q", swap]
& A \arrow[d, "f"] \\
& B \arrow[r, "g"]
& C
\end{tikzcd}
\]

\section{Properties of Operational Categories} \label{appendix:propertiesOfOpCats}

We now prove some of the properties of operational categories mentioned in the main text. \\

\noindent \emph{Proof of Lemma~\ref{opcatlemma}.} The coprojection ${\coproj}^{A,B}_1:A \to A + B$ satisfies $(\id{A} + \unique) \circ {\coproj_1}^{A,B} = \coproj_1 \colon A \to A + 1$, which is monic due to the pullback in the definition of an operational category. Hence so is ${\coproj}^{A,B}_1$. 

For the left-hand pullback, consider morphisms $f: A \to B$, $g: D \to B$, $h: D \to A + C$ for which $(f + \id{}) \circ h = \coproj_1 \circ g: D \to B + C$. Then letting $k = (\id{} + \unique) \circ h: D \to A + 1$, it's easy to see that $(\unique + \unique) \circ k = \coproj_1 \circ \unique: D \to 1 + 1$, and so by the pullback in the definition of an operational category,  there is a unique $r: D \to A$ such that $k = \coproj_1 \circ r$. Working in the category $\Partial(\catB)$ we have $\triangleright_1 \partialcomp h = \triangleright_1 \partialcomp k = \triangleright_1 \partialcomp \coproj_1 \partialcomp r = r$ and $\triangleright_2 \partialcomp h = \triangleright_2 \partialcomp (f + \id{}) \partialcomp h = \triangleright_2 \partialcomp \coproj_1 \partialcomp g = 0 = \triangleright_2 \partialcomp \coproj_1 \partialcomp r$.
By joint monicity of the $\triangleright_i$ (Lemma~\ref{lemma:opcatpartialjointmon}), we conclude that $h = \coproj_1 \circ r$ in $\catB$. Next note that, in $\catB$,
$\coproj_1 \circ g = (f + \id{}) \circ \coproj_1 \circ r = \coproj_1 \circ f \circ r$. Since $\coproj_1$ is monic, $g = f \circ r$, and this $r$ is unique, as required. 

The right-hand pullback is in fact a special case of the left-hand one:
\[
\begin{tikzpicture}[commutative diagrams/every diagram]
\node (tl) at (0,0) {$0$};
\node (bl) at (0,-1.5) {$0 + B$};
\node[rotate=25] (bbl) at (-0.6, -1.75) {$\simeq$};
\node (bbl) at (-1,-2){$B$};
\node (tr) at (2,0) {$A$}; 
\node (br) at (2,-1.5) {$A + B$};
\path[commutative diagrams/.cd, every arrow, every label]
(tl) edge node{$\unique$} (tr)
(tl) edge[bend right] node[swap]{$\unique$} (bbl)
(tl) edge node[swap]{$\coproj_1$} (bl)
(bl) edge node{$\unique + \id{}$} (br)
(tr) edge node{$\coproj_1$} (br)
(bbl) edge[bend right, swap] node{$\coproj_2$} (br);
\end{tikzpicture}
\]
\hfill\(\qed\)

Next we turn to the positivity Axiom from Section~\ref{sec:AdditionalAssumptions}.

\begin{lemma}
 If $(\catC, I, \discard{})$ and $(\catC, I, \discard{}')$ are positive operational categories in partial form, then $\discard{A} = \discard{A}'$ for all objects $A$.
 \end{lemma}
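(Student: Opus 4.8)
The plan is to compare the two discardings on a fixed object $A$ by exhibiting each as an orthocomplement with respect to the other inside the partial monoid of effects $\catC(A,I)$, and then cancelling. First I would apply condition~\ref{paropcatuniquetotal}) of Definition~\ref{def:opcatpartial} for the structure $\discard{}$ to the morphism $\discard{A}' \colon A \to I$: this yields a unique $\discard{}$-total $g \colon A \to I + I$ with $\triangleright_1 \circ g = \discard{A}'$. Setting $e := \triangleright_2 \circ g$, the pair $(\discard{A}', e)$ is compatible with witness $g$, so by the definition of coarse-graining $\discard{A}' \ovee e = \triangledown \circ g$. Since $g$ is $\discard{}$-total and $\discard{I + I} = [\discard{I}, \discard{I}] = [\id{I}, \id{I}] = \triangledown$ by condition~\ref{paropcatatdiscardConds}), this reads $\discard{A}' \ovee e = \discard{A}$. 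Running the symmetric argument with the roles of $\discard{}$ and $\discard{}'$ exchanged produces an effect $e' \colon A \to I$ with $\discard{A} \ovee e' = \discard{A}'$.

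Next I would substitute one identity into the other. Combining $\discard{A} = \discard{A}' \ovee e$ with $\discard{A}' = \discard{A} \ovee e'$ gives $\discard{A} = (\discard{A} \ovee e') \ovee e$. Both displayed sums are defined, so the associativity axiom of partial commutative monoids lets me rewrite the right-hand side as $\discard{A} \ovee (e' \ovee e)$, with $e' \ovee e$ in particular defined. Cancelling $\discard{A}$ — using that coarse-graining of effects is cancellative, which holds in any operational category in partial form by the argument of Lemma~\ref{lemma:OTCConsequences}(iii) via $\discard{}$-orthocomplements — yields $e' \ovee e = 0$. Finally, positivity (in the form $f \ovee g = 0 \implies f = g = 0$) forces $e = e' = 0$, whence $\discard{A}' = \discard{A} \ovee e' = \discard{A}$, as required.

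The main thing to get right is the bookkeeping of which structure each ingredient belongs to: the operations $\ovee$, the zero $0$, the codiagonal $\triangledown$, and the joint monicity of the $\triangleright_i$ (Lemma~\ref{lemma:opcatpartialjointmon}) depend only on the coproduct and zero-object structure of $\catC$ and are therefore shared by both discardings, whereas \emph{totality} and orthocomplementation are relative to a chosen discarding. I must check that the cancellation step invokes only $\discard{}$-orthocomplements (legitimate, since there both sides equal the $\discard{}$-top $\discard{A}$), and that the sole place where genuine positivity is used — beyond properties holding in every operational category in partial form — is the final implication $e' \ovee e = 0 \Rightarrow e = 0$. The only residual obstacle is the well-definedness of the partial sums at each stage, which is dispatched by partial-monoid associativity together with the fact that the composite sums are visibly defined.
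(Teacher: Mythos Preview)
Your argument is correct and follows the same route as the paper: exhibit $e, e'$ with $\discard{A}' \ovee e = \discard{A}$ and $\discard{A} \ovee e' = \discard{A}'$, obtain $e' \ovee e = 0$ by cancellation, and invoke positivity. The one point to tighten is your appeal to ``the associativity axiom of partial commutative monoids'' to rebracket $(\discard{A} \ovee e') \ovee e$ as $\discard{A} \ovee (e' \ovee e)$: the paper only records PCM structure on homsets for \emph{effectuses} (Section~\ref{sec:EffectusTheory}), not for arbitrary operational categories in partial form, so the strong associativity axiom is not available here as a black box. What you actually need is that the triple $\{\discard{A}, e', e\}$ is compatible, and this follows from one further use of condition~\ref{paropcatuniquetotal}) for $\discard{}$ applied to your witness $g' \colon A \to I + I$, yielding a $\discard{}$-total $h \colon A \to (I+I)+I$ whose three projections are $\discard{A}$, $e'$, and---by uniqueness of $\discard{}$-orthocomplements---necessarily $e$. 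The paper's own proof makes exactly this leap in the line ``then $\{\discard{A}, a, b\}$ are all compatible'', so your proof matches it in both strategy and level of detail.
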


\begin{proof}
Note that coarse-graining and compatibility of events are identical in the OTCs defined by $(\catC, I, \discard{})$ and $(\catC, I, \discard{}')$.  Let $a, b \colon A \to I$ be such that $\discard{A}' \ovee a = \discard{A}$, and $\discard{A} \ovee b = \discard{A}'$. Then $\{\discard{A}, a , b \}$ are all compatible, and so $a \ovee b = 0$. By positivity $a = b = 0$ and so $\discard{A} = \discard{A}'$. 
\end{proof}

\begin{lemma} \label{lemma:PosOpCat} Let $\catB$ be a positive operational category. Then: 
\begin{enumerate}[i)] \itemsep0pt
\item \label{anyisototal} Any isomorphism in $\Partial(\catB)$ is total. 
\item \label{initobstrict} The initial object $0$ is strict in $\catB$. That is, any morphism $f \colon A \to 0$ is an isomorphism.
\item \label{posopcatpull} Diagrams of the form \eqref{eq:generalPosPullbacks} are pullbacks in $\catB$.
\end{enumerate}
\end{lemma}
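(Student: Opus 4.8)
The plan is to do all of the reasoning inside the operational theory $\OTCFrom{\Partial(\catB)}$, whose events are precisely the arrows of $\catC = \Partial(\catB)$, and to exploit the tools this theory provides: complementary effects (Assumption~\ref{assump:complementaryeffects}), cancellativity of $\ovee$ on effects (Lemma~\ref{lemma:OTCConsequences}), and joint monicity of the projections $\triangleright_y$ (Lemma~\ref{lemma:opcatpartialjointmon}). Since $\catB$ is positive, the equivalent characterisations of positivity established above supply, for all arrows $\phi$ of $\Partial(\catB)$, both $\phi \ovee \psi = 0 \implies \phi = \psi = 0$ and $\discard{} \partialcomp \phi = 0 \implies \phi = 0$. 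These, together with the two pullback shapes of Lemma~\ref{opcatlemma} and the positivity pullback~\eqref{eq:posOpCatPullback}, are the only ingredients needed.

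For (i), let $f \colon A \partialto B$ be an isomorphism in $\Partial(\catB)$ with inverse $g$, and write $p = \discard{B} \partialcomp f$ and $r = \discard{A} \partialcomp g$ for the associated effects. Totality of $f$ is exactly the assertion $p = \discard{A}$, so the goal is to show $p^{\bot} = 0$. First, $f \partialcomp g = \id{B}$ gives $p \partialcomp g = \discard{B}$. Composing $p \ovee p^{\bot} = \discard{A}$ on the right with $g$ and distributing then yields $\discard{B} \ovee (p^{\bot} \partialcomp g) = r$; substituting $\discard{B} = r \ovee r^{\bot}$ and cancelling $r$ gives $r^{\bot} \ovee (p^{\bot} \partialcomp g) = 0$, whence $p^{\bot}\partialcomp g = 0$ by positivity. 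Composing once more with $f$ and using $g \partialcomp f = \id{A}$ collapses this to $p^{\bot} = 0$, as required.

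For (ii), given $f \colon A \to 0$ I will manufacture an inverse to the canonical $\unique \colon 0 \to A$. The key first observation, which I expect to be the main obstacle, is that $f$ forces the two coprojections into $1 + 1$ to agree after $\unique \colon A \to 1$: since $\unique_A$ factors through $0$ (as $\unique \circ f$) and both composites $\coproj_i \circ \unique \colon 0 \to 1 + 1$ are the unique map out of the initial object, one obtains $\coproj_1 \circ \unique_A = \coproj_2 \circ \unique_A$. Feeding the resulting cone $(\unique_A, \coproj_2)$ into the positivity pullback~\eqref{eq:posOpCatPullback} with $B = A$ then produces a map $u \colon A \to A$ with $\coproj_1 \circ u = \coproj_2$. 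Reading $\coproj_1 \circ u = \coproj_2 \circ \id{A}$ as a cone over the disjointness pullback of Lemma~\ref{opcatlemma} yields a map $x \colon A \to 0$ with $\unique \circ x = \id{A}$; as $x \circ \unique = \id{0}$ holds automatically, $A \simeq 0$, and hence any $f \colon A \to 0$ is itself invertible.

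For (iii), I will factor $f + g = (f + \id{D}) \circ (\id{A} + g)$ and paste two pullbacks. The right-hand square, of $\coproj_1$ along $f + \id{D}$, is a pullback by Lemma~\ref{opcatlemma}, so it remains to treat the left-hand square, of $\coproj_1 \colon A \to A + D$ along $\id{A} + g \colon A + C \to A + D$. Given a competing cone $(h \colon Q \to A,\ k \colon Q \to A + C)$ with $\coproj_1 \circ h = (\id{A} + g) \circ k$, projecting with $\triangleright_1$ gives $\triangleright_1 \partialcomp k = h$; and since $k$ is total, expanding $\discard{A+C} = \discard{A}\partialcomp \triangleright_1 \ovee \discard{C} \partialcomp \triangleright_2$ and cancelling forces $\discard{C}\partialcomp \triangleright_2 \partialcomp k = 0$, hence $\triangleright_2 \partialcomp k = 0$ by positivity. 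Joint monicity of the $\triangleright_i$ then identifies $k = \coproj_1 \circ h$, exhibiting $h$ as the required unique factorisation through $A$. The pasting lemma for pullbacks then delivers that~\eqref{eq:generalPosPullbacks} is a pullback, completing the proof.
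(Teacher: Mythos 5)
Your proof is correct, but parts (ii) and (iii) take genuinely different routes from the paper's. For (i) the underlying idea is the same — use the inverse to show the ``failure effect'' $p^{\bot}$ of $f$ vanishes by positivity — though the paper gets there in one line by applying control to the tests $\{f,e\}$ and $\{f^{-1},e'\}$ and invoking positivity on the resulting test containing $\id{A}$, whereas you carry out the equivalent effect-algebra computation with $\ovee$, complements and cancellativity (all the compatibilities you implicitly use do hold, via the controlled test $\{p \partialcomp g, p^{\bot}\partialcomp g, r^{\bot}\}$). For (ii) the paper again reasons inside $\OTCFrom{\Partial(\catB)}$: a total $f \colon A \to 0$ forces $\discard{A}=0$, hence $\id{A}=0_{A,A}$ by positivity, hence $A \simeq 0$; your argument instead stays entirely in $\catB$ and runs on universal properties, combining the positivity pullback \eqref{eq:posOpCatPullback} (with $B=A$) with the disjointness pullback of Lemma~\ref{opcatlemma} to manufacture a retraction $A \to 0$ — a nice, purely diagrammatic alternative that never mentions discarding. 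For (iii) the paper pastes the square onto the positivity pullback $\unique + \unique$ against $\coproj_1 \colon 1 \to 1+1$ and uses the \emph{cancellation} direction of the pullback lemma (outer and right pullbacks imply left); you instead factor $f+g = (f+\id{D})\circ(\id{A}+g)$, take the $(f+\id{D})$-square from Lemma~\ref{opcatlemma}, prove the $(\id{A}+g)$-square directly from positivity and joint monicity of the $\triangleright_i$ (correctly noting that this square, unlike the one in Lemma~\ref{opcatlemma}, genuinely needs positivity), and then use the easier \emph{composition} direction of the pasting lemma. Both approaches are sound; the paper's (iii) is a little slicker, while your (ii) is more self-contained at the level of universal properties.
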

\begin{proof}
For the first two parts, we reason in the theory $\OTCFrom{\Partial(\catB)}$. \begin{enumerate}[i)]
\item
Let the event $f \colon A \eventto B$ be an isomorphism, with \operations{} $\{f, e\}$ and $\{f^{-1}, e' \}$ for (unique) effects $e$, $e'$. Then by control $\{f^{-1} \eventcomp f, e' \eventcomp f, e\}$ is \anoperation{}, and since $f^{-1} \eventcomp f = \id{A}$ is deterministic, by positivity $e = 0$. Hence $f$ is deterministic also, i.e.\ total.

\item
 If $f \colon A \eventto 0$ is a deterministic event, then $\discard{A} \eventcomp \id{A} = \discard{A} = \discard{0} \eventcomp f = \discard{0} \eventcomp f = 0$, and so $\id{A} = 0_{A,A}$ by positivity. It follows that $A \simeq 0$ and,since both objects are initial, that $f$ is an isomorphism.
\item 
Both the right-hand and outer rectangles in the diagram below are pullbacks.
\[
\begin{tikzcd}
A \rar{f} \dar[swap]{\coproj_1} & B \dar{\coproj_1} \rar{\unique} & 1 \dar{\coproj_1} \\
A + C \rar[swap]{f + g} & B + D \rar[swap]{\unique + \unique} & 1 + 1
\end{tikzcd}
\]
By a well-known result, the `Pullback Lemma', this means that the left-hand square is also.
\end{enumerate} \end{proof}

\section{Proofs of Functorial Results} \label{appendix:omittedproofs}

Let us now prove the results of Section~\ref{sec:CategoricalResults}, establishing the functorial correspondences between operational theories and categories. 

\subsection{Categorical description of the direct sum completion}

\noindent \emph{Proof of Lemma~\ref{lemma:morphismOTC}.}  Let $F \colon \Theta \to \Theta'$ be a morphism of OTCs. Since $F$ preserves \operations{}, $F(\discard{A})$ is deterministic, and so we have the desired equality  $\discard{F(A)} = \discard{F(I)} \circ F(\discard{A})$. 
For the second part, we will show that $F(0_{A,I}) = 0_{F(A), F(I)}$ for all systems $A$, and then we always have $F(0_{A,B}) = F(0_{I,B}) \circ F(0_{A,I}) = 0_{F(A), F(B)}$, as required.  
As $F$ preserves \operations{}, $\{F(0), F(\discard{A})\}$ forms \anoperation{} in $\Theta'$. Applying $\discard{F(I)}$, along with the first part, gives $\discard{F(I)} \circ F(0_{A,I}) = {\discard{F(A)}}^{\bot} = 0_{F(A),I}$. Composing with $\discard{F(I)}^{-1}$ we then have $F(0_{A,I}) = 0$. Since $F$ preserves \operations{}, coarse-graining, identities and $0$, it preserves direct sums by Lemma~\ref{OTCPlus_Coproduct_iff}. \hfill\(\qed\)

\phantom \\ 

\noindent \emph{Proof of Theorem~\ref{thm:FreeOTCPlus}.}
We consider $\Theta^{\otcplus}$ with the `obvious' choice of direct sums $\bigoplus_{x \in X} \indsys{A_y}_{y \in Y_x} = \indsys{A_y}_{x \in X, y \in Y_x} $. 
For any (monoidal) OTC $\Theta$, let $\eta \colon \Theta \to \Theta^{\otcplus}$  be the embedding morphism of (monoidal) OTCs given by  $\eta(A) = \indsys{A}$ and $\eta(f \colon A \eventto B) = \indsys{ f \colon A \eventto B}$. 
Then for any OTC $\Theta'$ coming with a specified direct sum structure, any morphism of (monoidal) OTCs $F \colon \Theta \to \Theta'$ has a unique extension to an arrow $\bar{F} \colon (\Theta^{\otcplus}, \oplus) \to (\Theta', \oplus)$ in $\OTCPlusStrict$ satisfying $U \bar{F} \circ \eta = F$, defined as follows: we set $\bar{F}(\indsys{A_x}_{x \in X}) = \bigoplus_{x \in X} F(A_x)$ and for $M \colon \indsys{A_x}_{x \in X} \eventto \indsys{B_y}_{y \in Y}$ define $\bar{F}(M)$ to be the unique event $\bigoplus_{x \in X} F(A_x) \eventto \bigoplus_{y \in Y} F(B_y)$  satisfying $\triangleright_y \eventcomp \bar{F}(M) \eventcomp \coproj_x = M(x,y)$ for all $x \in X$, $y \in Y$. 
\[ 
\begin{tikzcd} 
\Theta \rar{F} \dar[hook, swap]{\eta} & \Theta' \\
U(\Theta^+) \arrow[ur, swap, dotted, "U(\bar{F})"] & 
\end{tikzcd}
\]
By standard categorical results, this ensures that $(-)^{\otcplus}$ extends to a left adjoint to $U$.
\hfill\(\qed\) \\

It would be more natural to describe the $\Theta^{\otcplus}$ construction in terms of $\cat{(Mon)OTC}^{\otcplus}$, without requiring strict preservation of direct sums. To do so, one needs to view $\cat{(Mon)OTC}$ and $\cat{(Mon)OTC}^{\otcplus}$ as \emph{2-categories}, with `arrows between arrows', called 2-cells. Each of our categories $\cat{(Mon)OTC}$,  $\cat{(Mon)OTC}^{\otcplus}$, $\cat{(Mon)OpCatPar}$ and $\cat{(Mon)OpCat}$ form strict 2-categories with 2-cells $\alpha \colon F \Rightarrow G$ given by (monoidal) natural transformations. The direct sum completion $(-)^{\otcplus}$ then in fact forms a left \emph{bi-adjoint} to the strict 2-functor $U \colon \cat{(Mon)OTC}^{\otcplus} \to \cat{(Mon)OTC}$.

\subsection{Categorical equivalence of operational theories and categories}

We now wish to prove Theorem~\ref{thm:1catEquivalence}, which describes the equivalence of categories $\cat{(Mon)OTC^{\otcplus}} \simeq \cat{(Mon)OpCatPar} \simeq \cat{(Mon)OpCat}$. In fact, a stronger statement holds: the equivalence extends to 2-cells, making it an equivalence of strict 2-categories. First, we consider the isomorphism $\cat{(Mon)OTC^{\otcplus}} \simeq \cat{(Mon)OpCatPar}$.

\begin{lemma} The assignments $\Theta \mapsto \Events_{\Theta}$ and $\catC \mapsto \OTCFrom{\catC}$ define an isomorphism of (strict 2-)categories $\cat{(Mon)OTC^{\otcplus}} \simeq \cat{(Mon)OpCatPar}$. 
\end{lemma}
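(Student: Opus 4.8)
The plan is to exhibit the two assignments as mutually inverse and \emph{identity-on-underlying-data}, so that establishing a genuine isomorphism of categories (indeed of strict $2$-categories) reduces to matching up objects, $1$-cells and $2$-cells separately.

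First I would dispatch the objects. On objects the two assignments are strict inverses: for an OTC with direct sums $\Theta$, Lemma~\ref{thm:OTCDSumstoopcat} gives that $\catC = \Events_{\Theta}$ is an operational category in partial form with $\OTCFrom{\catC} = \Theta$ on the nose, while for any operational category in partial form $\catC$ the theory $\OTCFrom{\catC}$ is constructed in Theorem~\ref{thm:ParFormToOTC} to have $\catC$ as its category of events, so that $\Events_{\OTCFrom{\catC}} = \catC$, and it has direct sums by Corollary~\ref{cor:OTCFromHasDSums}. This is exactly the object-level content of Theorem~\ref{thm:MainEquivalence}.

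The crux is the $1$-cells, and here the key observation is that both a morphism of OTCs $F \colon \Theta \to \Theta'$ and a morphism of operational categories in partial form $F \colon \catC \to \catC'$ are, by definition, functors between the \emph{same} underlying categories $\Events_{\Theta} = \catC$ and $\Events_{\Theta'} = \catC'$, and each assignment sends such an $F$ to itself; so I need only check that the two lists of preservation conditions are equivalent. For the forward direction I would invoke Lemma~\ref{lemma:morphismOTC}, which already supplies the discarding identity $\discard{F(A)} = \discard{F(I)} \circ F(\discard{A})$ with $\discard{F(I)}$ invertible (Definition~\ref{defmorphismOTC}), i.e.\ `preserves discarding', and which shows $F$ preserves all direct sums; since the direct sums of $\Theta$ are precisely the coproducts of $\catC$ by Lemma~\ref{OTCPlus_Coproduct_iff}, $F$ preserves finite coproducts, the empty one being handled by the fact that $F$ preserves zero morphisms and identities and hence the zero object. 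For the converse I would use that in $\OTCFrom{\catC}$ the operations, their totality and coarse-graining are defined purely from coproducts, the projections $\triangleright_x$, the codiagonal $\triangledown$ and the maps $\discard{}$ (Theorem~\ref{thm:ParFormToOTC}): a coproduct-preserving functor preserves $\triangleright_x$ and $\triangledown$, giving preservation of operations and of $\ovee$, while totality of $F(f)$ follows by applying $F$ to $\discard{} \circ f = \discard{}$ and composing with $\discard{F(I)}$ via the discarding identity. Since both assignments act identically on underlying functors, functoriality (respect for composition and identities) is immediate.

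Finally I would treat the monoidal layer and the $2$-cells. In the monoidal case both notions of morphism demand a strong symmetric monoidal functor carrying the \emph{same} coherence isomorphism $\discard{F(I)}^{-1}$, and the tensor of $\OTCFrom{\catC}$ is inherited verbatim from $\catC$, so the monoidal conditions coincide, with distributivity (Theorem~\ref{thm:ParFormToOTC}) supplying the remaining compatibilities. A $2$-cell on either side is a (monoidal) natural transformation between a parallel pair of the matching functors; as these functors and their sources and targets are literally identical, the $2$-cells agree on the nose, the induced map of Hom-categories is the identity, and the whole correspondence is a strict $2$-functor with strict inverse. The step I expect to be the main obstacle is the converse direction of the $1$-cell equivalence: one must check carefully that a coproduct- and discarding-preserving functor genuinely respects \emph{all} the operationally-defined structure of $\OTCFrom{\catC}$, and in particular sends total maps to total maps. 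Everything there rests on the fact, from Theorem~\ref{thm:ParFormToOTC}, that every operational datum of $\OTCFrom{\catC}$ is expressible through coproducts, projections, codiagonals and $\discard{}$; granting this, the remaining verification is a routine diagram chase.
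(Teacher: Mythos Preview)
Your proposal is correct and follows essentially the same approach as the paper: a bijection on objects via Theorem~\ref{thm:MainEquivalence}, the forward direction on $1$-cells via Lemma~\ref{lemma:morphismOTC} together with Lemma~\ref{OTCPlus_Coproduct_iff}, and the converse by showing a coproduct- and discarding-preserving functor sends total maps to total maps and hence preserves all the structure of $\OTCFrom{\catC}$. The paper's proof is terser but makes exactly the same moves, including the explicit totality computation $\discard{F(B)} \circ F(f) = \discard{F(I)} \circ F(\discard{B}) \circ F(f) = \discard{F(I)} \circ F(\discard{A}) = \discard{F(A)}$ that you identify as the crux.
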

\begin{proof}
The above assignment is a bijection on objects by Theorems \ref{thm:opcattopartialopcat} and \ref{thm:paropcatToOpCat}. Let $F: \Events_{\Theta} \to \Events_{\Theta'}$ be a morphism of OTCs with direct sums. 
By Lemma~\ref{defmorphismOTC}, $F$ preserves discarding and direct sums, and hence by Lemma~\ref{OTCPlus_Coproduct_iff} also preserves coproducts.
Conversely, suppose $F$ is a morphism in $\cat{(Mon)OpCatPar}$. Then whenever $f \colon A \to B$ is total, so is $F(f)$, since:
\[
\discard{F(B)} \circ F(f) = \discard{F(I)} \circ F(\discard{B}) \circ F(f)  = \discard{F(I)} \circ F(\discard{A}) = \discard{F(A)}
\]
as $F$ preserves discarding. By Corollary~\ref{thm:OTCDSumstoopcat}, \operations{} and coarse-graining are defined in $\Theta$ and $\Theta'$ using totality and the coproducts, which are then preserved by $F$. Hence $\cat{(Mon)OTC^{\otcplus}}$ and $\cat{(Mon)OpCatPar}$ have the same morphisms. Clearly they also have the same 2-cells.
\end{proof}

\subsection{Categorical equivalence of operational categories in total and partial form}

Finally, we wish to show that $\cat{(Mon)OpCatPar}$ and $\cat{(Mon)OpCat}$ are equivalent as categories (and in fact also as strict 2-categories). Most of the work has already been done by Cho, who in \cite{Partial2015Cho}, establishes a 2-categorical equivalence between effectuses in total and partial form. In fact, the same proof establishes a more general result, applicable in particular to (monoidal) operational categories.

We define (strict 2-)categories $\cat{(Mon)Total}$, $\cat{(Mon)Partial}$ as follows.
An object of $\cat{Total}$ is a category $\catB$ coming with coproducts and a terminal object $1$. An object of $\cat{MonTotal}$ is in addition symmetric monoidal, with $1$ as its tensor unit, and the tensor distributing over coproducts. Similarly, an object $(\catC, \discard{})$ of $\cat{(Mon)Partial}$ is defined just like a (monoidal) operational category in partial form (Definition~\ref{def:opcatpartial}), but without conditions~\ref{paropcatcatJM}) or \ref{paropcatuniquetotal}). Morphisms and 2-cells in each case are the same as in $\cat{(Mon)OpCat}$ and $\cat{(Mon)OpCatPar}$, respectively.
\begin{theorem}\label{thm:ChoCorrespondence} The assignments $\catB \to \Partial(\catB)$ and $\catC \to \catC_{\total}$ extend to (strict 2-)functors $\Partial(-) \colon \cat{(Mon)Total} \to \cat{(Mon)Partial}$ and $(-)_{\total}\colon \cat{(Mon)Partial} \to \cat{(Mon)Total}$. Further, there are (strict 2-)natural transformations:
\begin{itemize}
\item $\Phi_{\catB} \colon \catB \to \Partial(\catB)_{\total}$ given by $\Phi_{\catB}(A) = A$ and $\Phi_{\catB}(f) = \totaspar{f}$

\item $\Psi_{\catC} \colon \Partial(\catC_{\total}) \to \catC$ given by $\Psi_{\catC}(A) = A$ and $\Psi_{\catC}(f) = \triangleright_1 \circ f$
\end{itemize}
forming the unit and counit of a (strict 2-)adjunction $\Partial(-) \dashv (-)_{\total}$.
\end{theorem}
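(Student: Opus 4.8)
\emph{Proof proposal.} The plan is to follow Cho's construction of the 2-adjunction between effectuses in total and partial form \cite{Partial2015Cho}, observing throughout that his arguments invoke only the finite-coproduct, zero-object and discarding structure, and never the joint monicity (Definition~\ref{def:opcatpartial}, condition~\ref{paropcatcatJM}) or unique-total-decomposition (condition~\ref{paropcatuniquetotal}) axioms; the same proof therefore applies to the weaker categories $\cat{(Mon)Total}$ and $\cat{(Mon)Partial}$. The object-level assignments are already in hand: for $\catB$ in $\cat{(Mon)Total}$ the category $\Partial(\catB)$ is an object of $\cat{(Mon)Partial}$ by exactly the inherited properties listed before Theorem~\ref{thm:opcattopartialopcat} (zero object, coproducts, $\discard{I} = \id{I}$, $\discard{A+B} = [\discard{A}, \discard{B}]$, and in the monoidal case distributivity and condition~\ref{monoidalparopcatdiscard})), none of which used conditions~\ref{paropcatcatJM}) or \ref{paropcatuniquetotal}). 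Dually, for $\catC$ in $\cat{(Mon)Partial}$ the category $\catC_{\total}$ is an object of $\cat{(Mon)Total}$ by the argument of Theorem~\ref{thm:paropcatToOpCat}: here $I$ is terminal in $\catC_{\total}$ because the only total arrow $A \to I$ is $\discard{A}$, and the coproducts and (symmetric monoidal, distributive) structure restrict along the totality-preserving coprojections and coherence isomorphisms.

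Next I would define the two assignments on morphisms and 2-cells and check strict 2-functoriality. A morphism $F \colon \catB \to \catB'$ in $\cat{(Mon)Total}$ induces $\Partial(F)$ sending a partial arrow $f \colon A \to B + 1$ to $F(A) \xrightarrow{F(f)} F(B + 1) \xrightarrow{\sim} F(B) + 1$, where the isomorphism is the canonical one coming from $F$ preserving coproducts and the terminal object; functoriality and preservation of the zero object, coproducts, discarding and (in the monoidal case) the strong symmetric monoidal structure are then routine. A morphism $F \colon \catC \to \catC'$ in $\cat{(Mon)Partial}$ restricts to $F_{\total} \colon \catC_{\total} \to \catC'_{\total}$, well defined since $F$ preserves totality, by the computation $\discard{F(B)} \circ F(f) = \discard{F(I)} \circ F(\discard{B}) \circ F(f) = \discard{F(I)} \circ F(\discard{A}) = \discard{F(A)}$ already used in the preceding isomorphism lemma. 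On 2-cells both assignments act componentwise, and strict preservation of identities and of vertical and horizontal composition is immediate.

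I would then verify that $\Phi$ and $\Psi$ are (strictly 2-)natural transformations satisfying the triangle identities. Each $\totaspar{f} = \coproj_1 \circ f$ is total in $\Partial(\catB)$, since $\discard{} \partialcomp \totaspar{f} = \discard{}$ reduces to $\unique \circ f = \unique$, so $\Phi_{\catB}$ indeed lands in $\Partial(\catB)_{\total}$; functoriality and structure-preservation of $\totaspar{-}$ and naturality in $\catB$ are direct. For $\Psi_{\catC}(f) = \triangleright_1 \circ f$ the one computation worth isolating is functoriality: writing composition in $\Partial(\catC_{\total})$ as $g \partialcomp f = [g, \coproj_2] \circ f$, one has $\Psi_{\catC}(g \partialcomp f) = \triangleright_1 \circ [g, \coproj_2] \circ f$ and $\Psi_{\catC}(g) \circ \Psi_{\catC}(f) = \triangleright_1 \circ g \circ \triangleright_1 \circ f$, and these agree because $\triangleright_1 \circ [g, \coproj_2] = \triangleright_1 \circ g \circ \triangleright_1$ as arrows $B + I \to C$ --- an equality checked by precomposing with $\coproj_1, \coproj_2$ and using $\triangleright_1 \circ \coproj_1 = \id{}$, $\triangleright_1 \circ \coproj_2 = 0$ together with the zero-object structure. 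Preservation of coproducts, the zero object and discarding by $\Psi_{\catC}$, and its naturality, follow similarly. The triangle identities are then short: $(\Psi_{\catC})_{\total} \circ \Phi_{\catC_{\total}}$ sends a total $f$ to $\triangleright_1 \circ \coproj_1 \circ f = f$, and $\Psi_{\Partial(\catB)} \circ \Partial(\Phi_{\catB})$ is the identity by the same cancellation $\triangleright_1 \circ \coproj_1 = \id{}$ unwound through the definitions.

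Finally, the main obstacle --- the only place where genuine care, rather than bookkeeping, is needed --- is confirming that none of the lemmas borrowed from \cite{Partial2015Cho} tacitly use the effectus axioms, i.e.\ the joint monicity and unique-total-decomposition conditions, which hold in $\cat{(Mon)OpCatPar}$ but not in the ambient $\cat{(Mon)Partial}$. The reassuring point is that the delicate steps above, in particular the functoriality of $\Psi$ and the triangle identities, go through in any category with finite coproducts and a zero object (plus the discarding family and, monoidally, distributivity and condition~\ref{monoidalparopcatdiscard})); so I would simply audit Cho's argument against this weaker axiom list, after which the strict 2-adjunction $\Partial(-) \dashv (-)_{\total}$ follows. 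Restricting to the subcategories where conditions~\ref{paropcatcatJM}) and \ref{paropcatuniquetotal}) do hold makes $\Phi$ and $\Psi$ full and faithful (by the defining pullback and the unique-total decomposition, as in Theorem~\ref{thm:opparopcatEquivalence}), thereby upgrading the adjunction to the equivalence of Theorem~\ref{thm:1catEquivalence}.
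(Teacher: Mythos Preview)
Your proposal is correct and follows essentially the same route as the paper: both defer to Cho's construction, extend $\Partial(-)$ and $(-)_{\total}$ to 1- and 2-cells in the obvious way, and verify the triangle identities directly, the key observation in each case being that none of this requires the joint-monicity or unique-total-decomposition axioms. The only notable difference in emphasis is that you spell out the functoriality of $\Psi_{\catC}$ (which the paper leaves implicit), whereas the paper singles out for explicit calculation the totality of the monoidal coherence isomorphisms $\phi_{A,B} \colon G(A) \otimes G(B) \to G(A \otimes B)$ for a morphism $G$ in $\cat{MonPartial}$ --- a step you label ``routine'' but which is in fact the least mechanical part of the argument, so you may wish to include it.
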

\begin{proof}
The details are worked out in \cite{Partial2015Cho} with an emphasis on effectuses, but do not rely on all of the effectus axioms. Here we just sketch the main ideas from Cho's proof.

We have seen that every category $\catB$ in $\cat{(Mon)Total}$ defines a `partial' one $\Partial(\catB)$ in $\cat{(Mon)Partial}$, with discarding maps $\discard{A} = \unique \circ \coproj_1 \colon A \to 1 + 1$. Any functor $F \colon \catB \to \catB'$ in $\cat{Total}$ may be lifted to a a functor $\lift{F} \colon \Partial(\catB) \to \Partial(\catB')$ defined by 
\[
\lift{F}(
\begin{tikzcd}[column sep = small]
A \rar{f} & {B+1}
\end{tikzcd}
)
=
\big( 
\begin{tikzcd}
F(A) \rar{F(f)} & F(B + 1) \rar{\simeq} & F(B) + 1
\end{tikzcd}
\big)
\]
This lifting on 1-cells is compatible with the functor $\totaspar{-}$ in that 
\[
\begin{tikzcd}
\catB \rar{F} \dar[swap]{\totaspar{-}} & \catB' \dar{\totaspar{-}} \\
\Partial(\catB) \rar[swap]{\lift{F}} & \Partial(\catB')
\end{tikzcd}
\]
commutes. Further, any natural transformation $\alpha \colon F \Rightarrow F'$ between such functors lifts to one $\lift{\alpha} \colon \lift{F} \Rightarrow \lift{F'}$ given by $\lift{\alpha}_A = \coproj_1 \circ \alpha_A \colon F(A) \to G(A) + 1$.
When $\catB$ is symmetric monoidal with $\otimes$ distributing over $+$, $\lift{F}$ is a symmetric strong monoidal functor, and $\lift{\alpha}$ is a monoidal natural transformation.  Further, $\lift{F}$ then preserves discarding and so is indeed a morphism in $\cat{(Mon)Partial}$. In this way, one obtains a strict 2-functor $\Partial(-) \colon \cat{(Mon)Total} \to \cat{(Mon)Partial}$. 

Conversely, each object $\catC$ of $\cat{(Mon)Partial}$ defines a new `total' one $\catC_{\total}$ in $\cat{(Mon)Total}$.
The 2-functor $(-)_{\total} \colon \cat{(Mon)OpCatPar} \to \cat{OpCat}$ sends each morphism $G \colon \catC \to \catC'$ to its restriction $G|_{\total} \colon \catC_{\total} \to \catC'_{\total}$. Since any such $G$ preserves discarding, it preserves totality of arrows, and so $G_{\total}$ is well-defined and preserves the terminal object. The coproducts in $\catC$ restrict to $\catC_{\total}$ and hence are also preserved by $G|_{\total}$.  In the monoidal case, we claim that the coherence morphisms $u \colon I \to F(I)$ and $\phi_{A,B} \colon G(A) \otimes G(B) \to G(A \otimes B)$ for $G$ are always total. We have $u = \discard{}^{-1}$, and so $u$ is indeed total. Then
\begin{align*}
\discard{G(A \otimes B)} \circ \phi_{A,B} & = \discard{G(I)} \circ G(\lambda_I) \circ G(\discard{A} \otimes \discard{B}) \circ \phi_{A,B} \\
& = u^{-1} \circ G(\lambda_I) \circ \phi_{A,B} \circ (G(\discard{A}) \otimes G(\discard{B})) \\
& = \lambda_I \circ (u^{-1} \otimes u^{-1}) \circ (G(\lambda) \otimes G(\lambda)) \\
& = \lambda \circ (\discard{G(A)} \otimes \discard{G(B)}) \\
& = \discard{G(A) \otimes G(B)}
\end{align*}
where in the third step we used that any symmetric monoidal functor $(G,u,\lambda)$ satisfies $u \circ \lambda = G(\lambda) \circ \phi \circ (u \otimes u)$. Since the tensor $\otimes$ restricts to $\catC_{\total}$, $G|_{\total}$ is again a symmetric monoidal functor. 

Next, one may check that for any 2-cell $\alpha \colon F \Rightarrow G $ in $\cat{Partial}$ each $\alpha_A \colon F(A) \to G(A)$ is total, and so $\alpha$ restricts to a 2-cell $F|_{\total} \Rightarrow G|_{\total}$  in $\cat{(Mon)Total}$. In this way we obtain a strict 2-functor $(-)_{\total} \colon \cat{(Mon)Partial} \to \cat{(Mon)Total}$. 

It's straightforward to verify that each $\Phi_{\catB} \colon \catB \to \Partial(\catB)_{\total}$ defined as above is indeed a morphism in $\cat{(Mon)Total}$, and each $\Psi_{\catC} \colon \catC \to \Partial(\catC_{\total})$ is a morphism in $\cat{(Mon)Partial}$, and that both assignments are indeed strictly 2-natural. Finally, we check that $\Psi$ and $\Phi$ satisfy the triangle identities. For each $\catC$ in $\cat{(Mon)Partial}$ and total $f \colon A \to B$ in $\catC$, we have: 
\begin{align*}
\Psi_{\catC_{\total}} \circ \Phi_{\catC_{\total}}(f) = \Psi_{\catC_{\total}}(\totaspar{f}) 
= \triangleright_1 \partialcomp \totaspar{f} = \triangleright_1 \circ \coproj_1 \circ f = f
\end{align*}
and so $\Psi_{\catC_{\total}} \circ \Phi_{\catC_{\total}} = \id{\catC_{\total}}$. 
Similarly, for each $\catB$ in $\cat{(Mon)Total}$ and $g \colon A \partialto B$ in $\Partial(\catB)$, we get that:
\begin{align*}
\Psi_{\Partial(\catB)} \circ \Partial(\eta_{\catB})(g) = \triangleright_1 \partialcomp \coproj_1 \partialcomp g = g
\end{align*}
giving $\Psi_{\Partial(\catB)} \circ \Partial(\Phi_{\catB}) = \id{\Partial(\catB)}$. 
\end{proof}

\begin{corollary} \label{cor:2CatEquivOpCat} There is a (strict 2-)equivalence of (strict 2-)categories $\Partial(-) \colon \cat{(Mon)OpCat} \to \cat{(Mon)OpCatPar}$, $(-)_{\total} \colon \cat{(Mon)OpCatPar} \to \cat{(Mon)OpCat}$.
\end{corollary}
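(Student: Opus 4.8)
The plan is to derive this equivalence by restricting the strict 2-adjunction $\Partial(-) \dashv (-)_{\total}$ of Theorem~\ref{thm:ChoCorrespondence} to the relevant sub-2-categories, and then observing that on these subcategories the unit and counit are already isomorphisms. No genuinely new construction is needed: everything has been assembled in the preceding results, and the corollary is really a matter of fitting them together.

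First I would record that $\cat{(Mon)OpCat}$ and $\cat{(Mon)OpCatPar}$ are full sub-2-categories of $\cat{(Mon)Total}$ and $\cat{(Mon)Partial}$, respectively. Indeed, by the definitions of $\cat{(Mon)Total}$ and $\cat{(Mon)Partial}$ these larger 2-categories carry exactly the same 1-cells and 2-cells, differing only by dropping the extra object-level conditions of Definitions~\ref{def:Opcat} and \ref{def:opcatpartial}. By Theorem~\ref{thm:opcattopartialopcat} the 2-functor $\Partial(-)$ sends each (monoidal) operational category to a (monoidal) operational category in partial form, and by Theorem~\ref{thm:paropcatToOpCat} the 2-functor $(-)_{\total}$ sends each operational category in partial form back to an operational category. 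Hence both 2-functors of Theorem~\ref{thm:ChoCorrespondence} restrict to 2-functors between $\cat{(Mon)OpCat}$ and $\cat{(Mon)OpCatPar}$, and the 2-adjunction, together with its unit $\Phi$ and counit $\Psi$, restricts accordingly.

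It then remains to upgrade the restricted adjunction to an equivalence, for which it suffices that the components of the (already 2-natural) unit and counit are isomorphisms on objects of these subcategories. But this is precisely the content of Theorem~\ref{thm:opparopcatEquivalence}: the component $\Phi_{\catB} = \totaspar{-} \colon \catB \to \Partial(\catB)_{\total}$ is exactly the functor shown there to be a (monoidal) isomorphism whenever $\catB$ is an operational category, and likewise $\Psi_{\catC} \colon \Partial(\catC_{\total}) \to \catC$, sending $f$ to $\triangleright_1 \circ f$, is the isomorphism $\triangleright$ established there whenever $\catC$ is an operational category in partial form. Thus on the subcategories the unit and counit are objectwise invertible, yielding the desired equivalence.

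The only real work is to confirm that the unit and counit recorded in Theorem~\ref{thm:ChoCorrespondence} agree on the nose with the isomorphisms of Theorem~\ref{thm:opparopcatEquivalence}; this is immediate once the definitions are placed side by side, since in both cases $\Phi$ and $\Psi$ are identity-on-objects and send $f$ to $\totaspar{f}$ and to $\triangleright_1 \circ f$, respectively. I expect the main point requiring care to be the 2-categorical bookkeeping — checking that the restricted $\Phi$ and $\Psi$ remain strictly 2-natural and that being objectwise invertible suffices to promote the restricted adjunction to a genuine strict 2-equivalence (rather than merely a 1-categorical one) — but this follows formally, with no computation beyond Theorems~\ref{thm:ChoCorrespondence} and \ref{thm:opparopcatEquivalence}.
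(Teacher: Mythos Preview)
Your proposal is correct and follows essentially the same approach as the paper: restrict the 2-adjunction of Theorem~\ref{thm:ChoCorrespondence} to the full sub-2-categories via Theorems~\ref{thm:opcattopartialopcat} and~\ref{thm:paropcatToOpCat}, then invoke Theorem~\ref{thm:opparopcatEquivalence} to see that the unit and counit components become isomorphisms. The paper's proof is terser but makes exactly these two moves.
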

\begin{proof}
By Theorems~\ref{thm:opcattopartialopcat} and~\ref{thm:paropcatToOpCat}, the strict 2-functors and natural transformations of Theorem~\ref{thm:ChoCorrespondence} restrict to $\cat{(Mon)OpCat} \leftrightarrow \cat{(Mon)OpCatPar}$. By Theorem~\ref{thm:opparopcatEquivalence}, each of the components $\Phi_{\catB} \colon \catB \to \Partial(\catB)_{\total}$ and $\Psi_{\catC} \colon \Partial(\catC)_{\total} \to \catC$ are then isomorphisms.
\end{proof}

\end{document}